\theoremstyle{plain} 
\newtheorem{theorem}{Theorem}[section]
\newtheorem{lem}[theorem]{Lemma}
\newtheorem{cor}[theorem]{Corollary}
\newtheorem{algo}[theorem]{Algorithm}
\theoremstyle{definition}
\newtheorem{defn}{Definition}[section]
\newtheorem{exmp}{Example}[section]
\theoremstyle{remark}
\newtheorem*{rem}{Remark}
\newcommand{\defeq}{\mathrel{\mathop:}= }
\renewcommand{\H}{\mathcal{H}}
\newcommand{\I}{\mathcal{I}}
\newcommand{\J}{\mathcal{J}}
\newcommand{\K}{\mathcal{K}}
\newcommand{\A}{\mathcal{A}}
\newcommand{\SSS}{\mathcal{S}}
\newcommand{\G}{\mathcal{G}}
\newcommand{\RRR}{\mathcal{R}}
\newcommand{\U}{\mathbf{U}}
\newcommand{\X}{\mathcal{X}}
\newcommand{\Y}{\mathcal{Y}}
\newcommand{\E}{\mathbb{E}}
\renewcommand{\P}{\mathbb{P}}
\newcommand{\R}{\mathbb{R}}
\newcommand{\dd}{\mathbf{d}}
\newcommand{\p}{\mathbf{p}}
\newcommand{\bbb}{\mathbf{b}}
\newcommand{\aaa}{\mathbf{a}}
\newcommand{\s}{\mathbf{s}}
\newcommand{\h}{\mathbf{h}}
\newcommand{\ppp}{\mathbf{p}}
\newcommand{\qqq}{\mathbf{q}}
\newcommand{\q}{\mathbf{q}}
\newcommand{\x}{\mathbf{x}}
\newcommand{\y}{\mathbf{y}}
\newcommand{\ttt}{\mathbf{t}}
\DeclareMathOperator*{\argmax}{arg\,max}
\DeclareMathOperator*{\argmin}{arg\,min}
\begin{document}
\title{Autocratic Strategies of Multi-State Games}
\date{December 28, 2020} 
\author{Mario Palasciano}

\maketitle


\begin{abstract}
In a single-state repeated game, zero-determinant strategies can unilaterally force functions of the payoffs to take values in particular closed intervals. When the explicit use of a determinant is absent from the analysis, they are instead called autocratic. We extend their study to the setting of finite state games with deterministic transitions. For a given game we find that the endpoints of the intervals of enforceable values must satisfy fixed point equations. From these extreme enforceable values we show it is always possible to construct finite memory strategies to enforce a particular value. An algorithm is presented which will approximate the enforceable values in each state. Finally, we present formulas from which the exact solutions can be recovered from the approximate solution.
\end{abstract}

\section{Introduction}

Game theory provides a framework for studying interactions between agents competing over resources, typically characterized by a scalar utility function. Among the most compelling games are social dilemmas, which result from conflicts of interest between individuals and groups. The classical model of a social dilemma is the Prisoner's Dilemma, a two-player game with the actions ``cooperate" and ``defect''. Rational players are incentivized to defect regardless of their opponents action, leading to lower individual payoffs than if they had displayed mutual cooperation. 

It turns out this suboptimal outcome is not an empirical certainty among human players (Heide and Miner \cite{HEIDE1992} and Bo \cite{DALBO2005}). One proposed explanation for this cooperative tendency is that human beings instinctively treat one-shot games as the first of a random, or at least unknown, number of one-shot games. The repeated game encourages cooperation by allowing for reciprocation of earlier behaviour. Perhaps the simplest example of such a reciprocating strategy for the repeated Prisoner's Dilemma is one which copies their opponent's action in the previous round. Known as ``tit-for-tat", this strategy was very successful in tournaments involving a diverse population of strategies (Trivers \cite{TRIVERS1971}).

The observed altruistic impulse also helped prompt the exploration of games played amongst a population that evolves or learns over time. Evolutionary game theory is the study of these dynamical systems. In this setting, a strategy's reproductive potential is mediated by its performance against a population of other strategies, including its own. One proposed route to cooperation in the evolutionary game is through kin selection, in which players engage in self-sacrifice to ensure the genetic success of their relatives \cite{MARSHALL}.

In their 2012 paper \cite{PD2012}, Press and Dyson explictly construct a class of memory-$1$ strategies for the repeated Prisoner's Dilemma which share a curious property. These strategies, dubbed zero-determinant strategies, unilaterally enforce a linear relation between the players' expected payoffs, offering a single player a level of control much greater than previously conjectured. In particular, if one's opponent evolves by way of gradient learning methods, they demonstrate that these strategies can lead to a greater share of the total payoff in future generations. 

Since their inception, these strategies has been extended to cover multiplayer social dilemmas (\cite{HILBE2015}, \cite{HILBE2014} and \cite{GOVAERT2020}) and games of incomplete information(\cite{UEDA2018}, \cite{ZD_OBSERVATION_ERROR}). In addition, their robustness has been explored in the context of evolutionary game theory (\cite{ADAMI2013}, \cite{STEWART2013} and \cite{SZOLNOKI2014}). In each of these studies, the game is assumed to have only two actions in order to make use of the vanishing matrix determinant argument employed by Press and Dyson. In \cite{MH2016}, Hauert and McAvoy generalize these strategies by considering games with arbitrary action spaces. They dub this new class the \textbf{autocratic strategies}. The term autocrat is perhaps best suited for asymmetric games, studied further in \cite{REPEATEDASYMMETRIC}.

Of more recent interest is the impact of multiple game states on the evolution of cooperation. For example, cooperative play in which cooperation allows access to subgames with greater rewards can greatly enhance the propensity for cooperation, as in \cite{Su25398}. There, Su et al present the example of a two state Prisoner's Dilemma game. Mutual cooperation one round lets the players vie for a higher reward the next round. Towards this end we investigate enforceable values of the expected total utility in multi-state games with deterministic transitions. 

Zero-determinant strategies have found their way into applications including crowdsourcing quality control \cite{CROWDSOURCE} and computing delegation \cite{DELEGATE}, and the cyberdefense of electrical grids \cite{PRICEATTACK}, the internet of things \cite{INTERNETOFTHINGS} and blockchain currency \cite{BLOCKATTACK}.

\subsection{Description of the Results}

Some notation is required to state our results. The game is played for a random number of rounds on a directed graph whose nodes represent the various states $\SSS$ of the game. While in state $s \in \SSS$, the autocrat and their opponent choose their state-specific actions from their respective action $\X_s$ and $\Y_s$, respectively. The outgoing edges of node $s$ represent the action pairs drawn from this joint action space, denoted $\A_s$. The target node of outgoing edge $(x,y)$ in state $s$ is given by a transition function $T_s:\A_s \rightarrow \SSS$. Each joint action $a=(x,y)$ in state $s$ is associated with a scalar utility $U(a;s)$, typically taken in the literature to be a linear function of the players' individual classical payoff functions. 

The game begins in node $s_0$ in round $0$. The players choose their actions $x_0$ and $y_0$, whereupon they travel along the edge corresponding to said action and accrue a utility of $U(x_0,y_0;s_0)$. A biased coin with parameter $\lambda$ is then consulted. If heads, the game continues on in this fashion for another round. Otherwise, the game ends, and the total utility is simply the sum of the utilities garnered during each round of play. As zero-determinant strategies were originally studied in the $\lambda \rightarrow 1^{-}$ limit, we will divide the total utility by the average number of rounds played, $(1-\lambda)^{-1}$, to ensure convergence. The rescaled utilities are denoted $U_{\lambda}(a;s)$. 

Suppose the players find themselves in state $s$ at the beginning of round $k$. The sequence of previous action-state pairs, or the history of play, is denoted $h_{<k}$. In general, the autocrat and opponent choose their actions by independently sampling the probability distributions $p_{ \cdot | s,h_{<k}}$ and $ q_{\cdot | s,h_{<k}} $ on $\X_s$ and $\Y_s$. The distribution on $\X \defeq \cup_s \X_s$ in round $k$ is denoted $p_k$. The collection $(p_k)_{k \geq 0}$, denoted $\p$, is known as the behavioural strategy of the autocrat. The $q_k$'s and $\q$ are defined analogously.

We investigate a restricted class of autocratic strategies for multi-state games analogous to those explored in the literature  covering single state games (\cite{MH2016}). These strategies imposed the condition that expected future utility does not depend on the opponent's current action. That is,
\begin{align}
\label{restricted_class}
\Phi_{k} ( x,y; s , h_{< k}) \equiv \Phi_{k} ( x; s , h_{< k}) \quad \forall \: k \geq 0.
\end{align}
\noindent We will call such strategies opponent agnostic. 

Our main result, Theorem \ref{THM_main_result}, says that opponent agnostic autocratic strategies need finite memory to fix the expected total utility. One might hope that it suffices to use a strategy which only uses memory-$1$ strategy, but sometimes the autocratic player is permitted to play only one action in a chain of states. In this case an autocratic strategy requires memory equal length of the longest of these chains. Example \ref{EXMP_agnostic_pathological} presents a family of such games.  

The construction of the autocratic strategies hinges on the characterization of values enforceable by opponent agnostic strategies in Theorem \ref{THM_characterization_opposition_agnostic}. Upon fixing the largest support of the autocrat's mixed strategy to be $X_s'$ in each state $s$, the potential minimum and maximum enforceable values of the expectation are the unique solutions to the fixed point equations
\begin{align}
\label{eqn_m_s}
m_{s} \defeq \min_{x \in \X_s'} \max_{y \in \Y_s} \big\{ \lambda m_{T_s(x,y)} + U_{\lambda} (x,y,s)  \big\}
\end{align}
and
\begin{align}
\label{eqn_M_s}
M_{s} \defeq \max_{x \in \X_s'} \min_{y \in \Y_s} \big\{ \lambda M_{T_s(x,y)} + U_{\lambda} (x,y,s)  \big\}.
\end{align}
The $\argmax_x$ and $\argmin_x$ of the above, that is
\begin{align} 
\label{extremal_left_description}
x^-_s & \defeq \argmin_{x \in \X'_s} \max_{y \in \Y_s} \big\{ \lambda m_{T (x,y;s)} + U_{\lambda} (x,y;s) \big\}
\end{align}
and
\begin{align}
\label{extremal_right_description}
x^+_s & \defeq \argmax_{x \in \X'_s} \min_{y \in \Y_s} \big\{ \lambda M_{T (x,y;s)} + U_{\lambda} (x,y;s) \big\} 
\end{align}
are called the \textbf{extremal actions}. Theorem \ref{THM_main_result} says that the solutions to the above fixed point equations must satisfy 
\begin{align}
\label{inequality_description}
\max_{y \in \Y_s} \big\{ \lambda m_{T (\x_s,y;s)} + U_{\lambda} (x,y;s) \big\} \leq \min_{y \in \Y_s} \big\{ \lambda M_{T (\x_s,y;s)} + U_{\lambda} (x,y;s) \big\} 
\end{align}
for any extremal action $\x_s$ in order for $[m_s,M_s]$ to enforceable.

For a fixed family of autocratic actions $(\X_s' : s \in \SSS)$, the solutions to (\ref{eqn_m_s}) and (\ref{eqn_M_s}) can be found approximately using a globally convergent iterative process. From the extremal actions  can exact formulas for $m_s$ and $M_s$ be constructed. These are equations (\ref{EQN_exact_cycle}) and (\ref{EQN_exact_branch}). 

In addition to our theoretical results, we present Algorithm \ref{ALGO} that initially runs the iterative process on graph corresponding to the full action space $\X$ available to the autocrat. If the inequality (\ref{inequality_description}) does not hold for some states, the autocrat's action space are pruned of the offending actions before running the iterative process again. In this fashion is the algorithm zeroes in on largest family $\X^f$.
\subsection{Outline of the Proofs}

The proof of the main tool, Theorem \ref{THM_Phi_condition}, borrows the notion of future total utility functions $(\Phi_k)_{k \geq 0}$ used in Markov decision processes. That is, $ \Phi_{k} ( a; s , h_{< k})$ is the expected total utility from round $k$ onwards assuming that the current action-state pair is $(a,s)$ and the history of play is $h_{<k}$. They obey the recursive \textbf{Bellman-type} equation 
\begin{align*}
 \Phi_{k} ( x_{k},y_k; s_{k} , h_{< k}) = U ( x_{k},y_k;s_{k})  + \lambda \sum_{x \in \X_{s}} \sum_{y \in \Y_s}  \Phi_{k+1} \left( x,y ;s , h_{< k+1} \right) p_{x| s,h_{< k+1}} q_{y| s,h_{< k+1}} .
\end{align*}
Theorem \ref{THM_Phi_condition} states that a strategy $\p$ is autocratic iff there exists $\Phi_k$'s, dependent only on $\q$, that satisfy the simplified Bellman equation, 
\begin{align}
\label{eqn_outline_bellman}
\Phi_{k} ( x_{k},y_k; s_{k} , h_{< k}) = U_{\lambda} ( x_{k},y_k;s_{k})  + \lambda \sum_{x \in \X_{s}} \Phi_{k+1} \left( x,y ;s , h_{< k+1} \right) p_{x| s,h_{< k+1}} \quad \forall \: y \in Y_s.
\end{align}
With the dependence on $\q$ removed, the problem can be viewed as a control problem on a class of Markov decision processes. 

The main idea of Theorem \ref{THM_characterization_opposition_agnostic} is find the extremal values that $\Phi_k (x,y;s,h_{<k})$ can take on. The proof of Theorem \ref{THM_main_result} then expresses the corresponding autocratic strategy $\p$ in terms of these extremal values. Fixing a value of the expected total utility and then using equation (\ref{eqn_outline_bellman}) to construct consistent $\Phi_k$'s would be straightforward if not for the complication that the $\Phi_k$'s must be bounded. 

More specifically, each $\Phi_k$ can only take on values between $m_0 \defeq \min_{(a,s)} U(a;s)$ and $M_0 \defeq \max_{(a,s)} U(a;s)$. This motivates the following definition. The value $v$ is \textbf{enforceable} in state $s$ for the game starting in state $s$ if there exists a strategy $\p$ of the autocrat such for every sequence of opponent actions $(y_k)_{k \geq 0}$ we have that $\Phi_k$ remains inside $[m_0,M_0]$. 

The enforceable values in state $s$ are found as follows. The initial exterior estimate for each state is taken to be $\left[m_{s,0}, M_{s,0} \right] \defeq \left[ m_0,M_0 \right]$. A proposed value of the expectation $v$ must satisfy the first step of the recursive formula (\ref{eqn_outline_bellman}). That is,
\begin{align}
\label{EQN_outline_1}
v = \sum_{x \in \X_{s}} \Phi_{0} \left( x,y ;s  \right) p_{x| s} .
\end{align}
The potential value $v$ is discarded if, for any $\Phi_0$ and $p_{\cdot|s}$ satisfying the above, there exists a joint action $(x,y)$ and child state $s_c = T(x,y;s)$ of $s$ such that
\begin{align}
\label{EQN_outline_2}
\lambda^{-1} \left[\Phi_{0} \left( x,y ;s  \right) - U_{\lambda} (x,y;s) \right] \notin [m_{s_c,0},M_{s_c,0}].
\end{align}
In this way a smaller estimate $\left[ m_{s,1},M_{s,1} \right]$ of the enforceable values in each state $s$ is formed. Solving the general case is the subject of future work. For now we restrict ourselves with opponent agnostic strategies, which constrain the expected future utility to not depend on the opponent's current action $y$. That is, $\Phi_0 (x,y;s) \equiv \Phi_0 (x;s)$. Thus a potential $\Phi_0(x;s)$ must lie in 
\begin{align}
\label{EQN_outline_3}
\bigcap_{y \in \Y_s} \bigg( \lambda [m_{T(x,y;s),0},M_{T(x,y;s),0}] + U_{\lambda}(x,y;s) \bigg).
\end{align}
That is,
\begin{align*}
 \max_{y \in \Y_s} \left\{ \lambda m_{T(x,y;s),0} + U_{\lambda} (x,y;s) \right\} 
 \leq \Phi_0(x;s) \leq  \min_{y \in \Y_s} \left\{ \lambda M_{T(x,y;s),0}  + U_{\lambda} (x,y;s) \right\}.
\end{align*}

As a result, the extremal values of $v$ are better approximated by
\begin{align*}
m_{s,1} = \min_{x \in \X_s} \max_{y \in \Y_s} \left\{ \lambda m_{T(x,y;s),0} + U_{\lambda} (x,y;s) \right\}
\end{align*}
and 
\begin{align*}
M_{s,1} = \max_{x \in \X_s} \min_{y \in \Y_s} \left\{ \lambda M_{T(x,y;s),0}  + U_{\lambda} (x,y;s) \right\}.
\end{align*}
This procedure is then iterated. Taking the limit gives us the range $[m_s,M_s]$ enforceable values, as well as the fixed point
equations 
\begin{align}
\label{EQN_outline_m_FPE}
m_{s} = \min_{x \in \X_s} \max_{y \in \Y_s} \left\{ \lambda m_{T(x,y;s)} + U_{\lambda} (x,y;s) \right\}
\end{align}
and 
\begin{align}
\label{EQN_outline_M_FPE}
M_{s} = \max_{x \in \X_s} \min_{y \in \Y_s} \left\{ \lambda M_{T(x,y;s)}  + U_{\lambda} (x,y;s) \right\}.
\end{align}
Note the extremal actions (\ref{extremal_left_description}) and (\ref{extremal_right_description}) in each state do not necessarily satisfy the inequality
\begin{align}
\label{EQN_outline_inequality}
\max_{y \in \Y_s} \left\{ \lambda m_{T(x,y;s)} + U_{\lambda} (x,y;s) \right\} 
 \leq  \min_{y \in \Y_s} \left\{ \lambda M_{T(x,y;s)}  + U_{\lambda} (x,y;s) \right\},
\end{align}
In this case, the game graph must be pruned of edges corresponding to the offending extremal actions before the iterative process is ran again. 

A benefit of implementing the iterative process shown above in Algorithm \ref{ALGO} is that it converges globally to the unique solution of the fixed point equations (\ref{eqn_m_s}) and (\ref{eqn_M_s}) on the game graph $(\SSS, \X' \times \Y,U)$. Issues can crop when removing said state's incoming edges from the graph. If the edge $(x,y)$ leads to $s$, then the edges corresponding to each of $(x,y')$ must also be removed. If this results in a state $s'$ with no outgoing edges, they must too be removed in turn. These two pruning steps alternate until the resulting graph is stable.

The algorithm terminates when a subgraph $(S^f,X^f \times \Y,U)$ is found such that the inequality (\ref{EQN_outline_inequality}) holds for all extremal actions or returns the empty graph. In the former case, $[m_s,M_s]$ represents the values the autocratic player can unilaterally enforce through an opponent agnostic strategy.

\section{Game Theoretical Background and Notation}

Informally, a \textbf{game} is a description of potential strategic interactions between individual players. For each player, this description must specify their available information, their possible actions, and furthermore must provide a measure of their preference for each outcome. The preference of each player is usually encapsulated in the form of individual payoff functions, with a larger value typically being the more desirable. Although utility is often used synonymously with payoff in the literature, in what follows it is best thought of as a linear combination of the classical payoff functions. 

\subsection{Multi-State Games}

A \textbf{multi-state game} is played on a directed graph whose nodes represent the various states $\SSS$ of the game. While in state $s$, the autocrat and their opponent choose their state-specific actions from $\X_s$ and $\Y_s$, respectively. The outgoing edges of node $s$ represent the action pairs drawn from the joint action space, $\A_s \defeq \X_s \times \Y_s$. The totality of the joint actions, $\sqcup_s \A_s$, are written as $\A$, with similar definitions for $\X$ and $\Y$. The game graph is denoted by $\G(\SSS, \A )$. We assume that the knowledge of the entire graph and the current occupied node is available to the players. That is, the game is one of perfect information.

A fixed-length game is played for a finite number of rounds. The rounds are numbered starting from $0$. The general play of the game is illustrated as follows. Suppose the players begin round $k$ is state $s$. They independently sample probability distributions on $\X_s$ and $\Y_s$, called mixed strategies. The players resulting joint action $a$ determines both the utility $U(a;s)$ the players accrue and the state they begin round $k+1$ in.

A temporally discounted game is instead played for a random number of rounds, unknown to the players. This random game length is generated in the following manner. At the end of each round, a \textit{Bernoulli}($\lambda$) random variable is independently sampled to determine whether another round is played. If $1$ is the outcome, the game proceeds to the next round. Otherwise, the game ends. Equivalently, the total number of rounds played can be generated by a random variable $\ttt_{\lambda} \sim$\textit{Geometric}($\lambda$). We assume that the players have access to the value of the discount factor.

We show in \eqref{discount_motivation} that expected utility received in the $n$th round are reduced by a factor of $\lambda^n$. For this reason we will refer to the parameter $\lambda$ as the discount factor. A repeated game is simply a discounted game with one state. 

It will be convenient to collect the destination of each outgoing edge $(x,y)$ from node $s$ in the transition function $T(a ;s)$. The \textbf{parents} of state $s$ are all $s_p$ such that there exists $a \in \A_{s_p}$ with ${s=T(a;s_p)}$. The \textbf{children} of state $s$ are all $s_c$ such that there exists $a \in \A_s$ with ${s_c=T(a;s)}$.

Each round, players are free to play a mixed strategy conditioned on the history of play. The \textbf{history} at round $n$, denoted $h_{< n}$, is a sequence of past action-state tuples for $m<n$. That is,
\begin{eqnarray*}
h_{<n} \defeq  \Big( \left( a_{n-1} , s_{n-1} \right) , \ldots , \left( a_1 , s_1 \right), \left( a_0 ,s_0 \right)  \Big), 
\end{eqnarray*}
where $s_{i+1} = T \left( a_i ;s_i \right)$. In addition, it will be useful to introduce the notation
\begin{eqnarray*}
h_{\leq n} \defeq  \Big( \left( a_{n} , s_{n} \right) , \ldots , \left( a_1 , s_1 \right), \left( a_0 ,s_0 \right)  \Big) 
\end{eqnarray*}
and
\begin{align*}
h_{[j,i]} = \Big( \left( a_{i} , s_{i} \right) , \left( a_{i-1} , s_{i-1} \right)  \ldots , \left( a_{j+1} , s_{j+1} \right), \left( a_j ,s_j \right)  \Big)  
\end{align*}
for the purpose of making expressions more compact. 

The set of all n-histories is denoted by $\H_n$. In particular, $H_0 \defeq \{ h_{<0} \}$ represents the singleton set consisting of the null history. The null history serves to indicate that there has been no history of play in round $0$. Furthermore, we let $\H$ denote the disjoint union of all $\H_n$. That is,
\begin{align*}
\H \defeq \bigsqcup_{n \geq 0} \H_n .
\end{align*} 

A player's history-dependent mixed strategies form a behavioural strategy. Formally, the autocrat chooses conditional probabilities $p_{x | s, h_{<n}}$ for all $x \in \X_s$ and $h_{<n} \in \H_n$. These are the probabilities they play $x$ in round $n$ given that the $n$-history $h_{<n}$ has deposited them in state $s$. We let $\ppp_n$ denote player $1$'s mixed strategies in round $n$. That is, 
\begin{align*}
\ppp_n \defeq \big( p_{x | s, h_{<n}} : x \in \X_s, h_{<n} \in \H_n \big) .
\end{align*}
In addition, $\ppp \defeq (\p_n)_{n \geq 0}$ denotes their entire behavioural strategy. Similar definitions are used for the opposing player's strategies $\qqq_n$ and $\qqq$.

Equivalently, the players' actions can be generated by random variables. We let $\x_n$, $\y_n$, and $\s_n$  denote the random variables which take values in $\X$ and $\Y$ and $\SSS$ in round $n$. The random vector $(\x_n,\y_n)$ is denoted by $\aaa_n$. 
We also define the history-valued random variables  
\begin{align*}
\h_{<n} \defeq  \Big( \left( \aaa_{n-1} ,\s_{n-1} \right) , \ldots , \left( \aaa_{0},\s_0 \right) \Big) \quad \mbox{and} \quad \h_{\leq n} \defeq  \Big( \left( \aaa_{n} ,\s_{n} \right) , \ldots , \left( \aaa_{0},\s_0 \right) \Big)  .
\end{align*}
 The conditional laws of $\x_n$ are given by
\begin{align*}
\P \Big( \x = x \Big| \s_n =s, \h_{<n} = h_{<n} \Big) \defeq p_{ x_n |s, h_{<n}}  \quad \forall \: x \in \X , s \in \SSS , h_{<n} \in \H_n .
\end{align*} 
As we assume the players select their actions independently of each other, the conditional law of the random vector $\aaa_n $ is given by 
\begin{align*}
\P \Big( \aaa_n = (x,y) \Big| \s =s, \h_{<n}  = h_{<n} \Big) \defeq p_{ x | s, h_{<n}} q_{ y | s, h_{<n}} \quad \forall \: (x,y) \in \A_s, s \in \SSS, h_{<n} \in \H_n. 
\end{align*}
For brevity we will write the conditional probability as
\begin{align}
\label{cond_prob_succinct}
P_{i+1} \big[ x \big| a_i,s_i,a_{i-1},s_{i-1} \ldots a_{j},s_{j} \big] \defeq \P \big( \x_{i+1} = x  \big| \h_{[j,i]} =  (a_i,s_i, \ldots a_{j},s_{j})   \big).
\end{align}
Note Kolmogorov's Extension Theorem guarantees the existence of a probability space on which both $(\aaa_n)_{n \geq 0}$ and $\ttt_{\lambda}$ are defined and independent. 

We define the random variable $\U$ to be the \textbf{total utility} garnered during the course of the game. That is,
\begin{align}
\label{U}
\U \defeq \sum_{n=0}^{\ttt_{\lambda}} U( \aaa_n; \s_n ).  
\end{align}
As we will later investigate $\lambda \rightarrow 1^{-}$ limit, we will ensure convergence by dividing the total utility by the average number of rounds played, $(1-\lambda)^{-1}$. The rescaled utilities are denoted $U_{\lambda}(a;s)$ and the rescaled total utility by $\U_{\lambda}$.

The goal of this paper is to classify all behavioural strategies of player $1$ which fix the expectation of $\U$ regardless of the behavioural strategy employed by player $2$. Below we will refer to player $1$ as the \textbf{autocrat} and payer $2$ as the \textbf{opponent}. The autocrat is so-called because in an asymmetric game, an advantaged player will be able to fix a large range of value than their opponent. To state our main result, we require the following definitions.
\begin{defn}
A behavioural strategy $\ppp$ fixes the expectation of $\U$ if there exists a constant $C$ such that $\E \U = C$ regardless of the behavioural strategy $\qqq$ employed by the opponent. 
\end{defn}

We will see that there is no need to consider histories that cannot be realized for a given strategy $\p$ of the autocrat, as these will not be seen by the expectation. This motivates the following definition.
\begin{defn}
\label{p_accessible}
The $\ppp$\textbf{-accessible n-histories}, denoted $\H_{\ppp,n}$, is the set of histories for which there exists a behavioural strategy $\qqq$ of player $2$ which ensures that said history occurs with positive probability . That is, 
\begin{align*}
\H_{\ppp,n} = \bigcup_{\qqq} \Big\lbrace h_{<n} \in \H_n \Big| \P_{\ppp,\qqq}(\h_{<n} = h_{<n}) > 0 \Big\rbrace  .
\end{align*}
\end{defn}

We will require amenable expressions for the conditional expectation of $\U$. The independence of $(\aaa_n)_{n \geq 0}$ with $\ttt_{\lambda}$ implies that
\begin{align*}
\frac{\E \U}{\E \ttt_{\lambda}} = \E \U_{\lambda} = \sum_{k=0}^{\infty} \E \left[ \sum_{n=0}^{k}U_{\lambda} (\aaa_n ; \s_n, \h_{<n}) \right] \P \left( \ttt_{\lambda} = k \right) 
 = (1-\lambda) \sum_{k=0}^{\infty} \lambda^k  \E \left[ \sum_{n=0}^{k}U_{\lambda} (\aaa_n; \s_n, \h_{<n}) \right] .
\end{align*}
As it is assumed $U_{\lambda}$ is bounded, switching the order of the summations 
\begin{align}
\label{discount_motivation}
(1-\lambda) \sum_{n=0}^{\infty}   \E \left[ U_{\lambda} (\aaa_n ; \s_n ,\h_{<n}) \right] \sum_{k=n}^{\infty} \lambda^k ,
\end{align}
yields
\begin{align}
\label{E_U}
\E \U_{\lambda}  = \sum_{n=0}^{\infty} \lambda^n   \E \left[ U_{\lambda} (\aaa_n ; \s_n, \h_{<n}) \right]  .
\end{align}
Note that the contribution of $U_{\lambda} (\aaa_n; \s_n,\h_{<n})$ to $\E \U$ is discounted by a factor of $\lambda^n$. We define $\U_{\geq n}$ to be $\U_{\lambda}$ with the first $n$ terms truncated,
\begin{align}
\U_{\geq n} \defeq \sum_{n \leq m \leq \ttt_{\lambda}} U_{\lambda} (\aaa_m; \s_m , \h_{<m}).
\end{align}
For each $n \geq 0$, we define the \textbf{future expected utility} function, $ \Phi_n : \left(\A \times \SSS \right) \times \H_n  \rightarrow \R $, to be the expectation of $\U_{\geq n}$ conditioned on round $n$ being reached with history $h_{<n+1}$. That is, 
\begin{align}
\label{phi_defn}
\Phi_n ( a_n ; s_n , h_{<n}) & \defeq \E  \Big[ \U_{\geq n} \Big| \ttt_{\lambda} \geq n , (\aaa_n; \s_n,\h_{<n}) = (a_{n};s_n,h_{<n})   \Big] 
\end{align}
for all $ (a_n;s_n,h_{<n})$ in $\left(\A \times \SSS \right) \times \H_{n}$. The same calculation as for $ \E \U$ yields
\begin{align}
\label{E_phi_n}
\Phi_n (a_{n};s_n,h_{<n}) =  \sum_{m=n}^{\infty} \lambda^{m-n} \E  \Big[  U_{\lambda} (\aaa_{m}; \s_m, \h_{<m})  \Big|  (\aaa_{n};\s_n,\h_{<n}) = (a_{n} ; s_n , h_{<n})  \Big]
\end{align}
for all $(a_{n}; s_n, h_{<n})$. We note that 
\begin{align*}
\E \U_{\lambda} = \E \Big[ \Phi_0 (\aaa_0 ; \s_0) \Big| \s_0 = s_0 \Big]. 
\end{align*}

With these definitions in hand, we can know state our first theorem.

\begin{theorem}[Expectation fixing strategies for discounted games]
\label{THM_Phi_condition}
Consider the two-player temporally discounted game, starting in state $s_0$, on the finite game graph $(\SSS,\A,U,\lambda)$. Player $1$'s behavioural strategy $\ppp$ fixes the expectation of $\lambda \U_{\lambda}$ iff the expression
\begin{align}
\label{Phi_condition}
\sum_{x \in \X_{s}} \Phi_{n} \left( x, y ;s , h_{< n} \right) p_{x| s,h_{< n}} = \sum_{x \in \X_{s}} \Phi_{n} \left( x, y' ;s , h_{< n} \right) p_{x| s,h_{< n}} 
\end{align}
for all $y,y' \in \Y_s$, $s \in \SSS$ and $h_{< n} \in \H_{\p,n}$. Moreover, in this case
\begin{align*}
\E \U_{\lambda} = \sum_{ x \in \X_{s_0} } \Phi_{0} \left( x,y; s_0 \right) p_{x|s_0} \quad \forall \: y \in \Y_{s_0}.
\end{align*}
\end{theorem}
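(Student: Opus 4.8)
The plan is to reformulate condition (\ref{Phi_condition}) in terms of a conditional expected future utility and then treat the two implications separately: the forward one through a collapsed Bellman recursion, and the backward one through a single‑node perturbation of the opponent's strategy. For a $\p$-accessible history set
\[
W_n(y;s,h_{<n}) \defeq \sum_{x\in\X_s}\Phi_n(x,y;s,h_{<n})\,p_{x|s,h_{<n}},
\]
which by (\ref{phi_defn}) equals $\E[\U_{\geq n}\mid \s_n=s,\h_{<n}=h_{<n},\y_n=y]$, the expected discounted future utility given that the opponent plays $y$ in round $n$. In this notation (\ref{Phi_condition}) asserts precisely that $W_n(y;s,h_{<n})$ is independent of $y$, and since $\s_0=s_0$ with null history, (\ref{E_U}) gives $\E\U_\lambda=\sum_{y\in\Y_{s_0}}q_{y|s_0}W_0(y;s_0)$ (the factor $\lambda$ in the statement is an immaterial known constant, so it suffices to fix $\E\U_\lambda$). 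I would record once the Bellman recursion for $\Phi$, obtained from (\ref{E_phi_n}) by splitting off the $m=n$ term and conditioning on the round‑$(n+1)$ action,
\[
\Phi_n(x,y;s,h_{<n}) = U_\lambda(x,y;s) + \lambda\sum_{x'\in\X_{s'}}\sum_{y'\in\Y_{s'}}\Phi_{n+1}(x',y';s',h_{<n+1})\,p_{x'|s',h_{<n+1}}\,q_{y'|s',h_{<n+1}},
\]
where $s'=T(x,y;s)$ and $h_{<n+1}$ extends $h_{<n}$ by $((x,y),s)$.

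Forward direction. Assuming (\ref{Phi_condition}), write $V_n(s,h_{<n})$ for the common value of $W_n(\cdot\,;s,h_{<n})$. Summing the Bellman recursion against $p_{x|s,h_{<n}}$ and using (\ref{Phi_condition}) at round $n+1$ to collapse $\sum_{y'}q_{y'|\cdot}W_{n+1}(y';\cdot)=V_{n+1}(\cdot)$, I get, for every $y\in\Y_s$,
\[
V_n(s,h_{<n}) = \sum_{x\in\X_s}p_{x|s,h_{<n}}\big[\,U_\lambda(x,y;s)+\lambda V_{n+1}(T(x,y;s),h_{<n+1})\,\big].
\]
The data of this recursion ($\p$, $U_\lambda$, $T$) carries no dependence on $\q$; since $U_\lambda$ is bounded and $\lambda<1$, the recursion has a unique bounded solution, realized as the $N\to\infty$ limit of its $N$-step unrolling (the remainder being $O(\lambda^N)$), and this solution is therefore independent of $\q$. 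Hence $V_0(s_0)=\E\U_\lambda$ is the same for every $\q$, so $\p$ fixes the expectation, and this simultaneously yields the displayed formula since $V_0(s_0)=\sum_{x\in\X_{s_0}}\Phi_0(x,y;s_0)p_{x|s_0}$ for each $y$.

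Backward direction. Suppose (\ref{Phi_condition}) fails: there are $n_0$, a $\p$-accessible $h_{<n_0}$ ending in state $s$, and $y,y'\in\Y_s$ with $W_{n_0}(y;s,h_{<n_0})\neq W_{n_0}(y';s,h_{<n_0})$. By Definition \ref{p_accessible} fix $\q^*$ with $\rho\defeq\P_{\p,\q^*}(\h_{<n_0}=h_{<n_0})>0$, and let $\q^y,\q^{y'}$ agree with $\q^*$ at every decision point except the round‑$n_0$ node $(s,h_{<n_0})$, where they play $y$ and $y'$ deterministically. Since a history of length $n_0$ pins down the round, this node is visited only in round $n_0$ and only on the event $B=\{\h_{<n_0}=h_{<n_0}\}$; consequently the laws induced by $\q^y$ and $\q^{y'}$ coincide on $B^c$ and on all rounds before $n_0$, so by (\ref{E_U}) the two expected utilities differ only through the contribution of $B$ from round $n_0$ onward. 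Conditioning that contribution on $\y_{n_0}$ and noting that $W_{n_0}(\cdot\,;s,h_{<n_0})$ is unaffected by the modification (it conditions the modified node out), I obtain the exact identity
\[
\E_{\q^y}\U_\lambda-\E_{\q^{y'}}\U_\lambda = \lambda^{n_0}\rho\big[\,W_{n_0}(y;s,h_{<n_0})-W_{n_0}(y';s,h_{<n_0})\,\big]\neq0,
\]
contradicting that $\p$ fixes the expectation.

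I expect the backward direction to be the main obstacle: the delicacy lies in the bookkeeping that isolates the difference to the single term above, namely verifying that altering $\q$ at exactly one round‑$n_0$ node leaves the event $B$, all pre‑$n_0$ contributions, and the two future values $W_{n_0}(y;\cdot)$ and $W_{n_0}(y';\cdot)$ genuinely unchanged, together with justifying the conditional version of (\ref{E_U}) and (\ref{E_phi_n}) under the $\ttt_\lambda$ truncation, which rests on the independence of $(\aaa_n)_{n\geq0}$ and $\ttt_\lambda$. The role of $\p$-accessibility is precisely to guarantee $\rho>0$, without which the perturbation would be invisible to the expectation.
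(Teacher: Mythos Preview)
Your proposal is correct and follows essentially the same route as the paper, which splits the proof into Lemmas \ref{first_key_lemma} and \ref{second_key_lemma} via the Bellman recursion of Lemma \ref{lem_phi_mn}. The only cosmetic differences are that for the sufficient direction the paper peels off the dependence on $\q_n,\q_{n+1},\ldots$ one index at a time rather than invoking your contraction/unrolling uniqueness for $V_n$, and for the necessary direction the paper differentiates $\E\U_\lambda$ with respect to $q_{y|s_n,h_{<n}}$ rather than comparing your two explicit perturbations $\q^y,\q^{y'}$---your finite-difference version is exactly the discrete analogue of their derivative computation.
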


\begin{rem}
In fact, a game with any number of players can be considered. In this setting, we call a set of players $\J \subset \I$ a \textbf{coalition}. The remaining players are collected in coalition $\K \defeq \I \setminus \J$. Let $\p^j$ and $\q^k$ represent the behavioural strategies for $j \in \J$ and $k \in \K$. We can view $\G \left( \A^{\J} \times \A^{\K} \right)$ as a two player game between coalitions with action spaces
\begin{align*}
\A^{\J} \defeq \prod_{j \in \J} \A^j \quad \mbox{and} \quad \A^{\K} \defeq \prod_{k \in \K} \A^k 
\end{align*}
and behavioural strategies $\p^{\J} \defeq \prod_{j \in \J} \p^j$ and $\q^{\K} \defeq \prod_{k \in \K} \q^k $.
\end{rem}

The proof is split up between Lemmas \ref{first_key_lemma} and \ref{second_key_lemma}. To tackle these lemmas, we need some machinery for the future expected utilities $\left(\Phi_n \right)_{n \geq 0}$. Particularly, their characterization as a solution to a recursive formula which satisfies a growth condition.

\begin{lem}[Bellman-type qquation]
\label{lem_phi_mn}
The expected future utilities $\left(\Phi_n : (\A \times \SSS ) \times \H_n  \rightarrow \R \right)_{n \geq 0}$, as defined in (\ref{phi_defn}), satisfy the one-step recursion formula
\begin{align}
\label{onestep_beta}
\Phi_n (a;s,h_{<n})= U_{\lambda} \left(a;s,h_{<n} \right) +\lambda \E \bigg[ \Phi_{n+1} (\aaa_{n+1};\s_{n+1},\h_{<n+1}) \bigg| \h_{<n+1} =(a;s,h_{<n})  \bigg]
\end{align}
for all $(a,s,h_{<n})$
\end{lem}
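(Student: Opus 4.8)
The plan is to work directly from the closed form \eqref{E_phi_n} for $\Phi_n$, which has already absorbed the conditioning on $\{\ttt_\lambda \geq n\}$ into the geometric discount factors, so that no further bookkeeping of the random stopping time $\ttt_\lambda$ is needed. First I would split the summation in \eqref{E_phi_n} into its $m=n$ term and the tail $m \geq n+1$. Conditioned on the event $(\aaa_n;\s_n,\h_{<n}) = (a;s,h_{<n})$, the summand $U_{\lambda}(\aaa_n;\s_n,\h_{<n})$ is deterministically equal to $U_{\lambda}(a;s,h_{<n})$, so the $m=n$ term contributes exactly $U_{\lambda}(a;s,h_{<n})$. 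Factoring a single power of $\lambda$ out of the tail leaves
\begin{align*}
\lambda \sum_{m=n+1}^{\infty} \lambda^{m-(n+1)} \E\Big[ U_{\lambda}(\aaa_m;\s_m,\h_{<m}) \Big| (\aaa_n;\s_n,\h_{<n}) = (a;s,h_{<n}) \Big].
\end{align*}

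The key observation driving the next step is that the conditioning events coincide: specifying $(\aaa_n;\s_n,\h_{<n}) = (a;s,h_{<n})$ is identical to specifying $\h_{<n+1} = (a;s,h_{<n})$, since by definition $\h_{<n+1} = \h_{\leq n}$ records precisely the pair $(\aaa_n,\s_n)$ prepended to $\h_{<n}$. With this identification I would apply the tower property, conditioning each tail summand additionally on the round-$(n+1)$ data $\h_{\leq n+1} = (\aaa_{n+1};\s_{n+1},\h_{<n+1})$, which refines the round-$n$ information. Because $U_{\lambda}$ is bounded and $\lambda < 1$, the tail series converges absolutely, which licenses interchanging the infinite sum with the outer conditional expectation (Fubini, or equivalently dominated convergence on the partial sums). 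The inner series that results is, by \eqref{E_phi_n} read at level $n+1$, exactly $\Phi_{n+1}(\aaa_{n+1};\s_{n+1},\h_{<n+1})$.

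Assembling these pieces yields
\begin{align*}
\Phi_n(a;s,h_{<n}) = U_{\lambda}(a;s,h_{<n}) + \lambda \E\Big[ \Phi_{n+1}(\aaa_{n+1};\s_{n+1},\h_{<n+1}) \Big| \h_{<n+1} = (a;s,h_{<n}) \Big],
\end{align*}
which is \eqref{onestep_beta}. The only genuinely delicate points, and the ones I would spell out carefully, are the measure-theoretic identity that conditioning on $(\aaa_n;\s_n,\h_{<n})$ is the same as conditioning on $\h_{\leq n+1}$ restricted to the round-$n$ coordinates, and the justification for moving the infinite sum inside the conditional expectation; both are routine given the standing boundedness of $U_{\lambda}$ and $\lambda \in [0,1)$, but they are where all the content lives. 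Everything else is an index shift and the recognition of the definition of $\Phi_{n+1}$.
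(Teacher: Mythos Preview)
Your proposal is correct and follows essentially the same approach as the paper: split \eqref{E_phi_n} into its $m=n$ term plus the tail, condition the tail on the round-$(n+1)$ data, swap the order of the sum and the expectation, and recognize the inner series as $\Phi_{n+1}$. The only cosmetic difference is that the paper writes the tower step explicitly as a sum over action-state pairs $(a',s')$ weighted by the transition probabilities $P_{n+1}(a' \mid a,s,h_{<n})$, whereas you invoke the tower property abstractly.
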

\begin{proof}
Recall the form of $\Phi_{n}$ from (\ref{E_phi_n}),
\begin{align*}
\Phi_n (a;s,h_{<n}) &= U_{\lambda} (a;s,h_{<n}) +   \sum_{k=n+1}^{\infty} \lambda^{k-n} \E \Big[ U_{\lambda}(\aaa_k;\s_k,\h_{<k}) \Big| (\aaa_n,\s_n,\h_{<n}) = (a;s,h_{<n}) \Big] .
\end{align*}
By conditioning on $(\aaa_{n+1},\s_{n+1})$, we see that $\E \Big[ U_{\lambda}(\h_{\leq k}) \Big| \h_{\leq n} = (a;s,h_{<n}) \Big]$ equals
\begin{align*}
\sum_{(a',s')} \E \Big[ U_{\lambda}(\aaa_{k};\s_k,\h_{<k}) \Big| \h_{\leq n+1}= (a';s',h_{< n+1}) \Big] P_{n+1} \big( a'   \big| a,s,h_{<n} \big).
\end{align*}
Switching the order of summations and noting that
\begin{align*}
\Phi_{n+1}(a';s',h_{<n+1}) = \sum_{k=n+1}^{\infty} \lambda^{k-(n+1)}  \E \Big[  U_{\lambda}(\aaa_{k};\s_k,\h_{<k}) \Big| \h_{\leq n+1}= (a';s',h_{< n+1}) \Big]
\end{align*}
lets us rewrite the double sum as
\begin{align*}
\lambda \sum_{(a',s')}   \Phi_{n+1}(a';s',h_{<n+1}) P_{n+1} \big( a'   \big| a,s,h_{<n} \big)
\end{align*}
Thus we have
\begin{align*}
\Phi_n (a;s,h_{<n}) &= U_{\lambda} (a;s,h_{<n}) +  \lambda \E \Big[ \Phi_{n+1} (\aaa_{n+1};\s_{n+1},\h_{<n+1}) \Big| (\aaa_n ;\s_n ,\h_{<n}) =(a;s,h_{<n}) \Big].
\end{align*} 
\end{proof}
\noindent It will be useful in the proofs of Lemmas \ref{first_key_lemma} and \ref{second_key_lemma} to rewrite the expectation in equation (\ref{onestep_beta}) in terms of the behavioural strategies $\p$ and $\q$. That is,
\begin{align*}
\sum_{x \in \X_{s}} \sum_{y \in \Y_{s}} \Phi_{n+1} \left( x,y ;s , h_{< n} \right) p_{x| s,h_{< n}}q_{y|s,h_{<n}} = \lambda^{-1}  \left[ \Phi_{n} ( x_{n},y_n; s_{n} , h_{< n}) - U_{\lambda}( x_{n},y_{n};s_{n}) \right] 
\end{align*}
for all $(x_n,y_n;s_n)$ in $\A_{s_n} \times \SSS$. 

By definition, $\Phi_n$ does not depend on $\q_m$ for $m \leq n$. If, in addition, the autocrat's strategy $\p$ ensures that the expected future utility $\E \Phi_n (\x_n,y,h_{<n})$ is independent of the opponent's action $y$ for every $\p$-accessible history $h_{<n}$, we can conclude that the $\Phi_n$'s do not depend of $\q$. This is proved in the following lemma.  

\begin{lem}[Sufficient direction]
\label{first_key_lemma}
Suppose that the expectation of $\Phi_n (\x_n , y;\s_n, \h_{<n})$,
\begin{align*}
\sum_{x \in \X_s} \Phi_{n} (x,y;s,h_{<n}) p_{x|s,h_{<n}} ,
\end{align*}
is constant in $y$ for all $h_{<n} \in \H_{\ppp,n}$ and $n \geq 0$. Then $\Phi_n$ does not depend on $\qqq$ for all $n$.
\end{lem}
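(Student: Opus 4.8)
The plan is to use the Bellman-type recursion of Lemma~\ref{lem_phi_mn} to remove the dependence on $\qqq$ one round at a time, and then to control the infinite horizon with a discounting estimate. The single observation driving the argument is that if $\sum_{x} \Phi_{n+1}(x,y;s',h_{<n+1})\, p_{x|s',h_{<n+1}}$ is constant in $y$, with common value $c$, then, since $q_{\cdot|s',h_{<n+1}}$ is a probability distribution,
\begin{align*}
\sum_{x \in \X_{s'}} \sum_{y \in \Y_{s'}} \Phi_{n+1}(x,y;s',h_{<n+1})\, p_{x|s',h_{<n+1}}\, q_{y|s',h_{<n+1}} = \sum_{y \in \Y_{s'}} q_{y|s',h_{<n+1}}\, c = c .
\end{align*}
In particular this double sum equals $\sum_{x} \Phi_{n+1}(x,y_0;s',h_{<n+1})\, p_{x|s',h_{<n+1}}$ for any single $y_0 \in \Y_{s'}$, so the round-$(n+1)$ opponent strategy has vanished from the right-hand side of the recursion~(\ref{onestep_beta}).

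First I would record the one-step consequence. Writing $s' = T(a;s)$ and $h_{<n+1} = (a,s,h_{<n})$, and invoking the hypothesis in~(\ref{onestep_beta}) (legitimate as soon as $h_{<n+1}$ is $\ppp$-accessible), I obtain
\begin{align*}
\Phi_n(a;s,h_{<n}) = U_{\lambda}(a;s) + \lambda \sum_{x \in \X_{s'}} \Phi_{n+1}(x,y_0;s',h_{<n+1})\, p_{x|s',h_{<n+1}}
\end{align*}
for any $y_0 \in \Y_{s'}$. The right-hand side no longer mentions $\qqq_{n+1}$, but the surviving $\Phi_{n+1}$ terms may still depend on $\qqq_{n+2}, \qqq_{n+3}, \dots$, so I would iterate.

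Next I would unroll this identity $k$ times, substituting arbitrary but fixed actions $y_0^{(1)}, \dots, y_0^{(k)}$ at the successive rounds. This yields $\Phi_n(a;s,h_{<n}) = F_k + R_k$, where the front part $F_k$ is a finite sum of terms of the form $\lambda^{j}\,(\text{product of } p \text{-probabilities along a path})\, U_{\lambda}(\cdot\,;\cdot)$ and
\begin{align*}
R_k = \lambda^{k} \sum_{\text{paths of length } k} (\text{product of } p)\; \Phi_{n+k}(\cdot\,;\cdot).
\end{align*}
The decisive point is that $F_k$ involves only $U_{\lambda}$, the autocrat's probabilities $\ppp$, and our fixed choices $y_0^{(j)}$; it carries no factor from $\qqq$. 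Since $\Phi$ is uniformly bounded --- indeed $|\Phi_m| \le \sup|U_{\lambda}|/(1-\lambda)$ by~(\ref{E_phi_n}) --- and the $p$-weights along all length-$k$ paths sum to $1$, we get $|R_k| \le \lambda^{k}\,\sup|U_{\lambda}|/(1-\lambda) \to 0$. Hence $\Phi_n(a;s,h_{<n}) = \lim_{k \to \infty} F_k$; as $F_k$ makes no reference to the opponent, this value is the same for every $\qqq$ satisfying the hypothesis, which is exactly the claim.

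The genuinely delicate step is not the discounting estimate but the accessibility bookkeeping required to invoke the hypothesis at each unrolling. The hypothesis is granted only at $\ppp$-accessible histories, whereas the unrolled sums formally range over all autocrat actions, including those with $p_{x|\cdot} = 0$ that would generate inaccessible histories. I would dispatch this by observing that such terms carry a vanishing $p$-factor and may be dropped, so it suffices to apply the hypothesis along paths whose autocrat actions all have positive probability. Along such a path accessibility propagates inductively: if $h_{<n+j}$ is $\ppp$-accessible and the retained action $x$ has $p_{x|s^{(j)},h_{<n+j}} > 0$, then $(x,y_0^{(j)})$ extends it to an accessible history, because the opponent can realize the fixed action $y_0^{(j)} \in \Y_{s^{(j)}}$ with positive probability under some $\qqq$. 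Carrying this ``accessibility along every retained path'' claim cleanly through the recursion is the one step demanding care.
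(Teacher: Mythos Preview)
Your proof is correct and follows the same approach as the paper: use the one-step Bellman recursion together with the constant-in-$y$ hypothesis to replace the $q$-average by the value at a single $y_0$, then iterate. The paper's version is terser, merely asserting ``we can do this for all $\qqq_m$ for $m \geq n$'' without writing out either the $\lambda^k$-discounted remainder estimate or the $\ppp$-accessibility bookkeeping that you supply; your treatment is more complete on both points.
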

\begin{proof}
Recall the one-step recursive formula,
\begin{align*}
\Phi_{n-1} (h_{<n}) =   U_{\lambda}(h_{<n}) + \lambda \sum_{y_n } \left( \sum_{x_n} \Phi_{n} (x_n,y_n; s_n,h_{<n})   p_{x_n|s,h_{<n}} \right) q_{y_n|s,h_{<n}} .
\end{align*}
We have
\begin{align*}
\Phi_{n-1} (h_{<n}) =   U_{\lambda}(h_{<n}) + \lambda \sum_{x_n }\Phi_{n} (x_n,y;s_n,h_{<n})   p_{x_n|s,h_{<n}}   \quad \forall \: y \in \Y_s.
\end{align*}
That is, $\Phi_n$ does not depend on $\qqq_n$. Repeating this, we see
\begin{align*}
\Phi_{n} (h_{<n+1}) =   U_{\lambda}(h_{<n+1}) + \lambda  \sum_{x_{n+1}} \Phi_{n+1} (x_{n+1},y; s_{n+1},h_{<n+1})   p_{x_{n+1}|s_{n+1},h_{<n+1}}  .
\end{align*}
That is, $\Phi_n$ does not depend on $\qqq_{n+1}$. We can do this for all $\q_m$ for $m \geq n$.
\end{proof}
\begin{lem}[Necessary direction] 
\label{second_key_lemma}
Suppose player $1$'s strategy $\p$ fixes the expectation of $\U$. Then 
\begin{align*}
\sum_{x \in \X_s} \Phi_{n} (x,y;s,h_{<n}) p_{x|s,h_{<n}} ,
\end{align*}
is constant in $y$ for all $h_{<n} \in \H_{\ppp,n}$ and $n \geq 0$
\end{lem}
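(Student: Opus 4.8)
\section*{Proof proposal for Lemma \ref{second_key_lemma}}

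The plan is to prove the contrapositive by perturbing the opponent's strategy at a single history. For a state $s'$ reached by a history $h'_{<n}$, write
\[
g(y;s',h'_{<n}) \defeq \sum_{x\in\X_{s'}}\Phi_n(x,y;s',h'_{<n})\,p_{x|s',h'_{<n}},
\]
so the claim is that $g(\cdot;s,h_{<n})$ is constant in $y$ for every $\p$-accessible $h_{<n}$. Suppose not: there exist a round $n$, a $\p$-accessible $h_{<n}\in\H_{\p,n}$ ending in state $s$, an opponent strategy, and actions $y_*,y_*'\in\Y_s$ with $g(y_*;s,h_{<n})\neq g(y_*';s,h_{<n})$. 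I will produce two opponent strategies giving different values of $\E\U_\lambda$, contradicting that $\p$ fixes the expectation (fixing $\U$ and fixing $\U_\lambda$ are equivalent, as they differ by the positive factor $(1-\lambda)$).

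First I would record a level-$n$ version of the identity $\E\U_\lambda=\E[\Phi_0(\aaa_0;\s_0)\mid\s_0=s_0]$ noted above. Splitting the series (\ref{E_U}) at $m=n$ and conditioning its tail on $(\aaa_n,\s_n,\h_{<n})$ through (\ref{E_phi_n}) yields
\[
\E\U_\lambda=\sum_{m=0}^{n-1}\lambda^m\,\E\big[U_\lambda(\aaa_m;\s_m,\h_{<m})\big]+\lambda^n\,\E\big[\Phi_n(\aaa_n;\s_n,\h_{<n})\big].
\]
Expanding the final expectation over histories and the independent round-$n$ actions gives
\[
\E\big[\Phi_n(\aaa_n;\s_n,\h_{<n})\big]=\sum_{h'_{<n}}\P(\h_{<n}=h'_{<n})\sum_{y\in\Y_{s'}}q_{y|s',h'_{<n}}\,g(y;s',h'_{<n}),
\]
where $s'$ denotes the state determined by $h'_{<n}$.

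The crux is a temporal-separation observation. Since $\aaa_n$ is held fixed inside $\Phi_n$, formula (\ref{E_phi_n}) shows that each $\Phi_n(x,y;s',h'_{<n})$ is a function of the strategies in rounds strictly after $n$ only, while each weight $\P(\h_{<n}=h'_{<n})$ and each summand of the first sum is a function of the strategies in rounds strictly before $n$. Using $\p$-accessibility, fix a base strategy $\q^0$ under which $\P(\h_{<n}=h_{<n})>0$ and whose conditionals in rounds after $n$ coincide with those of the opponent strategy singled out above, so that $g(\cdot;s,h_{<n})$ is the function exhibited there. Let $\hat\q$ and $\hat\q'$ agree with $\q^0$ everywhere except at the single round-$n$ conditional at $(s,h_{<n})$, where $\hat\q$ plays $y_*'$ and $\hat\q'$ plays $y_*$ deterministically. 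This one-component change leaves untouched the first sum, every weight $\P(\h_{<n}=h'_{<n})$, and every value $\Phi_n(\cdot;s',h'_{<n})$, so only the $h'_{<n}=h_{<n}$ term of the expansion moves, and
\[
\E_{\hat\q'}\U_\lambda-\E_{\hat\q}\U_\lambda=\lambda^n\,\P(\h_{<n}=h_{<n})\,\big(g(y_*;s,h_{<n})-g(y_*';s,h_{<n})\big)\neq0,
\]
because $\lambda^n>0$, $\P(\h_{<n}=h_{<n})>0$, and the two $g$-values differ. This contradicts the hypothesis, proving the contrapositive.

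The step demanding care, and the main obstacle, is the justification that a single-component perturbation of the opponent's round-$n$ mixed strategy moves $\E\U_\lambda$ only through the explicit weights $q_{\cdot|s,h_{<n}}$. This is precisely the temporal-separation fact that $\h_{<n}$ is a function of the rounds before $n$ while $\Phi_n$, with $\aaa_n$ fixed, is a function of the rounds after $n$; the round-$n$ weights sit between them and enter $\E\U_\lambda$ linearly, so once this is clear the contradiction is immediate. Accessibility is needed only to secure a base strategy with $\P(\h_{<n}=h_{<n})>0$, so that the perturbation has nonzero leverage. Finally, since $g(\cdot;s,h_{<n})$ depends on the opponent only through the rounds after $n$, the conclusion obtained for this base strategy transfers verbatim to an arbitrary $\q$.
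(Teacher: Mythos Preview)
Your argument is correct and follows essentially the same route as the paper: both use the level-$n$ decomposition $\E\U_\lambda=\sum_{m<n}\lambda^m\E[U_\lambda]+\lambda^n\E[\Phi_n]$, expand the last term over histories, and exploit that the round-$n$ opponent weights $q_{\cdot\mid s,h_{<n}}$ enter linearly while the history weight $\P(\h_{<n}=h_{<n})$ and the values $\Phi_n$ depend only on rounds before and after $n$, respectively. The only cosmetic difference is that the paper takes a partial derivative of $\E\U_\lambda$ with respect to $q_{y\mid s_n,h_{<n}}$ and divides by $\P(\h_{<n}=h_{<n})$, whereas you compare two explicit one-component perturbations $\hat\q,\hat\q'$; since the dependence on each $q_{y\mid s,h_{<n}}$ is affine, these are the same computation.
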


\begin{proof}
For any $n$, we can write $\E \U_{\lambda}$ as
\begin{align*}
\E \U_{\lambda} = \sum_{m=0}^{n-1} \E \left[ U_{\lambda} ( \h_{\leq m} ) \right] + \lambda^n \E \left[ \Phi_n (\aaa_n ;\s_n,\h_{<n}) \right].  
\end{align*}
This can be rewritten explicitly in terms of the behavioural strategies $\p$ and $\q$. That is,
\begin{align*}
\E \U_{\lambda} = \sum_{m=0}^{n-1} \lambda^m \sum_{h_{\leq m} } U_{m} (h_{\leq m}) \prod_{k=0}^{m} p_{x_k|s_k,h_{<k}} q_{y_k |s_k, h_{<k}} 
+ \lambda^{n} \sum_{h_{\leq n} } \Phi_{n} (h_{\leq n}) \prod_{k=0}^{n} p_{x_k | s_k, h_{<k}} q_{ y_k | s_k, h_{<k}}.
\end{align*}
Furthermore, the second term equals
\begin{align*}
 \sum_{ h_{<n} \in \H_{\p,n}} \left[ \sum_{y_n} \left( \sum_{x_n} \Phi_{n} (x_n,y_n;s_n,h_{< n}) p_{x_n |s_n, h_{<n}} \right) q_{y_n |s_n, h_{<n}} \right] \P (\aaa_{<n} = h_{<n}) .
\end{align*}
By assumption, there is some constant $C$ such that $\E \U_{\lambda} = C$ regardless of the $\q$. Differentiating with respect to $q_{y|s_n,h_{<n}}$ and dividing out $\P (\aaa_{<n} = h_{<n})$ yields
\begin{align*}
\sum_{x_n \in \X_{s_n}}  \big[ \Phi_{n} (x_n,y;s_n,h_{< n}) - \Phi_{n} (x_n,y';s_n,h_{< n}) \big] p_{x_n |s_n, h_{<n}}  = 0 \quad \forall \: y' \in \Y_{s_n} .
\end{align*}
\end{proof}
\noindent As demonstrated in the proof above, when condition (\ref{Phi_condition}) holds, equation (\ref{onestep_beta}) simplifies to 
\begin{align}
\label{RECURSION}
\sum_{x \in \X_{s}} \Phi_{n+1} \left( x,y ;s , h_{< n} \right) p_{x| s,h_{< n}} = \lambda^{-1}  \left[ \Phi_{n} ( x_{n},y_n; s_{n} , h_{< n}) - U_{\lambda}( x_{n},y_{n};s_{n}) \right] .
\end{align}
This will be called the \textbf{one-step recursion formula}.

\section{Enforceable Values of $\E \U$ and Strategies}

\subsection{Autocratic Strategies}

The goal of this section is to provide both a characterization of the enforceable values $[m_s,M_s]$ of the expected total utility and the autocratic strategies which enforce them. We will first restrict to case where the expected future utility does not depend on the the opposing player's current action. That is,
\begin{align}
\label{CONDITION_opposition_agnostic}
\Phi_k (x,y;s,h_{<k}) \equiv \Phi_k (x;s,h_{<k}) .
\end{align}
The values that can be enforced, despite this restriction, will be referred to as \textbf{opposition agnostic}. The same will be said of their respective autocratic strategies. This assumption is crucial to our main result, that opposition agnostic autocratic strategies need only finite memory.

This claim can be refined further. It will turn out that the autocrat can fix any opposition agnostic value by playing mixed strategies with state-dependent two-action support. A state is called \textbf{cornered} if the autocrat must play one action to fix a value. In general, cornered states form directed trees. We note it is possible to construct games in which there are long paths of cornered states, as in Example \ref{EXMP_agnostic_pathological}. Our main result is then stated as follows.
\begin{theorem}[Autocratic strategies with finite memory]
\label{THM_main_result}
Suppose the multi-state game $\G(\SSS,\A,U)$ is finite. Then any opposition agnostic value of the expected total utility can be enforced by a finite memory autocratic strategy. In particular, it requires $L+1$-memory, where $L$ is the length of the longest path of cornered states.
\end{theorem}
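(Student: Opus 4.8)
The plan is to work on the stable pruned subgraph $(\SSS^f,\X^f \times \Y,U)$ furnished by Theorem \ref{THM_characterization_opposition_agnostic}, on which the enforceable set in each state $s$ is exactly the interval $[m_s,M_s]$ determined by the fixed point equations (\ref{eqn_m_s}) and (\ref{eqn_M_s}) and on which every extremal action satisfies the inequality (\ref{inequality_description}). Fixing a target value $v \in [m_{s_0},M_{s_0}]$, I would build an explicit strategy by carrying along a \emph{running target} $v_s$ for the current state, initialised to $v_{s_0}=v$, and defining the per-action future utilities $\Phi_n(x;s,h_{<n})$ so that the opposition agnostic one-step recursion (\ref{RECURSION}) is inverted round by round. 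Once such a consistent, bounded family of $\Phi_n$'s is exhibited, the sufficient direction (Lemma \ref{first_key_lemma}) together with Theorem \ref{THM_Phi_condition} certifies that the strategy fixes $\E \U_{\lambda}$ at $v$; the entire content of the theorem is therefore the construction and the bookkeeping of how much history $v_s$ depends on.

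For the construction I would distinguish the two kinds of states. In a non-cornered state $s$ the extremal actions $x^-_s$ and $x^+_s$ of (\ref{extremal_left_description}) and (\ref{extremal_right_description}) are distinct and $m_s<M_s$, so I would play $x^-_s$ with probability $1-\theta_s$ and $x^+_s$ with probability $\theta_s$, where $\theta_s \defeq (v_s-m_s)/(M_s-m_s)$, and --- this is the key choice --- fix the per-action targets at the endpoints $\Phi_n(x^-_s;s)=m_s$ and $\Phi_n(x^+_s;s)=M_s$ regardless of $v_s$. In a cornered state the autocrat has a single usable action $x_s$ whose admissible interval already equals $[m_s,M_s]$, so I would play $x_s$ and set $\Phi_n(x_s;s)=v_s$. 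In either case, when the opponent answers with $y$ and the game moves to $s'=T(x,y;s)$, the recursion (\ref{RECURSION}) forces the next target to be $v_{s'}=\lambda^{-1}\big(\Phi_n(x;s)-U_{\lambda}(x,y;s)\big)$. Checking that $v_{s'}\in[m_{s'},M_{s'}]$ is exactly where the inequality (\ref{inequality_description}) enters: for $x=x^-_s$ the lower bound is the definition of $m_s$ as a maximum over $y$, while the upper bound $m_s\le \min_{y}\{\lambda M_{T(x^-_s,y)}+U_{\lambda}(x^-_s,y;s)\}$ is (\ref{inequality_description}) at $x^-_s$; the cases $x^+_s$ and the cornered $x_s$ are analogous, the latter using that its interval is $[m_s,M_s]$. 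Since every $[m_{s'},M_{s'}]\subseteq[m_0,M_0]$, the resulting $\Phi_n$'s stay bounded, so the opposition agnostic family is legitimate.

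The memory estimate is the heart of the argument. The crucial consequence of fixing the per-action targets at $m_s,M_s$ in a non-cornered state is decoupling: the child target $v_{s'}$ then depends only on which extremal action was played and on $y$, hence only on the single action-state pair $(a_{n-1},s_{n-1})$, and \emph{not} on the incoming target $v_s$. A non-cornered state thus resets the memory to one round. In a cornered state, by contrast, $\Phi_n(x_s;s)=v_s$, so $v_{s'}$ inherits the dependence on $v_s$ and the memory requirement grows by one. Unrolling this recursion, the running target at a state lying $j$ rounds into a maximal chain of cornered states is a function of the last $j+1$ action-state pairs, terminating at the non-cornered (or initial) state that set the free choice. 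I would then invoke the structural fact that the cornered states form directed trees, so the chains are acyclic and of finite length; letting $L$ be the length of the longest such chain, the worst case requires the last $L+1$ rounds of history. Concretely, I would record that the constructed conditional probabilities $p_{x|s,h_{<n}}$ depend on $h_{<n}$ only through $h_{[n-L-1,\,n-1]}$, making the $(L+1)$-memory claim literal.

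The step I expect to be the main obstacle is the decoupling argument of the last paragraph: one must verify rigorously that choosing the per-action targets at the interval endpoints is always admissible (which again leans on (\ref{inequality_description})) and that this choice genuinely severs the dependence of the child target on $v_s$, so that memory accrues only across cornered states. Establishing --- or cleanly citing --- that cornered states cannot form a directed cycle, and hence that $L$ is finite, is the delicate structural input without which the finite-memory conclusion could fail.
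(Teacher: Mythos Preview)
Your proposal is correct and follows essentially the same construction as the paper: fix the per-action future utilities at the endpoints $m_s,M_s$ in uncornered states so that the child target depends only on the last action-state pair, let the target propagate deterministically through cornered states, and read off the $(L+1)$-memory bound from the length of the longest cornered chain. Your write-up is in fact more explicit than the paper's in justifying why the propagated target stays in $[m_{s'},M_{s'}]$ via the inequality (\ref{inequality_description}), and in flagging that the finiteness of $L$ rests on the stated structural fact that cornered states form directed trees.
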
  

The proof of Theorem \ref{THM_main_result} relies on a characterization of the extremal enforceable values as the solutions to a system of fixed point equations indexed by the states of a maximal subgraph. In order to state this theorem we require the following definition.
\begin{defn}
\label{prune}
A subgraph $(\SSS',\A')$ of the game graph $(\SSS,\A)$ has been \textbf{pruned} if edge $(x,y)$'s absence implies $(x,y')$ is absent for all $y'$ and all states have at least one outgoing edge. 
\end{defn}
More is said about pruning in the lead-up to Algorithm \ref{ALGO}. For now it is enough to associate a pruned subgraph with the game $\G(\SSS,\A,U)$ in which the autocrat is constrained to play mixed strategies with nonempty support contained in some $\X'_s$ when in state $s$. With this definition in hand we can state our characterization of the enforceable values.

\begin{theorem}[Characterization of enforceable values]
\label{THM_characterization_opposition_agnostic}
The opposition agnostic enforceable values of the game $\G(\SSS,\A,U)$ are the intervals $[m_s, M_s]$ given by the unique solution to
\begin{align}
\label{BIG}
\begin{split}
m_s &= \min_{x \in \X^f_{s}} \max_{y \in \Y_{s}} \big\{ \lambda m_{T(x,y;s)} + U_{\lambda} (x,y;s)  \big\} \quad \forall \: s \in \SSS' \\
M_s &= \max_{x \in \X^f_{s}} \max_{y \in \Y_s} \big\{ \lambda M_{T(x,y;s)} + U_{\lambda} (x,y;s)  \big\} \quad \forall \: s \in \SSS'
\end{split}
\end{align}
of the maximal pruned subgraph $\G (S', \X^f \times \Y)$ such that
\begin{align}
\label{main_inequality}
\max_{y \in \Y_s} \big\{ \lambda m_{T(\x_s,y;s)} + U_{\lambda} (x,y;s)  \big\} \leq  \min_{y \in \Y_s} \big\{ \lambda M_{T \left( \x_s,y;s \right)} + U_{\lambda} (x,y;s) \big\},
\end{align}
for any extremal action $\x_s$ as defined in (\ref{x_extremal_left}) and (\ref{x_extremal_right}).
\end{theorem}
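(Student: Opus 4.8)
The plan is to prove the theorem in four stages: the fixed point equations admit a unique solution on any fixed pruned subgraph; this solution is the outer limit of the tightening iteration, so no value outside $[m_s,M_s]$ can be enforced; the inequality (\ref{main_inequality}) is exactly the nonemptiness condition that singles out the maximal pruned subgraph; and the reverse inclusion is supplied by the strategy construction of Theorem \ref{THM_main_result}. For the first stage I would work on the finite-dimensional space of value vectors $v = (v_s)_{s \in \SSS'}$ indexed by the states of the current pruned subgraph and introduce the Bellman-type operators
\begin{align*}
(\F_m v)_s \defeq \min_{x \in \X'_s} \max_{y \in \Y_s} \big\{ \lambda v_{T(x,y;s)} + U_{\lambda}(x,y;s) \big\}, \qquad (\F_M v)_s \defeq \max_{x \in \X'_s} \min_{y \in \Y_s} \big\{ \lambda v_{T(x,y;s)} + U_{\lambda}(x,y;s) \big\}.
\end{align*}
Since $\min$ and $\max$ are nonexpansive in the sup norm and $\lambda$ enters linearly, both operators are $\lambda$-contractions on $(\R^{\SSS'}, \|\cdot\|_{\infty})$; as $\lambda < 1$, the Banach fixed point theorem yields unique fixed points $m = (m_s)$ and $M = (M_s)$ together with global geometric convergence of the iterates $m_{s,n} \to m_s$ and $M_{s,n} \to M_s$ from any starting estimate, in particular from $[m_0, M_0]$.

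For the second stage I would show by induction on $n$ that every enforceable value $v$ in state $s$ lies in $[m_{s,n}, M_{s,n}]$. The base case holds because each $\Phi_0$ is confined to $[m_0, M_0]$, so the convex combination $v = \sum_{x} \Phi_0(x;s) p_{x|s}$ is as well. For the inductive step, the one-step recursion (\ref{RECURSION}) together with the opposition agnostic condition (\ref{CONDITION_opposition_agnostic}) forces any admissible $\Phi_0(x;s)$ to satisfy $\lambda^{-1}[\Phi_0(x;s) - U_{\lambda}(x,y;s)] \in [m_{T(x,y;s),n}, M_{T(x,y;s),n}]$ for every $y$, hence to lie in the intersection (\ref{EQN_outline_3}); this gives the envelope bounds $\Phi_0(x;s) \geq \max_y \{\lambda m_{T(x,y;s),n} + U_{\lambda}(x,y;s)\}$ and $\Phi_0(x;s) \leq \min_y \{\lambda M_{T(x,y;s),n} + U_{\lambda}(x,y;s)\}$. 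The convex combination $v$ is then bounded below by $\min_{x \in \X'_s}$ of the left envelope, which is $m_{s,n+1}$, and above by $M_{s,n+1}$ symmetrically; letting $n \to \infty$ gives $v \in [m_s, M_s]$, the necessity direction.

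The third stage interprets the inequality (\ref{main_inequality}) as precisely the requirement that the intersection (\ref{EQN_outline_3}) be nonempty for the extremal actions $x^-_s$ and $x^+_s$ of (\ref{extremal_left_description}) and (\ref{extremal_right_description}): an action can host an admissible bounded $\Phi_0(x;s)$ only when its left envelope does not exceed its right envelope. When an extremal action violates this, it cannot belong to any consistent bounded assignment of future utilities and must be removed as in Definition \ref{prune}; deleting its edges may orphan a downstream state, forcing the alternating edge and state removals to continue, but finiteness of $\G(\SSS,\A,U)$ guarantees the process stabilizes at a well-defined maximal subgraph $\G(\SSS', \X^f \times \Y)$ on which (\ref{main_inequality}) holds at every extremal action.

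Finally, the reverse inclusion --- that each $v \in [m_s, M_s]$ is genuinely enforceable --- follows from the finite-memory strategy built in Theorem \ref{THM_main_result}, which mixes between $x^-_s$ and $x^+_s$ to realize any intermediate target, the inequality (\ref{main_inequality}) being exactly what makes this mixing simultaneously feasible at every state. I expect the main obstacle to be the clean interaction between the contraction argument and the pruning: one must confirm that each pruned subgraph still leaves every surviving state with an outgoing edge, so that the contraction and hence the unique fixed point persist, and that the values discarded by pruning were genuinely unenforceable rather than lost artifacts. Showing that the alternating removal terminates at a single maximal subgraph independent of the order of deletions is the delicate bookkeeping at the heart of the argument.
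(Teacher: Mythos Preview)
Your proposal is correct and covers the same logical ground as the paper, but your route to uniqueness of the fixed points is genuinely different and more economical. You invoke the Banach fixed point theorem: the operators $\F_m$ and $\F_M$ are $\lambda$-contractions in the sup norm because $\min$ and $\max$ are $1$-Lipschitz, so a unique fixed point and global geometric convergence follow immediately. The paper instead runs the iteration from both ends --- the enforceable sequence $(m_{s,n})$ initialized at $m_0$ and an ``excludable'' sequence $(\tilde m_{s,n})$ initialized at $M_0$ --- proves each is monotone and bounded, and then shows the two limits coincide via a chain of game-theoretic lemmas (Lemmas \ref{LEM_left_un_right_ex}--\ref{lem_rightmost}) that characterize left-unenforceable and right-excludable values. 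Your contraction argument buys brevity and avoids the compactness and induction arguments of Lemmas \ref{seq_compactness} and \ref{reverse}; the paper's approach buys an additional interpretation (Corollary after Theorem \ref{THM_uniqueness}) identifying the endpoint $m_s$ simultaneously as the supremum of left-unenforceable values and the infimum of right-excludable values, which your proof does not surface.

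On stages two through four you are aligned with the paper: the outer-approximation induction is exactly the tightening iteration described around equations (\ref{EQN_outline_1})--(\ref{EQN_outline_3}), the reading of (\ref{main_inequality}) as nonemptiness of the intersection is the paper's as well, and sufficiency is indeed supplied by the constructive strategy of Theorems \ref{THM_mem1_strategy} and \ref{THM_main_result}. Your closing caveat about the order-independence of pruning and the well-definedness of the maximal subgraph is apt; the paper addresses this through the lemma immediately following Theorem \ref{THM_characterization_opposition_agnostic} (monotonicity of the fixed points under graph union), which you would need to supply as well.
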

\noindent The use of the term maximal is justified by the following lemma. The proof of this and the main result is deferred to Section \ref{proof_main_results}.
\begin{lem}
Suppose $\left( [m_s ,M_s] \right)_{\SSS_1}$ and $\left( [m'_s ,M'_s] \right)_{ \SSS_2}$ are the solutions to the system of equations (\ref{BIG}) and (\ref{main_inequality}) for pruned game subgraphs $\G_1$ and $\G_2$. Then the solution $\left( [m_s^*,M_s^*]\right)_{ \SSS_1 \cup \SSS_2}$  to (\ref{BIG}) and (\ref{main_inequality}) on $\G_1 \cup \G_2$ satisfies $m_s^* \leq \min \{ m_s, m_s' \}$ and $M_s^* \geq \min \{ M_s, M_s' \}$ for all $s$ in $\SSS_1 \cup \SSS_2$ . 
\end{lem}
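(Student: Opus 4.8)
The plan is to treat the right-hand sides of the fixed point equations \eqref{BIG} as order-preserving $\lambda$-contractions and to exploit the fact that enlarging the autocrat's action set shrinks the lower value $m_s$ and grows the upper value $M_s$. Write $\SSS^\cup = \SSS_1 \cup \SSS_2$ and let $\X^\cup_s = \X^{(1)}_s \cup \X^{(2)}_s$ be the autocrat's actions in state $s$ of the union graph, with the convention $\X^{(i)}_s = \emptyset$ when $s \notin \SSS_i$. First I would record that $\G_1 \cup \G_2$ is again pruned in the sense of Definition \ref{prune}: every state of $\SSS^\cup$ inherits an outgoing edge from whichever $\G_i$ contains it, and if $(x,y)$ is absent from the union then it is absent from both $\G_i$, so each $(x,y')$ is absent from both and hence from the union. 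Consequently the operators
\begin{align*}
(F_m v)_s = \min_{x \in \X^\cup_s} \max_{y \in \Y_s} \big\{ \lambda v_{T(x,y;s)} + U_\lambda(x,y;s) \big\}, \qquad (F_M v)_s = \max_{x \in \X^\cup_s} \min_{y \in \Y_s} \big\{ \lambda v_{T(x,y;s)} + U_\lambda(x,y;s) \big\}
\end{align*}
are well defined on $\R^{\SSS^\cup}$, are monotone in $v$, and are $\lambda$-contractions in the sup norm; thus $m^*$ and $M^*$ are their unique fixed points, which is the global convergence already invoked for Algorithm \ref{ALGO}.

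The heart of the argument is a single supersolution estimate. Define $\tilde m \in \R^{\SSS^\cup}$ by $\tilde m_s = \min\{m_s, m'_s\}$, using the convention $m_s = +\infty$ for $s \notin \SSS_1$ and $m'_s = +\infty$ for $s \notin \SSS_2$, so that $\tilde m \leq m$ on $\SSS_1$ and $\tilde m \leq m'$ on $\SSS_2$. The key structural point is that an edge of $\G_i$ has both endpoints in $\SSS_i$, so for $x \in \X^{(i)}_s$ the child $T(x,y;s)$ lies in $\SSS_i$, where $\tilde m_{T(x,y;s)} \leq m^{(i)}_{T(x,y;s)}$. Restricting the outer minimum to the actions of $\G_i$ and using monotonicity, I would obtain for every $s \in \SSS_i$
\begin{align*}
(F_m \tilde m)_s \leq \min_{x \in \X^{(i)}_s} \max_{y \in \Y_s} \big\{ \lambda \tilde m_{T(x,y;s)} + U_\lambda(x,y;s) \big\} \leq \min_{x \in \X^{(i)}_s} \max_{y \in \Y_s} \big\{ \lambda m^{(i)}_{T(x,y;s)} + U_\lambda(x,y;s) \big\} = m^{(i)}_s,
\end{align*}
where $m^{(1)} = m$, $m^{(2)} = m'$ solve \eqref{BIG} on $\G_1,\G_2$; this is legitimate because $\Y_s$ is untouched by pruning. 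Taking $i = 1$ and $i = 2$ gives $(F_m \tilde m)_s \leq \tilde m_s$ on all of $\SSS^\cup$, i.e. $\tilde m$ is a supersolution. Since $F_m$ is a monotone contraction, the iterates $F_m^k \tilde m$ decrease to $m^*$, whence $m^* \leq \tilde m$, that is $m^*_s \leq \min\{m_s, m'_s\}$ for all $s$. The dual computation with $\tilde M_s = \max\{M_s, M'_s\}$ (convention $-\infty$ off the relevant state set) shows $\tilde M$ is a subsolution $F_M \tilde M \geq \tilde M$, so the increasing iterates give $M^* \geq \tilde M$, i.e. $M^*_s \geq \max\{M_s, M'_s\}$; in particular $M^*_s \geq \min\{M_s, M'_s\}$, which is the stated bound (the sharp inequality being with $\max$).

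The step I expect to require the most care is reconciling the fixed point equations \eqref{BIG} with the side constraint \eqref{main_inequality}. The supersolution and subsolution estimates above bound the \eqref{BIG}-solution on the pruned graph $\G_1 \cup \G_2$ irrespective of \eqref{main_inequality}, and this monotonicity in the action set is the entire substance of the maximality claim. What is not automatic is that the wider intervals still correspond to enforceable values: enlarging $\X^\cup_s$ could in principle violate \eqref{main_inequality} at some extremal action $\x_s$, so the union need not itself be the maximal valid subgraph. I would therefore state the comparison at the level of the \eqref{BIG}-solution, as the lemma does, and defer the construction of the maximal valid subgraph to the re-pruning analysis preceding Algorithm \ref{ALGO}. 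The governing principle there is identical to the one used above, namely that a minimum taken over a larger set decreases while a maximum over a larger set increases, so the same monotonicity that proves the present lemma also controls how $m_s$ and $M_s$ move as offending edges are deleted.
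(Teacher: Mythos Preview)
Your proposal is correct. The paper defers this lemma's proof to Section \ref{proof_main_results} but never actually writes it out; the ingredients it supplies are the monotonicity lemma in Section \ref{SECTION_algo} (``If $\X_s' \subset \X_s$ then $m_s \leq m'_s$ and $M'_s \leq M_s$'') together with the global convergence of the iteration (\ref{m_recursion})--(\ref{M_recursion}). Your supersolution/subsolution argument is exactly the clean way to package those two facts, and your $\pm\infty$ conventions and the observation that edges of $\G_i$ stay inside $\SSS_i$ handle the possibly different state sets more carefully than the paper's one-line monotonicity lemma does on its own. Your remark that the sharp bound for $M^*$ is $M_s^* \geq \max\{M_s,M_s'\}$, with the stated $\min$ following a fortiori, is also right and worth flagging; the paper's statement as written is weaker than what the argument actually yields.
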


The opposition agnostic autocratic strategies can be written down in terms of the extremal enforceable values of Theorem \ref{THM_characterization_opposition_agnostic}. These strategies require only the left and right \textbf{extremal actions}, $x^-_s$ and $x^+_s$ in each state. They are defined for some $\X' \subset \X$ as
\begin{align}
\label{x_extremal_left}
x^-_s & \defeq \argmin_{x \in \X'_s} \max_{y \in \Y_s} \big\{ \lambda m_{T (x,y;s)} + U_{\lambda} (x,y;s) \big\}
\end{align}
and
\begin{align}
\label{x_extremal_right}
x^+_s & \defeq \argmax_{x \in \X'_s} \min_{y \in \Y_s} \big\{ \lambda M_{T (x,y;s)} + U_{\lambda} (x,y;s) \big\} .
\end{align}
When there are no cornered states (that is, $x^+_s \neq x^-_s$ for each $s$), memory-$1$ strategies can be constructed. Recall the condensed notation (\ref{cond_prob_succinct}) for the conditional probability. The first round strategy is the solution to
\begin{align*}
M_s P_0 \big[ x^+_{s_0} \big|  s_0 \big] + m_s P_0 \big[ x^-_{s_0} \big|  s_0 \big] = m .
\end{align*}
In subsequent round $i+1$, the memory-$1$ strategy when in state $s$ can be found as follows. Suppose that in round $i$ the state was parent $s_p$ of $s$, the joint action was $(x^+_p,y_p)$, and $\Phi_i (x^+_p,y_p;s_p,h_{<i}) = M_{s_p}$. Since $\X'=\X_f$, the inequality (\ref{main_inequality}) implies that 
\begin{align*}
m_s \leq \lambda^{-1} \left[ M_{s_p} - U_{\lambda} \left(x^+_p,y_p;s_p \right) \right] \leq M_s.
\end{align*}
Thus, given the history $(x^+_p,y_p;s_p,h_{<i})$ in state $s$, the behavioural strategy in round $i+1$ can be found by solving
\begin{align*}
M_s P_{i+1} \big[ x^+_{s} \big| (s, x^+_p,y_p,s_p)  \big] + m_s P_{i+1} \big[  x^-_{s} \big|  (s, x^+_p,y,s_p)  \big] = \lambda^{-1} \left[ M_{s_p} - U_{\lambda} \left(x^+_p,y_p;s_p \right) \right] .
\end{align*}
This is illustrated in Figure \ref{DGRAM_mem1_strategy}. A similar formula holds for parent states and joint action pairs that satisfy $s =T \big( x^-_p,y_p;s_p \big)$. This proves the following.

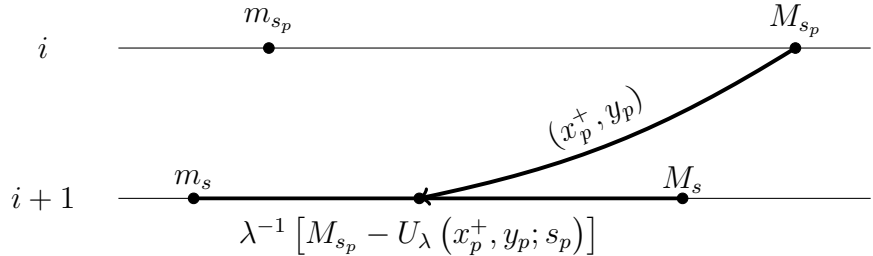
\begin{figure}[h!]
   \centering   
   \begin{tikzpicture}[decoration=brace]


\draw(0,0)--(10,0);

\filldraw [black] (2,0) circle (2pt);
\filldraw [black] (9,0) circle (2pt);

\node [fill=none] at (2,2ex) (2) {$m_{s_p}$};

\node [fill=none] at (9,2ex) (3) {$M_{s_p}$};

\node [fill=none] at (-1,0) (4) {$i$};


\node [fill=none] at (-1,-2) (8) {$i+1$};

\draw(0,-2)--(10,-2);
\draw[line width=0.5mm] (1,-2) --  (7.5,-2) ;

\filldraw [black] (4,-2) circle (2pt);
\filldraw [black] (1,-2) circle (2pt);
\filldraw [black] (7.5,-2) circle (2pt);

\node [fill=none] at (4,-2.5) (5) {$\lambda^{-1} \left[ M_{s_p} - U_{\lambda} \left(x^+_p,y_p;s_p \right) \right]$};

\node [fill=none] at (1,-1.75) (6) {$m_{s}$};

\node [fill=none] at (7.5,-1.75) (7) {$M_{s}$};


\draw[->,line width=1.5pt,bend left=10] (9,0) to node[pos=0.35,sloped, above left] {$(x_p^+,y_p)$} (4,-2);

 \end{tikzpicture}
   \caption{one step}
   \label{DGRAM_mem1_strategy}
\end{figure}

\begin{theorem}[Memory-$1$ autocratic strategies]
\label{THM_mem1_strategy}
Suppose the maximal admissible subgraph $\G(\SSS,\X'\times \Y)$ of Theorem \ref{THM_characterization_opposition_agnostic} satisfies $ \left| \X'_s \right| \geq 2$ for all $s \in \SSS'$. Then, for a game with initial state $s_0$, there exists a memory-$1$ autocratic strategy for each element of $m \in [m_{s_0},M_{s_0}]$. Furthermore, the strategy need only play actions $x^+_s$ and $x^-_s$ in each state. 
\end{theorem}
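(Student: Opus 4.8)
The plan is to write the enforcing strategy down explicitly as a stationary memory-$1$ rule supported on the two extremal actions $x^+_s,x^-_s$, and then to certify, through the Bellman recursion of Lemma~\ref{lem_phi_mn} and Theorem~\ref{THM_Phi_condition}, that it is opposition agnostic and pins $\E\U_\lambda$ to the target $m$. In round $0$ I would choose $P_0[x^+_{s_0}\mid s_0]$ and $P_0[x^-_{s_0}\mid s_0]$ to be the convex weights solving $M_{s_0}P_0[x^+_{s_0}\mid s_0]+m_{s_0}P_0[x^-_{s_0}\mid s_0]=m$, which is possible exactly because $m\in[m_{s_0},M_{s_0}]$ and, the state being non-cornered under the hypothesis $|\X'_s|\ge 2$, the extremal actions $x^+_{s_0}\neq x^-_{s_0}$ are distinct. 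In any later round, upon entering a state $s$ from a parent $s_p$ along a joint action $(x^+_{s_p},y_p)$, the autocrat mixes $x^+_s$ and $x^-_s$ so that the mean continuation value equals the target $v=\lambda^{-1}[M_{s_p}-U_\lambda(x^+_{s_p},y_p;s_p)]$ dictated by \eqref{RECURSION}; on a parent action $(x^-_{s_p},y_p)$ the target is instead $v=\lambda^{-1}[m_{s_p}-U_\lambda(x^-_{s_p},y_p;s_p)]$. Because this reads off only the previous action--state pair $(x^\pm_{s_p},y_p,s_p)$, the rule is memory-$1$ and plays only $x^+_s,x^-_s$, as required.

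Next I would verify the rule is well defined, i.e. that every target $v$ lies in $[m_s,M_s]$. On the ``$+$'' branch with $s=T(x^+_{s_p},y_p;s_p)$, the bound $v\le M_s$ follows from \eqref{eqn_M_s}: since $x^+_{s_p}$ attains the $\argmax$ in \eqref{x_extremal_right}, $M_{s_p}=\min_{y}\{\lambda M_{T(x^+_{s_p},y;s_p)}+U_\lambda(x^+_{s_p},y;s_p)\}\le\lambda M_s+U_\lambda(x^+_{s_p},y_p;s_p)$. The bound $v\ge m_s$ is exactly the inequality \eqref{main_inequality} evaluated at the extremal action $x^+_{s_p}$, whose right-hand side is $M_{s_p}$ and whose left-hand side dominates $\lambda m_s+U_\lambda(x^+_{s_p},y_p;s_p)$. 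The ``$-$'' branch is symmetric, using \eqref{eqn_m_s} for the lower bound and \eqref{main_inequality} for the upper bound. This is precisely where the admissibility condition \eqref{main_inequality} of the maximal subgraph is consumed.

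The crux, and the step I expect to be the main obstacle, is that the construction looks circular: the round-$(n{+}1)$ mix is chosen so that $\Phi_n(x^+_s,y;s,h_{<n})=M_s$ and $\Phi_n(x^-_s,y;s,h_{<n})=m_s$, yet each $\Phi_n$ is itself determined by the strategy from round $n{+}1$ onward and a priori depends on the opponent's moves. I would resolve this by a uniqueness argument for the recursion \eqref{onestep_beta}. Fix an arbitrary opponent strategy $\q$, and define a candidate $\widetilde\Phi$ by decreeing $\widetilde\Phi_n(x^+_s,\cdot\,;s,h)=M_s$ and $\widetilde\Phi_n(x^-_s,\cdot\,;s,h)=m_s$ on all $\p$-accessible configurations (extending to non-extremal actions via the recursion itself). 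Direct substitution shows $\widetilde\Phi$ satisfies \eqref{onestep_beta}: on the ``$+$'' branch the continuation mean is $v$ by construction, the $y'$-sum against $q_{y'}$ collapses because $\widetilde\Phi_{n+1}$ does not depend on the opponent's action, and the right-hand side reduces to $U_\lambda(x^+_s,y;s)+\lambda v=M_s$. Since the true $\Phi$ also solves \eqref{onestep_beta} and both are bounded, setting $D_n$ to be the supremum of $|\Phi_n-\widetilde\Phi_n|$ over accessible extremal configurations and all opponent actions gives $D_n\le\lambda D_{n+1}$; iterating against $\sup_n D_n<\infty$ yields $D_n\le\lambda^k\sup_m D_m\to 0$, so $\Phi\equiv\widetilde\Phi$. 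This simultaneously shows that $\Phi$ is independent of the opponent's current action and that the extremal continuation values $M_s,m_s$ are genuinely attained.

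With $\Phi_n(x^+_s;s,h)=M_s$ and $\Phi_n(x^-_s;s,h)=m_s$ in hand, condition \eqref{Phi_condition} holds on every $\p$-accessible history, because $\sum_{x}\Phi_n(x,y;s,h)p_{x\mid s,h}=M_sP_n[x^+_s]+m_sP_n[x^-_s]$ is visibly independent of $y$. Theorem~\ref{THM_Phi_condition} then guarantees that $\p$ is autocratic and that $\E\U_\lambda=M_{s_0}P_0[x^+_{s_0}\mid s_0]+m_{s_0}P_0[x^-_{s_0}\mid s_0]=m$. Since Theorem~\ref{THM_Phi_condition} constrains only accessible histories, the behaviour of $\p$ on inaccessible ones may be fixed arbitrarily, and the strategy remains memory-$1$ playing solely $x^+_s,x^-_s$, completing the argument.
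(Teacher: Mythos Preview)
Your construction and verification mirror the paper's proof exactly: the same two-point mixture on $x^+_s,x^-_s$, the same round-$0$ equation, and the same use of \eqref{main_inequality} together with the fixed-point equations \eqref{eqn_m_s}--\eqref{eqn_M_s} to trap the continuation target $\lambda^{-1}[M_{s_p}-U_\lambda(x^+_p,y_p;s_p)]$ in $[m_s,M_s]$. The one addition is your contraction argument resolving the apparent circularity between $\p$ and $\Phi$; the paper simply posits $\Phi_i(x^+_p,y_p;s_p,h_{<i})=M_{s_p}$ and proceeds, so your treatment is in fact more complete on this point.
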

\noindent Unfortunately, when there are chains of cornered states, autocratic strategies require memory greater than the length of the longest chain.   
\begin{proof}[Proof of Theorem \ref{THM_main_result}]
The proof proceeds by construction. If state $s$ is cornered, the autocrat must play a single action. If state $s$ has uncornered parents, we use the formulas developed in the proof of Theorem \ref{THM_mem1_strategy}. Finally, it remains to deal with uncornered states which have a cornered parent.

Suppose that in round $n+l+1$ the autocrat finds themselves in the uncornered state $s$ with a history 
\begin{align}
\label{history_THM_main_result}
h_{<n+l+1} = \left( (a_l;s_l) ,(a_{l-1};s_{l-1}), \ldots , (a_1;s_1), (x_r^+,y_r;r), h_{<n} \right) 
\end{align}
of cornered states $s_1$ through $s_l$ and the uncornered state $r$. That is, the actions $x_1$ through $x_l$ are extremal actions. By assumption $l \leq L$. Furthermore, suppose 
\begin{align*}
\Phi_{n} \left( x^+_r,y_r;s_r,h_{<i} \right) = M_{r}.
\end{align*}
Since $\X'=\X_f$, the inequality (\ref{main_inequality}) implies that
\begin{align*}
\Phi_{n+i} \left( a_{i};s_{i},  \ldots ,a_1,s_1, x_r^+,y_r,s_r,h_{<i} \right) = \lambda^{-i}M_r - \lambda^{-i} U(x_r^+,y_r;r) - \sum_{k=1}^{i-1} \lambda^{-i+k} U_{\lambda}(a_k;s_k) 
\end{align*}
lies in $[ m_{s_i}, M_{s_i}]$ for $1 \leq i \leq l$ and that
\begin{align*}
 m_s \leq \lambda^{-1} \left[ \Phi_{n+l} \left( a_{l};s_{l},  \ldots ,a_1,s_1, x_r^+,y_r,s_r,h_{<i} \right) - U_{\lambda}(a_l;s_l) \right] \leq M_s .
\end{align*}
Thus, given the history (\ref{history_THM_main_result}) in state $s$, we can solve for the memory-$(L+1)$ behavioural strategy in round $n+l+1$ by equating
\begin{align*}
M_{s} P_{n+l+1} \big[ x_s^+ \big| a_l,s_l, \ldots a_1,s_1,x^+_r,y_r, r \big] + m_s P_{n+l+1} \big[ x_s^- \big| a_l,s_l, \ldots a_1,s_1,x^+_r,y_r, r  \big] 
\end{align*}
with the above.
\end{proof}

\subsection{The Characterization of Enforceable Values}
\label{proof_main_results}

For a fixed game graph $\G(\SSS, \A,U)$, the left (and right) opponent agnostic enforceable values can be characterized as the solutions to a system of fixed point equations (FPEs).

\begin{theorem}[Enforceable values] 
\label{THM_enforceable}
Suppose the autocrat can only play strategies with support $\X' \subset \X$. If the enforceable values $[m_s,M_s]$ of $\E \U_{\lambda}$ exist on the game graph $~{(\SSS',\X' \times \Y,U)}$, they are the unique solution to the fixed point equations
\begin{align}
\label{EQN_left_FPE}
m_{s} \defeq \min_{x \in \X'_s} \max_{y \in \Y_s} \big\{ \lambda m_{T(x,y;s)} + U_{\lambda} (x,y;s)  \big\}  \quad \forall \: s \in \SSS'.
\end{align}
and
\begin{align}
\label{EQN_right_FPE}
M_{s} \defeq \max_{x \in \X'_s} \min_{y \in \Y_s} \big\{ \lambda M_{T(x,y;s)} + U_{\lambda} (x,y;s)  \big\} \quad \forall \: s \in \SSS'.
\end{align}
and the inequalities
\begin{align}
\max_{y \in \Y_s} \big\{ \lambda m_{T(\x_s,y;s)} + U_{\lambda} (x,y;s)  \big\} \leq  \min_{y \in \Y_s} \big\{ \lambda M_{T \left( \x_s,y;s \right)} + U_{\lambda} (x,y;s) \big\},
\end{align}
for all extremal actions $\x_s$ (see (\ref{x_extremal_left}) and (\ref{x_extremal_right})) .
\end{theorem}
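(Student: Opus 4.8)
The plan is to identify, for each state $s$, the set $V_s \subseteq [m_0,M_0]$ of values of $\E \U_\lambda$ that the autocrat can enforce, and to show $V_s = [m_s,M_s]$ where the endpoints solve the stated fixed point equations. The starting point is Theorem \ref{THM_Phi_condition} together with the opposition agnostic assumption, which collapses the one-step recursion (\ref{RECURSION}) to a relation among the continuation values $\phi(x) \defeq \Phi_0(x;s)$. First I would establish the recursive characterization: a value $v$ is enforceable in state $s$ exactly when there is a mixed strategy $p_{\cdot|s}$ supported in $\X'_s$ and numbers $\phi(x)$ with $v = \sum_{x} \phi(x)\, p_{x|s}$ such that, for every opponent response $y \in \Y_s$, the induced continuation $\lambda^{-1}[\phi(x) - U_\lambda(x,y;s)]$ is itself enforceable in the child $T(x,y;s)$. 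Equivalently, $\phi(x)$ must lie in $\bigcap_{y \in \Y_s}\big(\lambda V_{T(x,y;s)} + U_\lambda(x,y;s)\big)$. This is the self-consistency condition that the family $(V_s)$ must satisfy.

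Assuming inductively that each child set is the interval $[m_{s'},M_{s'}]$, the intersection over $y$ becomes the interval with endpoints $\max_y\{\lambda m_{T(x,y;s)} + U_\lambda\}$ and $\min_y\{\lambda M_{T(x,y;s)} + U_\lambda\}$; mixing over the admissible actions $x$ then produces exactly the interval whose endpoints are the right-hand sides of (\ref{EQN_left_FPE}) and (\ref{EQN_right_FPE}). This both explains the $\min_x\max_y$ and $\max_x\min_y$ structure and shows that, if the $V_s$ are intervals at all, their endpoints solve the fixed point equations. To make this rigorous and to obtain uniqueness, I would run the outer-approximation scheme of the outline: set $[m_{s,0},M_{s,0}] = [m_0,M_0]$, which contains $V_s$ because each $\Phi_n$ is a convex combination of the single-round utilities $U(a;s)$ and hence lies in $[m_0,M_0]$. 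An induction using (\ref{RECURSION}) shows that if $V_{s'} \subseteq [m_{s',k},M_{s',k}]$ for every child, then $V_s \subseteq [m_{s,k+1},M_{s,k+1}]$, where the updated endpoints are the same $\min_x\max_y$ and $\max_x\min_y$ formulas applied to the $k$-th estimates.

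Writing these updates as operators $\F_m,\F_M$ on $\R^{\SSS'}$, the fact that $\min$ and $\max$ are $1$-Lipschitz in the sup norm and that the dependence on the estimates is scaled by $\lambda < 1$ makes each operator a $\lambda$-contraction; Banach's theorem then gives a unique fixed point and global convergence $m_{s,k} \to m_s$ and $M_{s,k} \to M_s$. Passing to the limit yields $V_s \subseteq [m_s,M_s]$, the necessity direction, and the contraction delivers uniqueness of the solution to the fixed point equations.

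The remaining, and I expect hardest, step is the converse together with the role of the inequality (\ref{main_inequality}). The endpoint $m_s$ is attained only if the action $x^-_s$ realizing the outer minimum in (\ref{EQN_left_FPE}) admits a consistent continuation, i.e. the interval $\bigcap_y\big(\lambda[m_{T(x^-_s,y;s)},M_{T(x^-_s,y;s)}] + U_\lambda\big)$ is nonempty; this nonemptiness is precisely the inequality (\ref{main_inequality}) at the extremal action $x^-_s$, and symmetrically for $M_s$ at $x^+_s$. I would show this inequality is necessary, since its failure makes the claimed endpoint unattainable and forces a strictly smaller enforceable interval, and that when it holds the endpoints are genuinely enforceable by exhibiting the strategies, which is exactly the finite-memory construction of Theorem \ref{THM_main_result} that maintains the extremal continuation value $\phi(x^\pm_s)$ inside $[m_{s'},M_{s'}]$ at every subsequent round. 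The delicacy is that this in-range invariant must hold simultaneously across all states and for all rounds, so that the coupling of the fixed-point values through the transition function $T$ does not break it; the inequality (\ref{main_inequality}) at every extremal action is exactly what guarantees the invariant is preserved.
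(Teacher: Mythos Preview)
Your proposal is correct and reaches the same conclusions, but your route to uniqueness differs from the paper's. You observe that the update operators $\F_m$ and $\F_M$ are $\lambda$-contractions in the sup norm on $\R^{\SSS'}$ and invoke Banach's fixed point theorem; this is clean, standard Bellman-equation reasoning and immediately gives both existence and uniqueness together with the geometric rate $\lambda^n$. The paper instead runs the iteration from both ends, defining a second sequence $(\tilde m_{s,n})$ initialized at $M_0$ rather than $m_0$, and shows the two monotone limits coincide. Its uniqueness proof (Theorem~\ref{THM_uniqueness}) is game-theoretic rather than metric: it introduces the notions of \emph{left unenforceable} and \emph{right excludable} values and, through Lemmas~\ref{LEM_left_un_right_ex}--\ref{lem_rightmost}, argues that two distinct fixed points would force some $m_r$ to be simultaneously left unenforceable and not, a contradiction obtained by exhibiting explicit opponent replies that drive $\Phi_n$ outside $[m_0,M_0]$.

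What each approach buys: your contraction argument is shorter and delivers global convergence and the error bound for free (the paper only recovers the $\lambda^n$ rate later, in the runtime lemma of Section~\ref{SECTION_algo}). The paper's approach, on the other hand, yields extra structural information you do not obtain: the identification of $m_s$ as the supremum of left-unenforceable values and the infimum of right-excludable values, and the concrete unenforcing replies $y^\pm_s(x)$ in (\ref{y_extremal_left})--(\ref{y_extremal_right}), which make transparent \emph{why} a proposed value below $m_s$ cannot be held by any autocrat strategy. Your treatment of the outer-approximation containment $V_s\subseteq[m_{s,k},M_{s,k}]$, of the role of inequality~(\ref{main_inequality}) as nonemptiness of the intersected continuation interval, and of sufficiency via the finite-memory construction of Theorem~\ref{THM_main_result}, all match the paper's line of argument.
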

\noindent A first attempt the proof involves an iterative but unstable approximation of the the endpoints. Defining
\begin{align}
\label{U_bounds}
m_{0} \defeq \min_{a,s} \big\{ U (a;s) \big\} \quad \mbox{and} \quad M_{0} \defeq \max_{a,s} \big\{ U (a;s) \big\},
\end{align}
we take the initial exterior estimate of the enforceable values in each state to be $\left[m_{s,0}, M_{s,0} \right] \defeq \left[ m_0,M_0 \right]$. A proposed value of the expectation $v$ must satisfy the first step of the recursive Bellman equation (\ref{eqn_outline_bellman}). That is,
\begin{align*}
v = \sum_{x \in \X_{s}} \Phi_{0} \left( x,y ;s  \right) p_{x| s} .
\end{align*}
The potential value $v$ is discarded if, for any $\Phi_0$ and $p_{\cdot|s}$ satisfying the above, there exists a joint action $(x,y)$ in $s$ such that
\begin{align*}
\lambda^{-1} \left[\Phi_{0} \left( x,y ;s  \right) - U_{\lambda} (x,y;s) \right] \notin \big[m_{T(x,y;s),0},M_{T(x,y;s),0} \big].
\end{align*}
Using the opponent agnostic assumption, $\Phi_0(x,y;s) \equiv \Phi_0 (x;s)$ , the above is equivalent to
\begin{align*}
\Phi_0 (x;s) \notin \lambda \big[m_{T(x,y;s),0},M_{T(x,y;s),0} \big] + U_{\lambda} (x,y;s) \quad \forall \; y \in \Y_s ,
\end{align*}
which can be rewritten as
\begin{align*}
\max_{y \in \Y_s} \left\{ \lambda m_{T(x,y;s),0} + U_{\lambda} (x,y;s) \right\} \leq \Phi_0 (x;s) \leq \min_{y \in \Y_s} \left\{ \lambda M_{T(x,y;s),0} + U_{\lambda} (x,y;s) \right\}.
\end{align*}
Letting $ \X_{s,1} \subset \X_s $ be all $x$ such that the above inequality holds, any enforceable value must then lie in
\begin{align*}
\left[ m_{s,1},M_{s,1} \right] \defeq \left[ \min_{x \in \X_{s,1}} \max_{y \in \Y_s} \left\{ \lambda m_{T(x,y;s),0} + U_{\lambda} (x,y;s) \right\}, \max_{x \in X_{s,1 }}  \min_{y \in \Y_s} \left\{ \lambda M_{T(x,y;s),0} + U_{\lambda} (x,y;s) \right\} \right]
\end{align*}
This procedure is iterated, giving us the recursive formulas for the autocrat's action sets
\begin{align*}
\X_{s,n+1} \defeq \bigg\{ x \in \X_{s,n} \bigg| \max_{y \in \Y_s} \left\{ \lambda m_{T(x,y;s),n} + U_{\lambda} (x,y;s) \right\}  \leq \min_{y \in \Y_s} \left\{ \lambda M_{T(x,y;s),n} + U_{\lambda} (x,y;s) \right\} \bigg\}
\end{align*}
and the extremal values
\begin{align}
\label{m_recursion_unstable}
m_{s,n+1} \defeq \min_{x \in \X_{s,n}} \max_{y \in \Y_s} \big\{ \lambda m_{T(x,y;s),n} + U_{\lambda} (x,y,s)  \big\}
\end{align}
and 
\begin{align}
\label{M_recursion_unstable}
M_{s,n+1} \defeq \max_{x \in \X_{s,n}} \min_{y \in \Y_s} \big\{ \lambda M_{T(x,y;s),n} + U_{\lambda} (x,y,s)  \big\}.
\end{align}
Note that the above sequences are monotonic, when they exist.

Note $\X_{s,n+1} \subset \X_{s,n}$. Taking $n \rightarrow \infty$ we can define
\begin{align*}
m_s \defeq \lim_{n \rightarrow \infty} m_{s,n} \quad \mbox{and} \quad M_s \defeq \lim_{n \rightarrow \infty} M_{s,n} \quad \mbox{and} \quad \X^f_s \defeq \lim_{n \rightarrow \infty} \X_{s,n} = \cap_{n \geq 0} \X_{s,n}.
\end{align*}
For each state $s$ we get either an empty set $\X_s^f$ and no enforceable values or a nonempty set autocrat actions
\begin{align*}
\X^f_s = \bigg\{ x \in \X_s \bigg| 
\max_{y \in \Y_s} \left\{ \lambda m_{T(x,y;s)} + U_{\lambda} (x,y;s) \right\}  \leq \min_{y \in \Y_s} \left\{ \lambda M_{T(x,y;s)} + U_{\lambda} (x,y;s) \right\}  \bigg\}
\end{align*}
and a nonempty interval
\begin{align*}
[m_{s},M_s] = \left[ \min_{x \in \X^f_{s}} \max_{y \in \Y_s} \big\{ \lambda m_{T(x,y;s)} + U_{\lambda} (x,y,s)  \big\},
 \max_{x \in \X^f_{s}} \min_{y \in \Y_s} \big\{ \lambda M_{T(x,y;s)} + U_{\lambda} (x,y,s)  \big\} \right]
\end{align*}
of enforceable values.

%


As a stable algorithm is ideal, we must contend with the fact that the above process is irreversible. In particular, once an action $x$ is excluded from some $\X_{s,N}$, it is excluded from $\X_{s,n}$ for all $n \geq N$. Instead we formulate an algorithm with a globally convergent step, detailed below.

We will first recursively define a sequence of game subgraphs $(\SSS_i, \X^i \times \Y,U)$ such that $\SSS_{i+1} \subset \SSS_i$ and $\X^{i+1} \subset \X^i$ in the following manner. Let $\SSS_0 = \SSS$ and $\X^0 = \X$. For any particular game subgraph $(\SSS_i, \X^i \times \Y,U)$ we can recursively define monotonic sequences
\begin{align}
\label{m_recursion}
m_{s,n+1} \defeq \min_{x \in \X^i_s} \max_{y \in \Y_s} \big\{ \lambda m_{T(x,y;s),n} + U_{\lambda} (x,y;s)  \big\}  \quad \forall \: s \in \SSS_i.
\end{align}
and
\begin{align}
\label{M_recursion}
M_{s,n+1} \defeq \max_{x \in \X^i_s} \min_{y \in \Y_s} \big\{ \lambda M_{T(x,y;s),n} + U_{\lambda} (x,y;s)  \big\} \quad \forall \: s \in \SSS_i 
\end{align}
with $M_{s,0}=M_0$ and $m_{s,0}= m_0$. We will show in Section \ref{SECTION_algo} that the above are globally convergent. Furthermore, these sequences are defined for all $n$, as opposed to those defined above in (\ref{m_recursion_unstable}) and (\ref{M_recursion_unstable}). Monotonicity implies the limits $m_s(\X^i)$ and $M_s(\X^i)$ exist. But note the inequality
\begin{align}
\label{figure_inequality}
\max_{y \in \Y_s} \big\{ \lambda m_{T(\x_s,y;s)} + U_{\lambda} (\x_s,y;s)  \big\} \leq \min_{y \in \Y_s} \big\{ \lambda M_{T(\x_s,y;s)} + U_{\lambda} (\x_s,y;s)  \big\} 
\end{align} 
may not hold for some extremal action $\x_s$. If this is the case, then regardless of the value that $\Phi_n(\x_s;s)$ takes there exists an opponent action $y$ such that \begin{align*}
\lambda^{-1} \left[ \Phi_n(\x_s;s) - U_{\lambda}(\x_s,y;s) \right] \notin \left[ m_{T(\x_s,y;s)}, M_{T(\x_s,y;s)} \right] .
\end{align*} 
An example of this is illustrated below in Figure \ref{DGRAM_neccessity_of_inequality}.

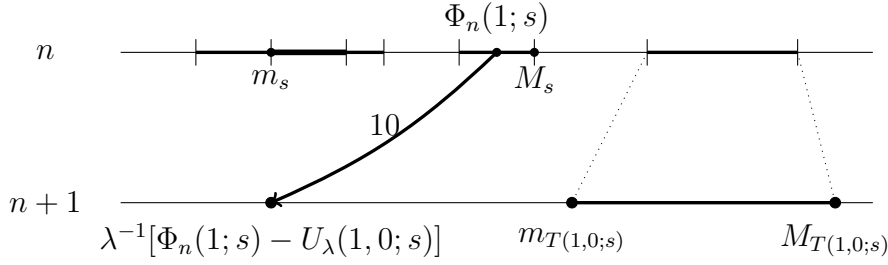
\begin{figure}[h!]
   \centering   
   \begin{tikzpicture}[decoration=brace]


\draw(0,0)--(10,0);

\draw[line width=0.45mm](1,0)--(3,0);
\draw[line width=0.45mm](2,0)--(3.5,0);
\draw[line width=0.75mm](2,0)--(3,0);

\draw[line width=0.45mm](4.5,0)--(5.5,0);
\draw[line width=0.45mm](7,0)--(9,0);

\filldraw [black] (2,0) circle (1.5pt);
\filldraw [black] (5.5,0) circle (1.5pt);

\filldraw [black] (5,0) circle (1.5pt);

\node [fill=none] at (2,-2ex) (2) {$m_s$};
\node [fill=none] at (5.5,-2ex) (2) {$M_s$};
\node [fill=none] at (5,2.5ex) (2) {$\Phi_n(1;s)$};

\node [fill=none] at (-1,0) (4) {$n$};

\draw(1,5pt)--(1,-5pt) node[below] {};
\draw(2,5pt)--(2,-5pt) node[below] {};
\draw(3,5pt)--(3,-5pt) node[below] {};
\draw(3.5,5pt)--(3.5,-5pt) node[below] {};

\draw(4.5,5pt)--(4.5,-5pt) node[below] {};
\draw(5.5,5pt)--(5.5,-5pt) node[below] {};
\draw(7,5pt)--(7,-5pt) node[below] {};
\draw(9,5pt)--(9,-5pt) node[below] {};


\draw(0,-2)--(10,-2);
\draw[line width=0.45mm] (6,-2) --  (9.5,-2) ;

\node [fill=none] at (-1,-2) (4) {$n+1$};

\filldraw [black] (6,-2) circle (2pt);
\filldraw [black] (2,-2) circle (2pt);
\filldraw [black] (9.5,-2) circle (2pt);

\node [fill=none] at (6,-2.5) (6) {$m_{T(1,0;s)}$};

\node [fill=none] at (9.5,-2.5) (7) {$M_{T(1,0;s)}$};

\node [fill=none] at (2,-2.5) (8) {$\lambda^{-1} [ \Phi_n (1;s) - U_{\lambda}(1,0;s)]$};

%
%
%
%
%
%
%


\draw[dotted] (7,0) to (6,-2);
\draw[dotted] (9,0) to (9.5,-2);

\draw[->,line width=1.25pt,bend left=10] (5,0) to node[pos=0.4, left] {$10$}(2,-2);

 \end{tikzpicture}
   \caption{Inequality (\ref{figure_inequality}) violated for extremal action $x_s^+=1$ and opponent action $y=0$.}
   \label{DGRAM_neccessity_of_inequality}
\end{figure}

If the inequality does not hold for some extremal actions $\x_s$, we remove the edges corresponding to said actions from $(\SSS_i,\X^i \times \Y)$. pruning (see Definition \ref{prune}) the resulting subgraph gives us the next iterations of both $\SSS_{i+1}$ and $\X^{i+1}$. Note that although the pruning steps are irreversible, we now have a globally convergent step in between successive prunings.

Finally, notice that $\X^{i+1}_s \subset \X^i_s$ implies
\begin{align*}
\max_{x \in \X^i_s} \min_{y \in \Y_s} \big\{ \lambda M_{T(x,y;s)} + U_{\lambda} (x,y;s)  \big\} \geq \max_{x \in \X^{i+1}_s} \min_{y \in \Y_s} \big\{ \lambda M_{T(x,y;s)} + U_{\lambda} (x,y;s)  \big\}
\end{align*}
and 
\begin{align*}
\min_{x \in \X^i_s} \max_{y \in \Y_s} \big\{ \lambda m_{T(x,y;s)} + U_{\lambda} (x,y;s)  \big\} \leq \min_{x \in \X^{i+1}_s} \max_{y \in \Y_s} \big\{ \lambda m_{T(x,y;s)} + U_{\lambda} (x,y;s)  \big\} .
\end{align*}
That is, the sequences $\left\{ (m_s(\X^i) \right\}_{i \geq 0}$ and $\left\{ (M_s(\X^i) \right\}_{i \geq 0}$ are monotonic.

\subsection{Uniqueness of the Fixed Point Equations}
\label{SECTION_uniqueness}


In order to prove the global convergence of the iterative processes (\ref{m_recursion}) and (\ref{M_recursion}) for a fixed game subgraph $(\SSS_i,\X^i \times \Y, U)$, it will be necessary to study their evolution when initialized with $m_{s,0}=M_0$ and $M_{s,0}=m_0$. We will denote these recursively defined sequences by $(\tilde{m}_{s,n})_{n \geq 0}$ and $(\tilde{M}_{s,n})_{n \geq 0}$. This has a natural interpretation related to enforceability. Suppose the autocrat did not want to enforce a value but rather ensure that the expected utility did not equal $v$ regardless of the opponent's strategy. These are the \textbf{excludable} values. 

The initial exterior estimate of the excludable values in each state state is taken to be $\left[\tilde{M}_{s,0}, \tilde{m}_{s,0} \right] \defeq \left[ m_0,M_0 \right]$. An excludable value $v$ must satisfy
\begin{align*}
v = \sum_{x \in \X_{s}} \Phi_{0} \left( x,y ;s  \right) p_{x| s} .
\end{align*}
The potential value $v$ is included if there exists $\Phi_0$ and $p_{\cdot|s}$ satisfying the above such that
\begin{align*}
\lambda^{-1} \left[\Phi_{0} \left( x,y ;s  \right) - U_{\lambda} (x,y;s) \right] \notin \big[\tilde{M}_{T(x,y;s),0},\tilde{m}_{T(x,y;s),0} \big]
\end{align*}
for all joint actions $(x,y)$. Using the opponent agnostic assumption, $\Phi_0(x,y;s) \equiv \Phi_0 (x;s)$ , the above is equivalent to
\begin{align*}
\Phi_0 (x;s) \notin \lambda \big[\tilde{M}_{T(x,y;s),0},\tilde{m}_{T(x,y;s),0} \big] + U_{\lambda} (x,y;s) \quad \forall \; y \in \Y_s ,
\end{align*}
which can be rewritten as
\begin{align*}
\Phi_0 (x;s) \leq \min_{y \in \Y_s} \left\{ \lambda \tilde{M}_{T(x,y;s),0} + U_{\lambda} (x,y;s) \right\} \: \mbox{or} \: \max_{y \in \Y_s} \left\{ \lambda \tilde{m}_{T(x,y;s),0} + U_{\lambda} (x,y;s) \right\} \leq \Phi_0 (x;s).
\end{align*}
Setting $p_{x|s}=1$, any excludable value must then lie outside
\begin{align*}
\left( \tilde{M}_{s,1}, \tilde{m}_{s,1} \right) \defeq \left( \max_{x \in \X_s} \min_{y \in \Y_s} \left\{ \lambda \tilde{M}_{T(x,y;s),0} + U_{\lambda} (x,y;s) \right\} , \min_{x \in \X_s} \max_{y \in \Y_s} \left\{ \lambda \tilde{m}_{T(x,y;s),0} + U_{\lambda} (x,y;s) \right\} \right).
\end{align*}
Iterating this procedure returns the recursions (\ref{m_recursion}) and (\ref{M_recursion}) and thus the fixed point equations (\ref{EQN_left_FPE}) and (\ref{EQN_right_FPE}).


Notice the opponent agnostic assumption leads to uncoupled recursive formulas for $m_s$ and $M_s$. We shall concern ourselves with the $m_s$ recursion and the lemmas required for proving the existence and uniqueness of the solution to (\ref{EQN_left_FPE}) in Theorem \ref{THM_enforceable}. The proof for (\ref{EQN_right_FPE}) is essentially identical, modulo the direction of the inequalities. 

To proceed with the proof, we require a few lemmas. First, we have the following crude bounds for the future expected utility.
\begin{lem}[Bounds for $\Phi$] 
The bounds on the utility in (\ref{U_bounds}) imply that
\begin{align}
\label{BOUNDS}
 m_0 \leq \Phi_n \leq M_0 \quad \forall \: n \geq 0 .
 \end{align}
\end{lem}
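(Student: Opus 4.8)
The plan is to bound $\Phi_n$ directly from its closed form (\ref{E_phi_n}) rather than through the Bellman recursion of Lemma \ref{lem_phi_mn}. Recall that
\begin{align*}
\Phi_n (a_{n};s_n,h_{<n}) =  \sum_{m=n}^{\infty} \lambda^{m-n} \E  \Big[  U_{\lambda} (\aaa_{m}; \s_m, \h_{<m})  \Big|  (\aaa_{n};\s_n,\h_{<n}) = (a_{n} ; s_n , h_{<n})  \Big],
\end{align*}
so that $\Phi_n$ is a $\lambda$-discounted sum of conditional expectations of single-round rescaled utilities. The crucial observation is that the per-round rescaling $U_{\lambda}(a;s) = (1-\lambda)U(a;s)$ contributes a factor of $(1-\lambda)$ that will exactly cancel the geometric sum of the discount weights; indeed, the fact that (\ref{BOUNDS}) is stated with the \emph{un-rescaled} constants $m_0,M_0$ of (\ref{U_bounds}) is precisely what forces this cancellation to be the operative mechanism.

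First I would record the pointwise bounds. Since $m_0 \leq U(a;s) \leq M_0$ for every $(a,s)$ by (\ref{U_bounds}), and $1-\lambda > 0$, we get $(1-\lambda)m_0 \leq U_{\lambda}(a;s) \leq (1-\lambda)M_0$ for every $(a,s)$. These inequalities hold almost surely along every path, so monotonicity of conditional expectation yields
\begin{align*}
(1-\lambda)m_0 \leq \E\Big[ U_{\lambda}(\aaa_m;\s_m,\h_{<m}) \,\Big|\, (\aaa_n;\s_n,\h_{<n}) = (a_n;s_n,h_{<n}) \Big] \leq (1-\lambda)M_0
\end{align*}
for every $m \geq n$. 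Substituting this per-term bound into the series above and using $\sum_{m=n}^{\infty}\lambda^{m-n} = (1-\lambda)^{-1}$ gives
\begin{align*}
m_0 = (1-\lambda)m_0\,(1-\lambda)^{-1} \leq \Phi_n(a_n;s_n,h_{<n}) \leq (1-\lambda)M_0\,(1-\lambda)^{-1} = M_0,
\end{align*}
which is exactly (\ref{BOUNDS}).

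I do not expect a genuine obstacle: the only point requiring care is tracking the cancellation between the $(1-\lambda)$ from the rescaling and the $(1-\lambda)^{-1}$ from summing the discount weights, together with the (already justified, via (\ref{E_phi_n})) absolute convergence that licenses the term-by-term estimate. As an alternative one could argue from the Bellman relation (\ref{onestep_beta}), noting that $\Phi_n = U_{\lambda} + \lambda\,\E[\Phi_{n+1}]$ and that the convex-combination identity $(1-\lambda)M_0 + \lambda M_0 = M_0$ propagates the bound; since there is no terminal round, this would be made rigorous by setting $B \defeq \sup_{n,\,\text{args}}\Phi_n$ (finite by crude estimates) and deducing $B \leq (1-\lambda)M_0 + \lambda B$, hence $B \leq M_0$. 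The direct summation is cleaner, so I would present that and relegate the Bellman version to a remark.
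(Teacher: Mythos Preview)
Your proposal is correct and essentially identical to the paper's own proof: both bound $\Phi_n$ directly from the closed form (\ref{E_phi_n}), replace each conditional expectation of the (rescaled) utility by $(1-\lambda)m_0$ or $(1-\lambda)M_0$, and sum the geometric series so that the factor $(1-\lambda)$ cancels $\sum_{m\ge n}\lambda^{m-n}=(1-\lambda)^{-1}$. Your Bellman alternative is extra and not used by the paper.
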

\begin{proof}
Equation (\ref{E_phi_n}) states 
\begin{align*}
\Phi_n (a_{n};s_n,h_{<n}) =  (1-\lambda)\sum_{m=n}^{\infty} \lambda^{m-n} \E  \Big[  U (\aaa_{m}; \s_m, \h_{<m})  \Big|  (\aaa_{n};\s_n,\h_{<n}) = (a_{n} ; s_n , h_{<n})  \Big]
\end{align*}
for all $(a_n;s_n,h_{<n})$ in $\A_n \times \SSS_n \times \H_n$. Using the bounds on the utility in (\ref{U_bounds}), it follows 
\begin{align*}
m_0 = (1-\lambda)\sum_{m=n}^{\infty} \lambda^{m-n} m_0 \leq \Phi_n (a_{n};s_n,h_{<n}) \leq (1-\lambda)\sum_{m=n}^{\infty} \lambda^{m-n} M_0 = M_0
\end{align*}
for all $(a_n;s_n,h_{<n})$.
\end{proof}

If we build the sequence $(\Phi_n)_{n \geq 0}$ using the recursive formula (\ref{RECURSION}), we must ensure it satisfies the above boundedness condition. If the autocrat intends to enforce a particular value, then there must exist a a behavioural strategy $\p$ which ensures the boundedness of $\Phi_n$ regardless of the sequence of moves employed by the adversary. This motivates the following definition.  
\begin{defn}
\label{DEFN_unenforceable}
Suppose the autocrat wishes to unilaterally enforce $\E \U_{\lambda} = v$. If, for any behavioural strategy $\p$, there exists a $\p$-accessible (see Definition \ref{p_accessible}) sequence of play $h_{\leq n}$ such that $\Phi_n(h_{\leq n})$ is less than $m_0$, we say that $v$ is \textbf{left unenforceable for state $\s$}. Similarly, if there exists a sequence of play  $h_{\leq n}$, such that $\Phi_{n}(h_{\leq n})$ is greater than $M_0$, we say $v$ is \textbf{right unenforceable for state $\s$}. 
\end{defn} 

It will be useful later to encapsulate exactly what sequence of opposing actions ensure that the $\Phi_n$'s eventually violate the boundedness condition. In state $s$, these are the \textbf{left and right unenforcing replies} to $x$, denoted $y^+_s(x)$ and $y^-_s(x)$. These are defined
\begin{align}
\label{y_extremal_left}
y^+_s(x) & \defeq \argmax_{y \in \Y_s} \big\{ \lambda m_{T (x,y;s)} + U_{\lambda} (x,y;s) \big\}
\end{align}
and 
\begin{align}
\label{y_extremal_right}
y^-_s(x) & \defeq  \argmin_{y \in \Y_s} \big\{ \lambda M_{T (x,y;s)} + U_{\lambda} (x,y;s) \big\} .
\end{align}
Note these replies are not guaranteed by any particular strategy of the adversary, deterministic or stochastic. In the simultaneous-play setting, it would require foreknowledge of the autocrat's action on the account of their opponent. Instead we must concern ourselves with plays by the opponent that are instances consistent with some behavioural strategy.   

Similarly, if the autocrat intends to exclude a particular value, then there must exist a a behavioural strategy $\p$ which ensures the unboundedness of $\Phi_n$ regardless of the sequence of moves employed by the adversary. This motivates the following definition. 

\begin{defn}
\label{DEFN_excludable}
Suppose the autocrat wishes to unilaterally exclude $\E \U_{\lambda}$ from attaining $v$. If there exists a strategy $\p$ such that every $\p$-accessible history of play $h_{\leq n}$ $\Phi_n(h_{\leq n})$ is eventually greater than $M_0$, we say that $v$ is \textbf{right excludable for state $\s$}. Similarly, if there exists $\p$ such that $\Phi_n$ is eventually less than $M_0$, we say that $v$ is \textbf{left excludable for state $\s$}.
 \end{defn}

\noindent It turns out the autocrat need only play a deterministic strategy to exclude a value. The autocrat's \textbf{right-excluding play} in $s$ is the extremal action $x^-_s$. Similarly, the \textbf{left-excluding play} $s$ is when the autocrat plays $x^+_s$ in state $s$. 

We now return to the sequences $(m_{s,n})_{n \geq 0}$ and $(\tilde{m}_{n \geq 0})$. The following two lemmas show they are respectively non-decreasing and non-increasing.

\begin{lem}
$m_{s,1} \geq m_0 \: \forall \: s$ and $\tilde{m}_{s,1} \leq \tilde{m}_0 \: \forall \: s$.
\end{lem}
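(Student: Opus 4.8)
The plan is to prove both inequalities by direct substitution into the one-step recursion (\ref{m_recursion}), using only the initial conditions and the defining bounds $m_0 \leq U(a;s) \leq M_0$ from (\ref{U_bounds}). The first thing I would record is the rescaling identity $U_{\lambda}(a;s) = (1-\lambda) U(a;s)$, obtained by comparing (\ref{E_phi_n}) with the rewriting of $\Phi_n$ used in the proof of the bounds lemma; this lets me pull the factor $(1-\lambda) > 0$ through the $\min$ and $\max$ cleanly, so that the whole argument reduces to bounding $\min_x \max_y U(x,y;s)$.

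For the first claim, I would set $m_{T(x,y;s),0} = m_0$ in (\ref{m_recursion}) and factor out the constant term, obtaining
\begin{align*}
m_{s,1} = \min_{x \in \X^i_s} \max_{y \in \Y_s} \big\{ \lambda m_0 + (1-\lambda) U(x,y;s) \big\} = \lambda m_0 + (1-\lambda) \min_{x \in \X^i_s} \max_{y \in \Y_s} U(x,y;s).
\end{align*}
Since $U(x,y;s) \geq m_0$ for every joint action, each inner maximum satisfies $\max_{y} U(x,y;s) \geq m_0$, and taking the minimum over the nonempty set $\X^i_s$ preserves this lower bound. Hence $m_{s,1} \geq \lambda m_0 + (1-\lambda) m_0 = m_0$.

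For the second claim, I would observe that $(\tilde{m}_{s,n})$ obeys the same recursion initialized at $\tilde{m}_{s,0} = M_0$, so that $\tilde{m}_0 = M_0$. The identical substitution yields $\tilde{m}_{s,1} = \lambda M_0 + (1-\lambda) \min_{x} \max_{y} U(x,y;s)$, and now the upper bound $U(x,y;s) \leq M_0$ gives $\max_{y} U(x,y;s) \leq M_0$ for each $x$, hence $\min_x \max_y U(x,y;s) \leq M_0$ and therefore $\tilde{m}_{s,1} \leq \lambda M_0 + (1-\lambda) M_0 = M_0 = \tilde{m}_0$.

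There is no substantial obstacle here; the computation is routine. The only thing to keep straight is the direction of the bound as it passes through the nested $\min_x \max_y$ and the placement of the rescaling factor $(1-\lambda)$ — essentially, both statements reduce to the single observation that every utility, and hence every quantity $\min_x \max_y U(x,y;s)$, lies in the interval $[m_0, M_0]$, with the lower extreme surviving when the sequence starts at $m_0$ and the upper extreme surviving when it starts at $M_0$.
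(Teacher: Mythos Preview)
Your proof is correct and follows essentially the same approach as the paper: substitute the initial values into the recursion, use $U_{\lambda}(x,y;s) = (1-\lambda)U(x,y;s)$, and apply the pointwise bound $m_0 \leq U(x,y;s) \leq M_0$ to the resulting convex combination. The paper bounds the expression inside the $\min_x\max_y$ pointwise before taking the extrema, whereas you pull the constant $\lambda m_0$ out first and then bound $\min_x\max_y U(x,y;s)$, but this is a cosmetic reordering of the same computation.
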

\begin{proof}
We have
\begin{align*}
\lambda m_{T(x,y;s),0} + U_{\lambda} (x,y;s) = \lambda m_0 + (1-\lambda)U(x,y;s) \geq \lambda m_0 + (1-\lambda) m_0 = m_0 .
\end{align*}
This implies
\begin{align*}
\min_{x \in \X_s} \max_{y \in \Y_s} \big\{ \lambda m_{T(x,y;s),0} + U_{\lambda} (x,y;s)  \big\} \geq m_0 .
\end{align*}
The corresponding proof for $\tilde{m}_{s,1}$ is similar.
\end{proof}

\begin{lem}
If the truncated sequence $(m_{s,k})_{k \leq n}$ is non-decreasing, then $(m_{s,k})_{k \leq n+1} $ is non-decreasing. If the truncated sequence $(\tilde{m}_{s,k})_{k \leq n}$ is non-increasing, then $(\tilde{m}_{s,k})_{k \leq n+1} $ is non-increasing.
\end{lem}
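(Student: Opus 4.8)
The plan is to recognize the recursion (\ref{m_recursion}) as the iteration of a single map that is monotone with respect to the coordinatewise order, and then obtain the lemma as one application of this monotonicity to the inductive hypothesis. For a vector $v = (v_s)_{s \in \SSS_i}$, define the operator
\begin{align*}
(\Psi v)_s \defeq \min_{x \in \X^i_s} \max_{y \in \Y_s} \big\{ \lambda v_{T(x,y;s)} + U_{\lambda}(x,y;s) \big\},
\end{align*}
so that $m_{s,n+1} = (\Psi m_{\cdot,n})_s$ and, since $\tilde m$ obeys the same recursion (with only the initialization changed), $\tilde m_{s,n+1} = (\Psi \tilde m_{\cdot,n})_s$. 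It then suffices to prove that $\Psi$ is order-preserving: if $v_{s'} \geq w_{s'}$ for all $s' \in \SSS_i$, then $(\Psi v)_s \geq (\Psi w)_s$ for all $s$.

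To see this, fix a state $s$ and a joint action $(x,y)$. Since $\lambda > 0$ and $v_{T(x,y;s)} \geq w_{T(x,y;s)}$ by assumption, we have $\lambda v_{T(x,y;s)} + U_{\lambda}(x,y;s) \geq \lambda w_{T(x,y;s)} + U_{\lambda}(x,y;s)$. Taking the maximum over $y \in \Y_s$ preserves this inequality for each fixed $x$, and the subsequent minimum over $x \in \X^i_s$ preserves it once more, because both $\max$ and $\min$ are monotone with respect to the pointwise order on functions of $(x,y)$. Hence $(\Psi v)_s \geq (\Psi w)_s$, establishing the monotonicity of $\Psi$.

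The lemma now follows immediately. The inductive hypothesis that $(m_{s,k})_{k \leq n}$ is non-decreasing means $m_{s',n} \geq m_{s',n-1}$ for every state $s'$, that is, $m_{\cdot,n} \geq m_{\cdot,n-1}$ coordinatewise. Applying $\Psi$ and invoking its monotonicity yields $m_{s,n+1} = (\Psi m_{\cdot,n})_s \geq (\Psi m_{\cdot,n-1})_s = m_{s,n}$ for all $s$, so $(m_{s,k})_{k \leq n+1}$ is non-decreasing. The second claim is identical with the inequalities reversed: from $\tilde m_{\cdot,n} \leq \tilde m_{\cdot,n-1}$ the same monotonicity of $\Psi$ gives $\tilde m_{s,n+1} \leq \tilde m_{s,n}$. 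The only point requiring care is the monotonicity of the nested $\min_x \max_y$ in the middle paragraph; once that is in hand, the result is a direct substitution of the inductive hypothesis and poses no real obstacle.
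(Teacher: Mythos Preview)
Your proof is correct and essentially identical to the paper's: both argue that the pointwise inequality $\lambda m_{T(x,y;s),n} + U_{\lambda}(x,y;s) \geq \lambda m_{T(x,y;s),n-1} + U_{\lambda}(x,y;s)$ is preserved first by $\max_y$ and then by $\min_x$. The only difference is presentational---you package the recursion as a monotone operator $\Psi$, whereas the paper writes out the three inequalities directly without naming the map.
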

\begin{proof}
\begin{align*}
 \lambda m_{T(x,y;s),n} + U_{\lambda} (x,y;s) & \geq \lambda m_{T(x,y;s),n-1} + U_{\lambda} (x,y;s)
\end{align*}
Taking the $\max$ over $Y_s$ implies
\begin{align*}
  \max_{y \in \Y_s} \big\{ \lambda m_{T(x,y;s),n} + U_{\lambda} (x,y;s) \big\} & \geq \max_{y \in \Y_s} \big\{ \lambda m_{T(x,y;s),n-1} + U_{\lambda} (x,y;s) \big\}.
\end{align*}
Finally, taking the $\min$ over $\X_s$ implies
\begin{align*}
\min_{x \in \X_s}\max_{y \in \Y_s} \big\{ \lambda m_{T(x,y;s),n} + U_{\lambda} (x,y;s) \big\}  & \geq \min_{x \in \X_s}\max_{y \in \Y_s} \big\{ \lambda m_{T(x,y;s),n-1} + U_{\lambda} (x,y;s) \big\}. 
\end{align*}
Thus $m_{s,n+1} \geq m_{s,n}$. The corresponding proof for $(\tilde{m}_{s,n})_{n \geq 0}$ is similar.
\end{proof}
\noindent Next we show these sequences are bounded.
\begin{lem}
The sequences $(m_{s,n})_{n \geq 0}$ and $(\tilde{m}_{s,n})_{n \geq 0}$ are bounded.
\end{lem}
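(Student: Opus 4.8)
The plan is to exploit the monotonicity already established together with the elementary observation that the recursion preserves the interval $[m_0,M_0]$. Since the two preceding lemmas show that $(m_{s,n})_{n \geq 0}$ is non-decreasing and $(\tilde{m}_{s,n})_{n \geq 0}$ is non-increasing, each sequence already carries one trivial bound from its initial value, namely $m_{s,n} \geq m_{s,0} = m_0$ and $\tilde{m}_{s,n} \leq \tilde{m}_{s,0} = M_0$. It therefore suffices to produce the complementary bound in each case, and the natural candidates are $M_0$ and $m_0$ respectively.

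First I would recall that $U_{\lambda}(a;s) = (1-\lambda)U(a;s)$, so that $U_{\lambda}(x,y;s) \in [(1-\lambda)m_0, (1-\lambda)M_0]$ for every joint action, by the definition of $m_0$ and $M_0$ in (\ref{U_bounds}). The key step is then a simultaneous induction on $n$, over all states at once, showing that $m_{s,n} \leq M_0$ for every $s \in \SSS_i$. The base case is immediate since $m_{s,0} = m_0 \leq M_0$. For the inductive step, assuming $m_{s,n} \leq M_0$ for all $s$, each bracketed quantity satisfies
\begin{align*}
\lambda m_{T(x,y;s),n} + U_{\lambda}(x,y;s) \leq \lambda M_0 + (1-\lambda)M_0 = M_0,
\end{align*}
and since both $\max_{y}$ and $\min_{x}$ preserve an upper bound, applying these operations yields $m_{s,n+1} \leq M_0$.

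A symmetric argument handles $(\tilde{m}_{s,n})_{n \geq 0}$: by the same simultaneous induction one shows $\tilde{m}_{s,n} \geq m_0$ for all $s$ and $n$, using
\begin{align*}
\lambda \tilde{m}_{T(x,y;s),n} + U_{\lambda}(x,y;s) \geq \lambda m_0 + (1-\lambda)m_0 = m_0
\end{align*}
together with the fact that $\min$ and $\max$ preserve a lower bound. Combining the two halves gives $m_0 \leq m_{s,n} \leq M_0$ and $m_0 \leq \tilde{m}_{s,n} \leq M_0$ for all $n$ and all $s$, which is exactly the asserted boundedness.

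There is no genuine obstacle here; the only point requiring a little care is that the induction must be carried out uniformly over all states simultaneously, since the transition $T(x,y;s)$ may send a state to any other state, so one cannot bound the sequence state-by-state in isolation. Because the invariant bound $M_0$ (respectively $m_0$) is the same for every state, this uniform induction closes without difficulty.
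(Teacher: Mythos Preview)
Your proof is correct and follows essentially the same idea as the paper: each recursion step is an affine combination $\lambda m_{T(x,y;s),n} + (1-\lambda)U(x,y;s)$ with nonnegative weights summing to one, so the interval $[m_0,M_0]$ is preserved. The paper states this in a single line as a convex combination argument (getting both bounds at once), while you spell out an explicit simultaneous induction and use the already established monotonicity for one of the two bounds; but the content is the same.
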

\begin{proof}
every step involves a convex combination of the payoffs $U(x,y;s)$. That is,
\begin{align*}
m_{s,n} = \lambda m_{T ( x^-_n, y^+_n (x^-_n);s ),n} + (1-\lambda) U ( x^-_n, y^+_n (x^-_n), y_n;s) .
\end{align*}
Thus $m_{s,n} \in [m_0,M_0]$. The corresponding proof for $(\tilde{m}_{s,n})_{n \geq 0}$ is similar.
\end{proof}
\noindent As $(m_{s,n})_{n \geq 0}$ and $(\tilde{m}_{s,n})_{n \geq 0}$ are monotonic and bounded, the limits
\begin{align*}
m_s \defeq \lim_{n \rightarrow \infty} m_{s,n} \quad \mbox{and} \quad \tilde{m}_s \defeq \lim_{n \rightarrow \infty} \tilde{m}_{s,n}
\end{align*}
are  well defined. Taking the limit of the recursive relation (\ref{m_recursion}), we observe that $(m_s)_{s \in \SSS}$ and $(\tilde{m}_s)_{s \in \SSS}$  satisfy the same fixed point equation
\begin{align}
\label{m_FPE}
m_{s} \defeq \min_{x \in \X_s} \max_{y \in \Y_s} \big\{ \lambda m_{T(x,y;s)} + U_{\lambda} (x,y;s)  \big\} .
\end{align}
The above discussion the existence of solutions in Theorem \ref{THM_enforceable}. Uniqueness is covered by the following theorem.
\begin{theorem} (Uniqueness of the FPE)
\label{THM_uniqueness}
The solution to the fixed point equations in Theorem \ref{THM_enforceable} are unique.
\end{theorem}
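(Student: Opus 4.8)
The plan is to identify the right-hand side of the fixed point equation (\ref{m_FPE}) with a contraction mapping on a complete metric space and then invoke the Banach fixed point theorem. Since the recursions for $m_s$ and $M_s$ are uncoupled, it suffices to treat (\ref{EQN_left_FPE}); the argument for (\ref{EQN_right_FPE}) is identical after interchanging the roles of $\min$ and $\max$. Fix the pruned subgraph $(\SSS',\X' \times \Y,U)$, so that $\SSS'$ is finite, each $\X'_s$ is a nonempty finite set, and each state has at least one outgoing edge. Equip $\R^{\SSS'}$ with the sup-norm $\| m \|_{\infty} \defeq \max_{s \in \SSS'} |m_s|$, under which it is complete, and define the operator $\F : \R^{\SSS'} \to \R^{\SSS'}$ by
\begin{align*}
(\F m)_s \defeq \min_{x \in \X'_s} \max_{y \in \Y_s} \big\{ \lambda m_{T(x,y;s)} + U_{\lambda}(x,y;s) \big\}, \qquad s \in \SSS'.
\end{align*}
Because the extrema are taken over nonempty finite sets, $\F$ is well defined, and a fixed point of $\F$ is precisely a solution of (\ref{EQN_left_FPE}).

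The core of the argument is the claim that $\F$ is a $\lambda$-contraction, i.e. $\| \F m - \F m' \|_{\infty} \leq \lambda \| m - m' \|_{\infty}$ for all $m,m' \in \R^{\SSS'}$. First I would record the elementary non-expansiveness of extrema: for any two real-valued functions $f,g$ on a common finite index set, $| \max f - \max g | \leq \max | f - g |$ and $| \min f - \min g | \leq \max | f - g |$. Applying these bounds twice---once for the inner $\max_y$ and once for the outer $\min_x$, with the utility terms $U_{\lambda}(x,y;s)$ cancelling---yields, for each state $s$,
\begin{align*}
\big| (\F m)_s - (\F m')_s \big| \leq \max_{x \in \X'_s} \max_{y \in \Y_s} \big| \lambda m_{T(x,y;s)} - \lambda m'_{T(x,y;s)} \big| = \lambda \max_{x,y} \big| m_{T(x,y;s)} - m'_{T(x,y;s)} \big|.
\end{align*}
Since $T(x,y;s) \in \SSS'$ for every admissible $(x,y)$, the right-hand side is at most $\lambda \| m - m' \|_{\infty}$; taking the maximum over $s$ gives the contraction estimate.

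With the contraction established, uniqueness is immediate: if $m$ and $m'$ both solve (\ref{EQN_left_FPE}), then $\| m - m' \|_{\infty} = \| \F m - \F m' \|_{\infty} \leq \lambda \| m - m' \|_{\infty}$, and since $\lambda < 1$ this forces $\| m - m' \|_{\infty} = 0$. (Existence was already obtained from the monotone bounded sequences preceding the theorem, though the Banach fixed point theorem recovers it as well.) The argument for the $M_s$ equation is verbatim with $\min_x \max_y$ replaced by $\max_x \min_y$, the non-expansiveness bounds being insensitive to the order of the extrema. I expect the only delicate point to be the double application of the non-expansiveness inequality through the nested $\min_x \max_y$, keeping track of the fact that the transition $T(x,y;s)$ depends on both the autocrat's and the opponent's action; once that bookkeeping is set up, the discount factor $\lambda < 1$ does all the work.
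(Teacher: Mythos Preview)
Your proof is correct and takes a genuinely different route from the paper. The paper argues uniqueness by sandwiching: it runs the recursion (\ref{m_recursion}) from the lower initialisation $m_0$ and from the upper initialisation $M_0$, obtaining monotone limits $m_s$ and $\tilde{m}_s$, and then shows these coincide by appealing to the game-theoretic notions of left unenforceability and right excludability developed in Lemmas \ref{LEM_left_un_right_ex}, \ref{seq_compactness}, \ref{reverse} and \ref{lem_rightmost}. Any other fixed point is then trapped between the two monotone sequences. Your argument bypasses all of this machinery by recognising (\ref{EQN_left_FPE}) as the fixed point equation of a $\lambda$-contraction in the sup-norm on $\R^{\SSS'}$ and invoking Banach; the double non-expansiveness bound for nested $\min$/$\max$ is exactly the right tool, and the cancellation of the $U_{\lambda}$ terms is handled cleanly. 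Your approach is shorter, more standard (it is the textbook uniqueness proof for discounted Bellman equations), and simultaneously recovers existence. What the paper's longer route buys is the interpretation recorded in the Corollary immediately following Theorem \ref{THM_uniqueness}: the coincidence of the supremum of left unenforceable values with the minimum of right excludable values. That corollary does not fall out of the contraction argument, so if one cares about the enforceable/excludable dichotomy the auxiliary lemmas are still needed; for bare uniqueness, your proof is preferable.
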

\begin{proof}
If we can show $(m_{s,n})_{n \geq 0}$ equals $(\tilde{m}_{s,n})_{n \geq 0}$, then the monotonicity of the iterative process (\ref{m_recursion}) forbids the existence of any other fixed point.  

So suppose the solutions $(m_s)_{s \in \SSS}$ and $(\tilde{m}_s)_{s \in \SSS}$ to the (\ref{EQN_left_FPE}) are different. Suppose there exists a state $r$ such $m_r < \tilde{m}_r$. Lemma \ref{lem_rightmost} tells us that $m_r$ is left unenforceable.  Lemma \ref{LEM_left_un_right_ex} says that $m_r < m_r$, a contradiction. The case in which there exists a state $r$ such $m_r > \tilde{m}_r$ is handled similarly.
\end{proof}
\noindent Looking back at our definitions of left/right uneforceability and exludability, we have another immediate consequence stemming from uniqueness.
\begin{cor}
The supremum of left unenforceable values equals the minimum of right excludable values. The infimum of right unenforceable values equals the maximum of left excludable values.
\end{cor}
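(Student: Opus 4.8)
The plan is to collapse both equalities onto the single identity $m_s=\tilde m_s$, together with its mirror $M_s=\tilde M_s$, furnished by Theorem~\ref{THM_uniqueness}. First I would record four identifications that are already latent in the enforceability and excludability constructions of Section~\ref{SECTION_uniqueness}: the supremum of the left unenforceable values in state $s$ is $m_s=\lim_n m_{s,n}$; the infimum of the right unenforceable values is $M_s=\lim_n M_{s,n}$; the minimum of the right excludable values is $\tilde m_s=\lim_n \tilde m_{s,n}$; and the maximum of the left excludable values is $\tilde M_s=\lim_n \tilde M_{s,n}$. Granting these, the corollary is immediate: uniqueness gives $m_s=\tilde m_s$, so the supremum of the left unenforceable values equals the minimum of the right excludable values, and symmetrically $M_s=\tilde M_s$ yields that the infimum of the right unenforceable values equals the maximum of the left excludable values.

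The content therefore lies in the four identifications, each of which I would read off the relevant monotone recursion. For the first, the from-below sequence $(m_{s,n})_n$ was built so that a value $v$ is discarded at stage $n$ precisely when no admissible $\Phi_0(\cdot\,;s)$ keeps every one-step image $\lambda^{-1}[\Phi_0-U_{\lambda}]$ inside the previous estimates $[m_{s_c,n-1},M_{s_c,n-1}]$; passing to the limit, every $v<m_s$ is left unenforceable while every $v\in[m_s,M_s]$ is enforceable, so no $v\ge m_s$ is left unenforceable and the supremum is pinned at $m_s$ (not attained). For the right excludable values, the from-above recursion shows that whenever $v\ge\tilde m_{s,n}$ the autocrat has a deterministic reply, namely the right-excluding play $x^-_s$, forcing $\Phi_n$ above $M_0$ against every accessible history; hence every $v\ge\tilde m_s$ is right excludable and no smaller value is, so the minimum is $\tilde m_s$ (attained). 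The remaining two identifications are the exact mirror images, obtained by interchanging $m\leftrightarrow M$, $\min\leftrightarrow\max$, $x^-_s\leftrightarrow x^+_s$, and the floor $m_0$ with the ceiling $M_0$; the rigorous connection between each recursion limit and the corresponding one-sided set is exactly what Lemmas~\ref{lem_rightmost} and~\ref{LEM_left_un_right_ex} supply.

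The main obstacle I anticipate is bookkeeping rather than any new estimate: one must keep straight that the left unenforceable values are governed by the from-below $m$-recursion while the right excludable values are governed by the from-above $\tilde m$-recursion, and it is precisely the a priori distinctness of these two limits that uniqueness collapses. I would also be careful about attainment, since the one-sided unenforceable sets are open at their boundary whereas the excludable sets are closed there; this is exactly why the statement reads ``supremum'' on one side and ``minimum'' on the other, and it is faithfully reflected by the asymmetry between the strict and non-strict inequalities above. With the four boundary identifications and this attainment bookkeeping in place, no further work remains beyond substituting the two uniqueness identities $m_s=\tilde m_s$ and $M_s=\tilde M_s$.
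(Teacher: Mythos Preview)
Your approach is essentially the paper's own: the paper offers no explicit proof but labels the corollary ``another immediate consequence stemming from uniqueness,'' and you have correctly unpacked that consequence by invoking Theorem~\ref{THM_uniqueness} together with the boundary identifications supplied by Lemmas~\ref{LEM_left_un_right_ex} and~\ref{lem_rightmost}. The four identifications you list and the appeal to $m_s=\tilde m_s$ (and its mirror $M_s=\tilde M_s$) are exactly what the paper has in mind.

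One small bookkeeping point: your attainment claim that ``every $v\ge\tilde m_s$ is right excludable'' is slightly stronger than what Lemma~\ref{LEM_left_un_right_ex} actually gives, which is the strict characterization $v>\tilde m_s$ iff $v$ is right excludable. Under the paper's own lemmas the right-excludable set is open at $\tilde m_s$, so the corollary's use of ``minimum'' is already somewhat informal; your proof goes through verbatim if you replace ``minimum'' by ``infimum'' there, and the identity $\sup=\inf=m_s=\tilde m_s$ is unaffected.
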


\noindent Finally, below are the lemmas used to prove Theorem \ref{THM_uniqueness}. Lemma \ref{LEM_left_un_right_ex} relies on lemmas \ref{seq_compactness} and \ref{reverse}.

\begin{lem}[Left unenforceable and right excludable values]
\label{LEM_left_un_right_ex}
$v < m_s$ iff $v$ is left unenforceable in state $s$. $v > \tilde{m}_s$ iff $v$ is right enforceable in state $s$.
\end{lem}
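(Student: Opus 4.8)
The plan is to prove each of the two equivalences by reducing it to its finite-horizon truncation and inducting on the horizon. Since $m_{s,n}\uparrow m_s$ and $\tilde m_{s,n}\downarrow \tilde m_s$ are monotone (established in the preceding lemmas), a value $v<m_s$ satisfies $v<m_{s,N}$ for some finite $N$, and a value $v>\tilde m_s$ satisfies $v>\tilde m_{s,N}$ for some finite $N$. It therefore suffices to understand exactly what the autocrat can and cannot do within $N$ rounds, and the one-step recursion (\ref{RECURSION}) is precisely the tool linking the continuation values at round $n$ to those at round $n+1$. Throughout I read the second claim as ``$v>\tilde m_s$ iff $v$ is right excludable,'' matching the title of the lemma and Definition \ref{DEFN_excludable}.

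For the forward direction of the first claim I would prove by induction on $N$ the sharper statement: for every opponent agnostic strategy $\p$ with $\sum_{x}\Phi_0(x;s)p_{x|s}<m_{s,N}$ there is a $\p$-accessible play of length at most $N$ on which $\Phi$ drops below $m_0$. The base case $N=0$ is immediate from $m_{s,0}=m_0$, since a convex combination lying below $m_0$ must have a support action with $\Phi_0(x;s)<m_0$. For the step, choose a support action $x^\ast$ with $\Phi_0(x^\ast;s)<m_{s,N}\le\max_{y}\{\lambda m_{T(x^\ast,y;s),N-1}+U_\lambda(x^\ast,y;s)\}$ and let the opponent play the left unenforcing reply $y^+_s(x^\ast)$ of (\ref{y_extremal_left}); the recursion (\ref{RECURSION}) then forces the conditional continuation value at the child $T(x^\ast,y^+_s(x^\ast);s)$ strictly below $m_{T(\cdot),N-1}$, and the inductive hypothesis applied to the conditional sub-strategy yields a bad play to which we prepend the first step. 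The symmetric argument, replacing $m$ by $\tilde m$, the $\min$--$\max$ order by $\max$--$\min$, and $y^+_s$ by the reply attaining the maximum defining $\tilde m_{s,N}$, shows that $v>\tilde m_{s,N}$ lets the autocrat (playing the action attaining the minimum defining $\tilde m_{s,N}$) drive the continuation value above $M_0$ within $N$ rounds against \emph{every} opponent reply.

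The two converse directions are handled by constructing an explicit invariant-preserving play. If $v\ge m_s$, the autocrat plays the extremal action $x^-_s$ of (\ref{x_extremal_left}) in every reached state; because $m_s=\max_y\{\lambda m_{T(x^-_s,y;s)}+U_\lambda(x^-_s,y;s)\}$, each child's required continuation value is at least $m_{T(x^-_s,y;s)}\ge m_0$ for every opponent reply, so the invariant ``continuation value $\ge m_{\text{current state}}$'' propagates and $\Phi$ never falls below $m_0$, whence $v$ is not left unenforceable. Dually, if $v\le\tilde m_s$, then against any autocrat strategy the opponent selects at each reached state a support action $x^\ast$ with $\Phi(x^\ast;s)\le\tilde m_s$ (one exists since the expected value is $\le\tilde m_s$) together with the reply maximizing $\lambda\tilde m_{T(x^\ast,y;s)}+U_\lambda(x^\ast,y;s)$; the recursion then keeps the continuation value $\le\tilde m_{\text{current state}}\le M_0$ along this accessible play, so $\Phi$ is never eventually above $M_0$ and $v$ is not right excludable. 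One further observation closes the forward excludability argument: since $\lambda^{-1}>1$ and $U_\lambda(x,y;s)=(1-\lambda)U(x,y;s)\le(1-\lambda)M_0$, the recursion satisfies $w>M_0\Rightarrow\lambda^{-1}[w-U_\lambda(x,y;s)]>M_0$, so once a continuation value exceeds $M_0$ it stays above $M_0$ forever, upgrading ``exceeds $M_0$ within $N$ rounds'' to the ``eventually above $M_0$'' required by Definition \ref{DEFN_excludable}.

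I expect the main obstacle to be bookkeeping rather than a single hard idea. The delicate points are keeping the decoupling of the lower threshold $m_0$ from the upper threshold $M_0$ honest, the opponent agnostic assumption is exactly what uncouples the $m$- and $\tilde m$-recursions, so left unenforceability must be argued while ignoring the $M_0$ ceiling, and verifying at every step that the plays produced are genuinely $\p$-accessible, i.e. that the chosen support action carries positive probability and the chosen opponent reply is an instance consistent with some behavioural strategy in the simultaneous-play setting, as flagged after (\ref{y_extremal_right}). The passage from the finite-horizon inductions to the infinite-horizon definitions, and the extraction of a single accessible play realizing the invariant in the converse directions, is where I would invoke the sequential-compactness Lemma \ref{seq_compactness} and the reversal Lemma \ref{reverse}.
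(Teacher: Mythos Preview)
Your forward direction for both equivalences is essentially the paper's argument: pick the least $N$ with $v<m_{s,N}$ (resp.\ $v>\tilde m_{s,N}$) and step through the recursion, at each stage selecting a support action on which $\Phi$ falls below the current threshold and letting the opponent play the unenforcing reply (resp.\ the autocrat play the minimising action). The paper phrases this as ``repeat until some value of a future $\Phi_{k'}$ is less than $m_0$'' rather than as a formal induction, but the content is identical; in fact your inductive statement is exactly Lemma~\ref{reverse}.

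Your converse directions, however, differ from the paper's. The paper argues indirectly: assuming $v\ge m_s$ and $v$ left unenforceable, it invokes Lemma~\ref{seq_compactness} to get a uniform horizon $n$, then Lemma~\ref{reverse} to conclude $v<m_{s,n}\le m_s$, a contradiction. You instead give a direct witness: for $v\ge m_s$ the autocrat plays $x^-_s$ deterministically and the fixed-point identity $m_s=\max_y\{\lambda m_{T(x^-_s,y;s)}+U_\lambda(x^-_s,y;s)\}$ propagates the invariant ``continuation $\ge m_{\text{state}}\ge m_0$''; dually, for $v\le\tilde m_s$ you have the opponent track a support action with $\Phi\le\tilde m_{\text{state}}$ and reply to keep the invariant ``continuation $\le\tilde m_{\text{state}}\le M_0$'' alive along one accessible play. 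This is a genuinely more elementary route: it bypasses the sequential-compactness step entirely and makes the role of the extremal actions transparent. The paper's route, on the other hand, packages the converse into two standalone lemmas that are reused in the uniqueness proof (Theorem~\ref{THM_uniqueness}), so it pays off structurally even if it is less direct here.

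Given that your direct constructions already close both converse directions, your closing remark that you ``would invoke'' Lemmas~\ref{seq_compactness} and~\ref{reverse} for the passage to infinite horizon is unnecessary and slightly muddies the proposal; the only limit fact you need is $m_{s,n}\uparrow m_s$ and $\tilde m_{s,n}\downarrow\tilde m_s$, which you already cite at the outset.
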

\begin{proof}
$(\Rightarrow)$ Suppose $v < m_s$. Take the smallest $n$ such that $v < m_{s,n+1}$. We claim the worst left reply eventually ensures the $\Phi_k$'s can take on values less than $m_0$. Since
\begin{align*}
\sum_{x \in \X_s} \Phi_k \left(x;s,h_{<k} \right)p_{x|s,h_{<k}} =  v ,
\end{align*}
there exists an $x_s$ such that $\Phi_k (x_s;s,h_{<k}) \leq m_{s,n+1} $. Thus
\begin{align*}
\Phi_k \left(x_s ;s,h_{<k} \right) < \min_{x \in \X_s} \max_{y \in \Y_s} \big\{ \lambda m_{T (x,y;s) ,n} + U_{\lambda} \left( x, y ;s \right) \big\} \leq \lambda m_{T (x_s , y_{x_s};s),n} + U_{\lambda} (x_s , y_{x_s};s)
\end{align*}
which implies
\begin{align*}
\lambda^{-1} \left[ \Phi_k (x_s  ;s,h_{<k}) - U( x_s ,y_{x_s};s) \right] < m_{T (x_s,y_{x_s};s),n} 
\end{align*}
So we have
\begin{align*}
\sum_{x' \in \X_s} \Phi_{k+1} \big( x' ; T ( x_s,y_{x_s};s ) ; h_{\leq k} \big) p_{x'| T(x_s,y_{x_s};s),h_{\leq k}} <  m_{T ( x_s,y_{x_s};s),n} 
\end{align*}
Repeat until some value of a future $\Phi_{k'}$ is less than $m_0$. \\

\noindent $(\Leftarrow)$ Suppose $v \geq m_s$. We will further assume that $v$ is left unenforceable. By combining Lemmas (\ref{seq_compactness}) and (\ref{reverse}) we have a contradiction.
\end{proof}

\begin{lem}
\label{seq_compactness} (Uniform left unenforceability and right excludability)
Suppose $v$ is left unenforceable in state $s_0$. There exists an $n \geq 0$ such that some value of $\Phi_n$ is less than $m_0$ for all $\p$. Suppose $v$ is right excludable in state $s$. If the autocrat uses the best right play $\bbb_r$, there exists an $n \geq 0$ such that all values of $\Phi_n [\bbb_r]$ are greater than $M_0$.
\end{lem}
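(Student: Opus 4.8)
The statement packages two claims of the same logical shape --- each upgrades an escape time that a priori depends on the strategy into one that is uniform --- so the plan is to treat them separately, since the quantifier over strategies is the crux and behaves differently in the two halves. In both cases the underlying mechanism is that the constructed $\Phi$'s live in a bounded window and move by $\lambda$-discounted convex combinations, so ``escape'' is a finite-round event along each line of play; the work is to bound the round uniformly.

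I would dispatch the right-excludable half first, because there the autocrat's play is pinned to the fixed deterministic strategy $\bbb_r$ that plays $x^-_s$ in each state. With $\bbb_r$ fixed, the only branching in the tree of accessible histories comes from the opponent's finitely many actions across the finitely many states, so this tree is finitely branching. Let $\mathcal{T}$ be the subtree of histories $h_{\le m}$ all of whose prefixes satisfy $\Phi_{m'}[\bbb_r] \le M_0$; this set is prefix-closed, hence a genuine subtree. Right excludability (Definition \ref{DEFN_excludable}) says that along every infinite line of play $\Phi_n[\bbb_r]$ eventually exceeds $M_0$, so no infinite branch stays in $\mathcal{T}$; by König's lemma a finitely branching tree with no infinite branch is finite, giving a uniform depth $D$ by which every branch has left $\mathcal{T}$, i.e. has crossed above $M_0$ at least once. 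To pass from ``crossed once by round $D$'' to the stated ``all values of $\Phi_n[\bbb_r] > M_0$ at a single round $n$'', I would invoke persistence: under the right-excluding play the recursion keeps a target above $M_0$ once it has risen above it, so at $n=D$ (or the first round past $D$) every accessible history has $\Phi_n[\bbb_r] > M_0$.

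The left-unenforceable half is the substantial one, since the conclusion must hold for \emph{all} $\p$ and König's lemma no longer suffices. Here I would argue the contrapositive: show that if no uniform escape round exists, then $v$ is not left unenforceable, i.e. some single strategy keeps all constructed $\Phi_m \ge m_0$ forever. The first step is a finite-horizon reduction: by the same one-step analysis that produced the recursion (\ref{m_recursion}), induction on $n$ gives that \emph{some} strategy survives (no accessible $\Phi_m < m_0$ at any round $m \le n$) if and only if $v \ge m_{s_0,n}$. ``No uniform escape round'' means that for every $n$ some strategy survives to round $n$, hence $v \ge m_{s_0,n}$ for all $n$, and therefore $v \ge m_{s_0}$ by the monotone convergence $m_{s_0,n} \uparrow m_{s_0}$. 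So everything reduces to the single implication $v \ge m_{s_0} \Rightarrow$ there is one strategy surviving for all rounds; the uniform escape round is then simply the least $n$ with $v < m_{s_0,n}$.

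The hard part is precisely that last implication: assembling the finite-horizon survivors into a single infinite survivor. I would realize it by sequential compactness of the strategy space, which is the product $\prod_{(s,h_{<k})}\Delta(\X_s)$ over the countable set of state--history pairs and is compact (Tychonoff) and metrizable in the product topology; a diagonal extraction over the length-$n$ survivors, combined with the continuity of each finite-horizon evaluation $\p \mapsto \Phi_m(h_{\le m})$ (a $\lambda$-discounted sum with uniformly small tail and a head depending on finitely many coordinates of $\p$), should produce a limit strategy $\p^*$ and, via König on the stabilized witnessing prefixes, an infinite never-escaped branch for $\p^*$. The obstacle I expect to fight is that accessibility is only \emph{lower}-semicontinuous: the branches witnessing survival for the approximating strategies may ride on autocrat actions whose probability decays to zero in the limit, so the branch need not be $\p^*$-accessible and the contradiction with Definition \ref{DEFN_unenforceable} can evaporate. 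I would resolve this by exploiting that the maintainable target intervals $[m_{s,n},M_{s,n}]$ are nested with nonempty limit $[m_s,M_s]$, rerouting a vanishing action to an extremal action that keeps the running deficit inside the limiting interval; equivalently, one builds $\p^*$ directly from the extremal actions $x^\pm_s$ rather than extracting it, which sidesteps the semicontinuity entirely and makes the forever-survival explicit.
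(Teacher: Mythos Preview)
Your core mechanism for the left-unenforceable half---sequential compactness of the strategy space, applied to a sequence $\p(j)$ of strategies that each survive to horizon $j$---is exactly the paper's argument; the paper's proof is three sentences long and simply asserts that a limit $\p$ of a convergent subsequence $(\p(j_k))$ satisfies $\Phi_{j_k}[\p]\ge m_0$ for all $k$ and all replies, contradicting left unenforceability. Two differences are worth noting. First, the paper does not single out the right-excludable half at all; your K\"onig argument there is your own addition and is a reasonable way to handle a fixed deterministic $\bbb_r$, though the ``persistence'' step (once above $M_0$, stay above) is stated without justification and is not obviously automatic from the one-step recursion. Second, your finite-horizon reduction ``some strategy survives to round $n$ iff $v\ge m_{s_0,n}$'' is essentially the content of the paper's next lemma (Lemma~\ref{reverse}), so folding it in here duplicates material that the paper keeps separate.

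More interestingly, the semicontinuity worry you raise---that the witnessing branches for the approximants may rest on actions whose probability vanishes in the limit, so the limit strategy need not keep $\Phi$ above $m_0$ along any $\p^*$-accessible branch---is a genuine soft spot that the paper's terse proof passes over without comment. Your proposed repair, namely to abandon extraction and instead build the surviving strategy directly from the extremal actions once one knows $v\ge m_{s_0}$, is sound and in fact aligns with how the paper elsewhere constructs autocratic strategies (Theorems~\ref{THM_mem1_strategy} and~\ref{THM_main_result}); it is more explicit than what the paper writes here, but it closes the gap cleanly.
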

\begin{proof}
Otherwise, for each $j \geq 0$ there is a $\p(j)$ such that $\Phi_{j} [\p(j)] \geq m_0$ for all $\Y$-replies. Since the space of behavioural strategies is sequentially compact, there exists a convergent subsequence $\left( \p(j_k) \right)_{k \geq 0}$ with limit $\p$ such that $\Phi_{j_k}[\p] \geq m_0$ for all $k$ and any $\Y$-reply. This implies $v$ is not left unenforceable, a contradiction. 
\end{proof}

\begin{lem}
\label{reverse}
Suppose $v$ is left unenforceable in state $s$. Let $n$ be the smallest natural number such that some value of $\Phi_n$ is less than $m_0$. Then $v < m_{s,n}$. Suppose $v$ is right excludable in state $s$. Let $n$ be the smallest natural number such that all values of $\Phi_n$ is greater than $M_0$. Then $v > \tilde{m}_{s,n}$. 
\end{lem}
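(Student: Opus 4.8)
The plan is to prove the contrapositive of each half by a backward induction that reads $m_{s,n}$ (respectively $\tilde m_{s,n}$) as a finite-horizon safety value. Throughout I build the $\Phi_k$'s formally from the target value $v$ using the one-step recursion (\ref{RECURSION}): given $\Phi_n(x_n,y_n;s_n)$, the continuation value in the child state $T(x_n,y_n;s_n)$ is $\lambda^{-1}[\Phi_n(x_n,y_n;s_n)-U_\lambda(x_n,y_n;s_n)]$. Because left unenforceability concerns only the lower bound, I track solely whether $\Phi_k \ge m_0$ and deliberately ignore the upper bound $M_0$ (and dually for the right-excludable half).

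For the first claim I establish by induction on $n$ the statement $P(n)$: for every state $s$ and every $v \ge m_{s,n}$, the autocrat can choose a strategy with $\sum_{x}\Phi_0(x;s)\,p_{x|s}=v$ that keeps $\Phi_k \ge m_0$ along every opponent reply for all $k \le n$. The base case $P(0)$ is immediate, since a pure strategy gives $\Phi_0 = v \ge m_{s,0}=m_0$. For the inductive step the autocrat plays the extremal action $x^-_s$ of (\ref{x_extremal_left}) and sets $\Phi_0(x^-_s;s)=v$. As $v \ge m_{s,n} = \max_{y}\{\lambda m_{T(x^-_s,y;s),n-1} + U_\lambda(x^-_s,y;s)\}$, the recursion (\ref{RECURSION}) forces every continuation value $\lambda^{-1}[v - U_\lambda(x^-_s,y;s)]$ to be at least $m_{T(x^-_s,y;s),n-1}$; applying $P(n-1)$ in each child state then keeps $\Phi_1,\dots,\Phi_n \ge m_0$, and together with $\Phi_0 = v \ge m_0$ this yields $P(n)$. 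Note this identifies the $\min_x$ in (\ref{m_recursion}) as the autocrat's choice of extremal action and the $\max_y$ as the worst opponent reply.

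Granting $P(n)$, the lemma follows by contradiction. Let $n$ be the smallest uniform failure round, which exists by Lemma \ref{seq_compactness}: every strategy admits a $\ppp$-accessible play with $\Phi_n < m_0$. If $v \ge m_{s,n}$, then $P(n)$ produces a strategy keeping $\Phi_n \ge m_0$ on all plays, contradicting that round $n$ is a uniform failure round; hence $v < m_{s,n}$. The right-excludable half is dual: one shows that $\tilde m_{s,n}$ is the largest $v$ for which the opponent can keep $\Phi_k \le M_0$ for all $k \le n$. The controlling recursion is again the min-max of (\ref{m_recursion}) but initialized at $M_0$, so the identical induction goes through after interchanging $m_0 \leftrightarrow M_0$ and reversing the safety inequalities; the only structural change is that here the autocrat's action is fixed to the best right play and it is the \emph{opponent}, using the reply attaining the inner $\max_y$, who keeps $\Phi_k$ from crossing $M_0$. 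Thus if $v \le \tilde m_{s,n}$ the opponent prevents all of $\Phi_0,\dots,\Phi_n$ from exceeding $M_0$, and at the smallest right-exclusion round $n$ we must have $v > \tilde m_{s,n}$.

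The main obstacle is conceptual rather than computational. First, one must justify propagating a single bound in isolation: the formally built $\Phi_k$ need not stay below $M_0$ during the left-hand construction (nor above $m_0$ during the right-hand one), and it is precisely this freedom that makes the survival strategy feasible. Second, one must correctly match the finite-horizon optimization defining $m_{s,n}$ and $\tilde m_{s,n}$ to the appropriate player's control, so that the min-max reproduces the recursion (\ref{m_recursion}) verbatim at each step, and line up the horizon index so that ``$n$ the smallest failure round'' pairs with $m_{s,n}$ rather than $m_{s,n\pm 1}$. Getting this indexing right is what makes the conclusion $v < m_{s,n}$ (resp. $v > \tilde m_{s,n}$) come out with the exact $n$ supplied by Lemma \ref{seq_compactness}, which is all that Lemma \ref{LEM_left_un_right_ex} needs to reach its contradiction via $m_{s,n} \le m_s$.
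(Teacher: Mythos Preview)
Your argument is correct and is essentially the contrapositive of the paper's own proof: both induct on $n$ and peel off one step of the recursion (\ref{m_recursion}), the paper showing directly that left unenforceability at horizon $n$ forces $v<m_{s,n}$ (for every autocrat action $x$ it exhibits a reply $y_x$ pushing the continuation below $m_{T(x,y_x;s),n-1}$), while you build a surviving strategy whenever $v\ge m_{s,n}$ by having the autocrat play the minimizing action. The only quibble is notational: the action you need in the inductive step is the finite-horizon $\argmin_x$ in the definition of $m_{s,n}$, not the limiting $x^-_s$ of (\ref{x_extremal_left}), though your closing remark about identifying the $\min_x$ in (\ref{m_recursion}) shows you intend exactly this.
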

\begin{proof}
\underline{Base Case}: ($n=1$)  for each deterministic strategy $\dd$ there exists a $\Y$-reply $\RRR_{\Y}(\dd)$ that is
\begin{align*}
\sum_x \Phi_0 \left(x; s \right) d_{x|s} = v
\end{align*}
for each $x$ there exists $y_x$ such that
\begin{align*}
\lambda^{-1} \left[ v- U_{\lambda} (x,y_x;s) \right] < m_0 = m_{T(x,y_x;s),0} 
\end{align*}
which implies
\begin{align*}
v < \lambda m_{0} + U_{\lambda} (x,y_x;s) \leq \lambda m_{0} + U_{\lambda} (x,y^+_0(x);s) \quad \forall x
\end{align*}
thus $v < m_{s,1}$ \\
\underline{Inductive Step}: Assume true for $n$, we will show its true for $n+1$
\begin{align*}
\sum_{x \in \X_s} \Phi_0 \left(x; s \right) d_{x|s} = v
\end{align*}
for each $x$ there exists $y_x$ such that
\begin{align*}
v_1 (x,y_x) \defeq \lambda^{-1} \left[ v - U_{\lambda} (x,y_x;s) \right] 
\end{align*}
is unenforceable in state $T (x,y_x;s)$. Applying the inductive assumption to $v_1 (x,y_x)$ in state $T (x,y_x;s)$ we conclude that $v_1(x,y_x) < m_{T(x,y_x;s),n}$ for every $x$. That is,
\begin{align*}
 \lambda^{-1} \left[ v - U_{\lambda} (x,y_x;s) \right] < m_{T(x,y_x;s),n} \quad \forall x
\end{align*}
which implies
\begin{align*}
v < \lambda m_{T(x,y_x;s),n} + U_{\lambda} (x,y_x;s) \leq \lambda m_{T(x,y^+_n(x);s),n} + U_{\lambda} (x,y^+_n(x);s) \quad \forall \: x
\end{align*}
Thus $v < m_{s,n+1}$.
\end{proof}
.
\begin{lem}
\label{lem_rightmost}
Let $(m_s)_{s \in \SSS}$ be a solution to the FPE (\ref{m_FPE}). If $v < m_s$ than $v$ is left unenforceable in state $s$. If $v > m_s$ than $v$ is right excludable in state $s$.
\end{lem}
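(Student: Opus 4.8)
The plan is to prove both halves by a single \emph{gap amplification} mechanism: along a well-chosen trajectory of states the signed distance between the value propagated by the one-step recursion (\ref{RECURSION}) and the fixed point $m_{s}$ does not merely survive but is multiplied by $\lambda^{-1}>1$ at every step, so that the propagated value leaves $[m_0,M_0]$ after \emph{finitely} many rounds. The only structural inputs are the fixed point equation (\ref{m_FPE}), which forces $\max_{y \in \Y_s}\{\lambda m_{T(x,y;s)}+U_{\lambda}(x,y;s)\}\ge m_s$ for every $x$, with equality at $x=x^-_s$, together with the recursion (\ref{RECURSION}). In particular the argument uses nothing about $(m_s)$ beyond its being a fixed point, so it applies to an arbitrary solution of (\ref{m_FPE}), which is exactly what the statement requires.

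For the first claim, let $\p$ be any strategy that fixes $\E\U_{\lambda}=v$, so that (by the opposition agnostic assumption (\ref{CONDITION_opposition_agnostic})) $\Phi_n$ depends only on $x$ and $\sum_{x}\Phi_0(x;s)\,p_{x|s}=v<m_s$; write $s_0=s$ and $v_0=v$. I would build the offending trajectory inductively, maintaining $v_k<m_{s_k}$. Since $v_k=\sum_x\Phi_k(x;s_k)\,p_{x|s_k}$ is a convex combination, some action $x_k$ in the support of $p_{\cdot|s_k}$ satisfies $\Phi_k(x_k;s_k)\le v_k<m_{s_k}$; let the opponent answer with the worst left reply $y_k=y^+_{s_k}(x_k)$ of (\ref{y_extremal_left}) and put $s_{k+1}=T(x_k,y_k;s_k)$. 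As $p_{x_k|s_k}>0$ and the opponent may play $y_k$, the prefix history is $\p$-accessible, and (\ref{RECURSION}) gives
\[
v_{k+1}\defeq\sum_{x}\Phi_{k+1}(x;s_{k+1})\,p_{x|s_{k+1}}=\lambda^{-1}\big[\Phi_k(x_k;s_k)-U_{\lambda}(x_k,y_k;s_k)\big].
\]
The choice $y_k=y^+_{s_k}(x_k)$ makes $\lambda m_{s_{k+1}}+U_{\lambda}(x_k,y_k;s_k)=\max_{y}\{\lambda m_{T(x_k,y;s_k)}+U_{\lambda}(x_k,y;s_k)\}\ge m_{s_k}$, hence $m_{s_{k+1}}\ge\lambda^{-1}[m_{s_k}-U_{\lambda}(x_k,y_k;s_k)]$, while $\Phi_k(x_k;s_k)\le v_k$ gives $v_{k+1}\le\lambda^{-1}[v_k-U_{\lambda}(x_k,y_k;s_k)]$. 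Subtracting, the gap $\delta_k\defeq m_{s_k}-v_k>0$ obeys $\delta_{k+1}\ge\lambda^{-1}\delta_k$, so $\delta_k\ge\lambda^{-k}\delta_0$. Since $\SSS$ is finite, $m_{s_k}\le\max_{s}m_s<\infty$, whence $v_k\to-\infty$ and eventually $v_k<m_0$; the convex-combination identity then forces some realized $\Phi_k<m_0$. As $\p$ was arbitrary, $v$ is left unenforceable.

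The second claim is the exact dual, now with the autocrat steering. Let $v>m_s$ and have the autocrat play the right-excluding action $x^-_s$ of (\ref{x_extremal_left}) deterministically in every state, so the single propagated value $v_k=\Phi_k(x^-_{s_k};s_k)$ starts at $v_0=v$. For any opponent reply $y$, with $s_{k+1}=T(x^-_{s_k},y;s_k)$, the defining equality $\max_{y}\{\lambda m_{T(x^-_{s_k},y;s_k)}+U_{\lambda}(x^-_{s_k},y;s_k)\}=m_{s_k}$ yields $m_{s_{k+1}}\le\lambda^{-1}[m_{s_k}-U_{\lambda}(x^-_{s_k},y;s_k)]$, whereas (\ref{RECURSION}) gives the exact relation $v_{k+1}=\lambda^{-1}[v_k-U_{\lambda}(x^-_{s_k},y;s_k)]$. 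Subtracting shows the gap $\gamma_k\defeq v_k-m_{s_k}>0$ obeys $\gamma_{k+1}\ge\lambda^{-1}\gamma_k$ \emph{for every opponent reply}, so $\gamma_k\ge\lambda^{-k}\gamma_0$ and, since $m_{s_k}\ge\min_{s}m_s>-\infty$, we get $v_k\to+\infty$. Thus along every $\p$-accessible history the propagated value eventually exceeds $M_0$, which is precisely the assertion that $v$ is right excludable.

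The main obstacle is securing the amplification factor rather than mere persistence of the gap: knowing only $\delta_{k+1}\ge\delta_k$ would leave open the possibility that $v_k$ stays trapped inside $[m_0,m_{s_k}]$ forever and never violates (\ref{BOUNDS}). It is the pairing of the one-step recursion with the fixed point equation \emph{along the extremal reply $y^+$ (resp.\ action $x^-$)}, routed through the factor $\lambda^{-1}>1$, that upgrades persistence to geometric growth and hence to a finite-time escape. A secondary technical point, dispatched above by drawing $x_k$ from the support and letting the opponent respond deterministically, is to certify that each prefix of the constructed trajectory is genuinely $\p$-accessible rather than merely notionally worst-case.
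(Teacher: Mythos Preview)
Your proof is correct and actually somewhat cleaner than the paper's. The paper argues as follows: since $\SSS$ is finite, the trajectory of states determined by the choices $x$ and the counters $y_x$ must eventually enter a cycle $s_1\to\cdots\to s_r\to s_1$; it then writes the one-circuit map on $v$ explicitly as $v\mapsto\lambda^{-r}\big[v-\sum_{i=1}^r\lambda^{i-1}U_i\big]$, identifies its fixed point $p=(1-\lambda^r)^{-1}\sum_{i=1}^r\lambda^{i-1}U_i$, and uses $m_{s_i}-\lambda m_{s_{i+1}}\le U_i$ to telescope and obtain $p\ge m_{s_1}$, so the expanding map with coefficient $\lambda^{-r}$ repels $v<m_{s_1}$ toward $-\infty$.

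Your argument bypasses the cycle detection entirely: you track the signed gap $\delta_k=m_{s_k}-v_k$ step by step and show $\delta_{k+1}\ge\lambda^{-1}\delta_k$ directly from the fixed point equation and the recursion, then invoke only the boundedness of $\{m_s\}$ (via $|\SSS|<\infty$) to force $v_k\to-\infty$. This is a genuine simplification: it avoids having to argue that a cycle of \emph{states} yields a well-defined one-circuit affine map (which the paper's proof leaves somewhat implicit, since a priori the autocrat's action upon revisiting a state need not coincide with its earlier choice), and it makes the treatment of mixed $\p$ transparent through the convex-combination step. The paper's route has the minor virtue of making the limiting value $p$ explicit, which connects to the exact cycle formulas~(\ref{EQN_exact_cycle}); your route buys uniformity and brevity.
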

\begin{proof}
Fix solution $(m_s)_{s \in \SSS}$ of the fixed point equation. Suppose $v < m_s$. We will show that $v$ is left unenforceable in state s. Suppose $x \in \X_s$ is countered with $y_x \in \Y_s$ in every $s$. Since $\SSS$ is finite, it will eventually enter a cycle of states, denoted $s_1 \rightarrow \ldots \rightarrow s_r \rightarrow s_1 $. Defining
\begin{align*}
U_i \defeq U_{\lambda} \left( x , y_x ; s_i \right),
\end{align*}
we have the action on $v$ after completing one circuit is
\begin{align*}
\lambda^{-r} \big[ v - \left(U_1 + \lambda U_2 + \ldots + \lambda^{r-1} U_r \right) \big]
\end{align*}
with fixed point $p \defeq (1-\lambda^r)^{-1} \left[ U_1 + \lambda U_{2} + \ldots + \lambda^{r-1} U_r \right]$.  We know
\begin{align*}
m_{s_i} - \lambda m_{s_{i+1}} \leq U_i
\end{align*}
which implies
\begin{align*}
\sum_{i=1}^r \lambda^{i-1} U_i \geq \sum_{i=1}^r \lambda^{i-1} \left( m_{s_i} - \lambda m_{s_{i+1}} \right) = \left( 1 - \lambda^r \right) m_{s_1}
\end{align*}
Thus $p \geq m_{s_1}$. Thus multiple circuits cause $v$ to tend towards $- \infty$ and thus it is left unenforceable.
\end{proof}

\subsection{Exact Formulas}

\noindent Equation (\ref{BIG}) implies the game graph can be decomposed into connected components with exactly one cycle each. For example, a cycle of of $n$ nodes $s_0, \ldots , s_n$, as in Figure \ref{DGRAM_cycle}, satisfy 
\begin{align}
\label{EQN_exact_cycle}
m_{s_i} = \lambda m_{s_{i+1}} +(1-\lambda)U(x_i,y_i;s_i)
\end{align}
for some $x_i$'s and $y_i$'s. This implies 
\begin{align}
m_{s_0} = \frac{\sum_{i=0}^{n-1} \lambda^i U( x_i,y_i;s_i) }{\sum_{i=0}^{n-1} \lambda^i}.
\end{align}
\noindent A connected component can have branches. Suppose the state $r_0$ is on a branch that connects to state $s_0$ of a cycle through the nodes $r_1 , \ldots , r_n$, as in Figure \ref{DGRAM_branch}. Once the extremal values of the cycle node $s_0$ is calculated, the corresponding values for the branch node $r_0$ is
\begin{align}
\label{EQN_exact_branch}
m_{r_0} = (1-\lambda)\sum_{i=0}^{n-1} \lambda^i U \left( x_i,y_i;r_i \right) + \lambda^n m_{s_0}.
\end{align}

\begin{figure}[h!]
    \begin{subfigure}{0.49\textwidth}
        \centering
        \begin{tikzpicture}[node distance={18mm}, thick,main/.style = {draw, circle}] 

\node[main] (1) {$s_0$}; 
\node[main] (2) [above right of=1] {$s_1$};
\node[main] (3) [below right of=2] {$s_2$}; 
\node[main] (4) [below of=3] {$s_3$};
\node[main] (5) [below of=1] {$s_n$};

\draw[->] (1) -- (2) node[midway,  above left ] (edge1) { $U_0$};
\draw[->] (2) -- (3) node[midway,  above right ] (edge2) { $U_1$};
\draw[->] (3) -- (4) node[midway,  right ] (edge3) { $U_2$};

\path (4) to coordinate[pos=0.2] (aux-1) coordinate[pos=0.7] (aux-2) (5);
\draw[->]  (4) -- (aux-1)
           (aux-2)-- (5);
\draw[dotted]    (aux-1) to  (aux-2);
\draw[->] (5) -- (1) node[midway,  left ] (edge4) { $U_n$};
\end{tikzpicture}
        \caption{A cycle. Here $U_i =  U(x_i,y_i;s_i)$.}
    \label{DGRAM_cycle}
    \end{subfigure}
    \begin{subfigure}{0.49\textwidth}
        \centering
        \begin{tikzpicture}[node distance={18mm}, thick,main/.style = {draw, circle}] 

\node[main] (1) {$r_0$}; 
\node[main] (2) [right of=1] {$r_1$};
\node[main] (3) [right of=2] {$r_n$}; 
\node[main] (4) [below right of=3] {$s_0$};

\draw[->] (1) -- (2) node[midway,  above] (edge1) { $U_0$};

\path (2) to coordinate[pos=0.2] (aux-1) coordinate[pos=0.7] (aux-2) (3);
\draw[->]  (2) -- (aux-1)
           (aux-2)-- (3);
\draw[dotted]    (aux-1) to  (aux-2);

\draw[->] (3) -- (4)node[midway,  above right] (edge1) { $U_n$};

\end{tikzpicture} 
        \caption{A branch. Here $U_i =  U(x_i,y_i;r_i)$.}
    \label{DGRAM_branch}
    \end{subfigure}
    \label{DGRAM_cycle_and_branch}
\end{figure}
\noindent Similar formulas hold for $M_s$.

\subsection{Examples}

\begin{exmp}
\label{EXMP_agnostic_pathological}
Figure \ref{DGRAM_patho_example} shows a game made up of a cycle of $n+1$ states. In state $0$ we have $\X = \{0,1 \}$ and $\Y= \{e \}$. For states $1$ to $n$ the reverse is true.
\begin{figure}[h!]
   \centering   
   \begin{tikzpicture}[node distance={20mm}, thick,main/.style = {draw, circle}] 

\node[main] (1) {$s_n$};
\node[main] (2) [right of=1] {$s_0$};
\node[main] (3) [below right of=2] {$s_{1}$};
\node[main] (4) [below left of=3] {$s_2$};
\node[main] (5) [left of=4] {$s_3$};

\draw[->] (1) to [bend left] node[pos=0.5, above] {$1_y$} (2);
\draw[->] (1) to [bend right] node[pos=0.5, below] {$0_y$} (2);

\draw[->] (2) to [bend left=20] node[pos=0.5, right] {$1_x$} (3);
\draw[->] (2) to [bend right=20] node[pos=0.5, left] {$0_x$} (3);

\draw[->] (3) to [bend left] node[pos=0.5, below right] {$1_y$} (4);
\draw[->] (3) to [bend right] node[pos=0.5, left] {$0_y$} (4);

\draw[->] (4) to [bend left] node[pos=0.5, below] {$1_y$} (5);
\draw[->] (4) to [bend right] node[pos=0.5, above] {$0_y$} (5);

\path (5) to [bend left=20] coordinate[pos=0.2] (aux-1) coordinate[pos=0.7] (aux-2) (1);
\draw[->]  (5) -- (aux-1)
           (aux-2)-- (1);
\draw[dotted]    (aux-1) [bend left=15] to  (aux-2);

\path (5) to [bend right=20] coordinate[pos=0.2] (aux-3) coordinate[pos=0.7] (aux-4) (1);
\draw[->]  (5) -- (aux-3)
           (aux-4)-- (1);
\draw[dotted]    (aux-3) [bend right=15] to  (aux-4);

\end{tikzpicture}
   \caption{A game with $n+1$ states whose autocratic strategies require $n$ memory.}
   \label{DGRAM_patho_example}
 \end{figure} 
For $0 \leq i \leq n$, define the $U_i$ and $V_i$ as
\begin{align*}
m_i = \max_y \{ \lambda m_{i-1} + U_{\lambda}(y;i) \} = \lambda m_{i-1} + \max_y U_{\lambda}(y;i) \defeq \lambda m_{i-1} + (1-\lambda)U_i 
\end{align*}
and
\begin{align*}
M_i = \min_y \{ \lambda M_{i-1} + U_{\lambda}(y;i) \} = \lambda M_{i-1} + \min_y U_{\lambda}(y;i) \defeq \lambda M_{i-1} + (1-\lambda)V_i .
\end{align*}
Similarly, define $U_0$ and $V_0$ as
\begin{align*}
m_0 = \min_x \{ \lambda m_{n} + U_{\lambda}(x;0) \} = \lambda m_{n} + \min_x U_{\lambda}(x;0) \defeq \lambda m_{n} + (1-\lambda)U_0 
\end{align*}
and
\begin{align*}
M_0 = \max_x \{ \lambda M_{n} + U_{\lambda}(x;0) \} = \lambda M_{n} + \max_x U_{\lambda}(x;0) \defeq \lambda M_{n} + (1-\lambda)V_0 .
\end{align*}
If we choose the $U_i$'s and $V_i$'s such that $U_0 < V_0$, $U_i > V_i$ for $1 \leq i \leq n$ and
\begin{align*}
U_0 + U_1 + \cdots + U_{n-1} + U_n < V_0 + V_1 + \cdots + V_{n-1} + V_n,
\end{align*}
then by using the equation (\ref{EQN_exact_cycle}) and it $M_s$ analogue we get
\begin{align*}
m_i = \frac{U_i+\lambda U_{i+1} + \cdots + \lambda^{n-1}U_{i-2} +\lambda^n U_{i-1}}{1+ \lambda + \cdots + \lambda^{n-1} + \lambda^n }
\end{align*}
and
\begin{align*}
M_i = \frac{V_i+\lambda V_{i+1} + \cdots + \lambda^{n-1}V_{i-2} +\lambda^n V_{i-1}}{1+ \lambda + \cdots + \lambda^{n-1} + \lambda^n} .
\end{align*}
Thus, for large $\lambda$ we have
\begin{align*}
m_i \approx \frac{1}{n+1} \left[U_0 + \ldots + U_n \right] <  \frac{1}{n+1} \left[V_0 + \ldots + V_n \right] \approx M_i \quad \forall \; 0 \leq i \leq n.
\end{align*}
\end{exmp}

\begin{exmp}
\label{2state_example}
Here is an example from \cite{Su25398}, a two state version of the donation game in which cooperating in state $H$ grants a larger benefit to one's opponent than in state $L$. Mutual cooperation in one round lets the players take part in the high benefit donation game. Otherwise the players play the low benefit game. See Table \ref{TABLE} and Figure \ref{DGRAM_2state_example}. Cooperation incurs the cost $c$ in both states.

\begin{table}[h!]
\centering
\subfloat[State H, larger benefit B]{
\begin{tabular}{cc|c|c|}
    & \multicolumn{1}{c}{} & \multicolumn{2}{c}{Opponent}\\
    & \multicolumn{1}{c}{} & \multicolumn{1}{c}{$1$}  & \multicolumn{1}{c}{$0$} \\\cline{3-4}
    \multirow{2}*{Autocrat}  & $1$ & $B-c,B-c$ & $-c,B$ \\\cline{3-4}
    & $0$ & $B,-c$ & $0,0$ \\\cline{3-4}
\end{tabular}
}
\qquad
\subfloat[State L, smaller benefit b]{
\begin{tabular}{cc|c|c|}
    & \multicolumn{1}{c}{} & \multicolumn{2}{c}{Opponent}\\
    & \multicolumn{1}{c}{} & \multicolumn{1}{c}{$1$}  & \multicolumn{1}{c}{$0$} \\\cline{3-4}
    \multirow{2}*{Autocrat}  & $1$ & $b-c,b-c$ & $-c,b$ \\\cline{3-4}
    & $0$ & $b,-c$ & $0,0$ \\\cline{3-4}
\end{tabular}
}
\caption{Payoff matrices of $2$-state donation game}
\label{TABLE}
\end{table}

As the number of states is small, the enforceable values can be solved directly. Both $m_H$ and $m_L$ must equal zero. If
\begin{align*}
\frac{c}{B-b}< \lambda < 1 ,
\end{align*}
then $M_L=b$ and $M_H = \lambda b + (1-\lambda)B$. This requires $ c < B-b$. Enforceable values for smaller $\lambda$ exist and in fact $M_H(\lambda)$ and $M_L(\lambda)$ have discontinuities at $\lambda = \frac{c}{B-b}$. In particular, if
\begin{align*}
  \frac{-b + \sqrt{b^2+4c(B-b)} }{2(B-b)} \leq \lambda \leq \frac{c}{B-b} 
\end{align*}
then $M_L = \lambda (B-c)  + (1-\lambda)(b-c)$ and $M_H=B-c$. This requires the additional constraint $B-2b \leq c$.

\begin{figure}[h!]
  \centering    
  \begin{tikzpicture}[node distance={22mm}, thick,main/.style = {draw, circle}]

 
 \clip (-2,-2) rectangle (5,2);

\node[main] (1) {$H$};
\node[main] (2) [right of=1]  {$L$};

\draw[->] (1) to [out=-135,in=135,looseness=8, right] node[pos=0.5] {$11$} (1);
\draw[->] (1) to [bend left =15, above] node[pos=0.5] {$00$} (2);
\draw[->] (1) to [bend left =45, above] node[pos=0.5] {$01$} (2) (2);
\draw[->] (1) to [bend left =90, above] node[pos=0.5] {$10$} (2) (2);
\draw[->] (2) to [out=35,in=-35,looseness=8, left] node[pos=0.5] {$00$} (2);
\draw[->] (2) to [out=50,in=-50,looseness=14, left] node[pos=0.5] {$01$} (2);
\draw[->] (2) to [out=60,in=-60,looseness=18, right] node[pos=0.5] {$10$} (2);
\draw[->] (2) to [bend left =35, above] node[pos=0.5] {$11$} (1);

\end{tikzpicture}
  \caption{2-state donation game from \cite{Su25398}}.
  \label{DGRAM_2state_example}
\end{figure}
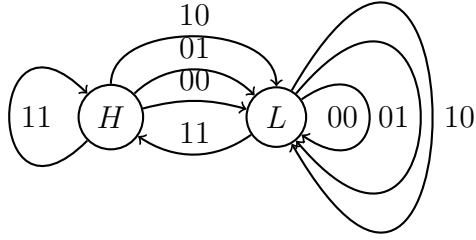

\end{exmp}

\section{The Algorithm}
\label{SECTION_algo}

We provide an algorithm for approximately solving the fixed point equations (\ref{BIG}). Once the extremal actions $x^+_s$ and $x^-_s$ are found (see (\ref{x_extremal_left}) and (\ref{x_extremal_right})), the exact formulas can be employed to check the validity of the results.
\begin{algo}
\label{ALGO}
\underline{Initialization}: $ \G(\SSS_0,\X_0 \times \Y,U,\lambda)$   \\

\underline{Loop}: 
\begin{enumerate}

\item 
Run the iterative process
\begin{align*}
m_{s,n+1} &= \min_{x \in \X_{i,s}} \max_{y \in \Y_{s}} \big\{ \lambda m_{T(x,y;s),n} + U_{\lambda} (x,y;s)  \big\} \\
M_{s,n+1} &= \max_{x \in \X_{i,s}} \min_{y \in \Y_{s}} \big\{ \lambda M_{T(x,y;s),n} + U_{\lambda} (x,y;s)  \big\}
\end{align*}
on the game graph $\G(\SSS_i,\X_i \times \Y,U,\lambda)$.

\item
For any edge $x \in \X_i$ that satisfies 
\begin{align*}
\max_{y \in \Y_s} \big\{ \lambda m_{T(x,y;s)} + U_{\lambda} (x,y;s)  \big\} \leq  \min_{y \in \Y_s} \big\{ \lambda M_{T \left( x,y;s \right)} + U_{\lambda} (x,y;s) \big\}, 
\end{align*} 
remove edges $(x,y)$ for each $y$ in $\Y_s$. If no extremal actions are removed in this manner then exit loop. Otherwise continue on to the graph pruning subloop.

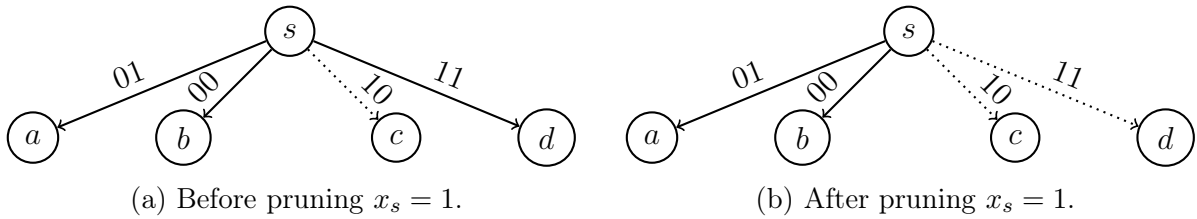
\begin{figure}[h!]
  \centering
 \begin{subfigure}[b]{0.49\textwidth}
   \centering   
   \begin{tikzpicture}[node distance={20mm}, thick,main/.style = {draw, circle}]

\tikzstyle{mystyle}=[draw,dashed,circle]; 

\node[main] (1) {$s$};
\node[main] (2) [below left  of=1 ] {$b$};
\node[main] (3) [left of=2] {$a$};
\node[main] (4) [below right of=1] {$c$};
\node[main] (5) [right of=4] {$d$};

\draw[->] (1) -- (2)  node[midway,  sloped , above left ] (edge1) { $00$} ;
\draw[->] (1) -- (3) node[midway,  sloped , above left ] (edge2) { $01$};
\draw[->,dotted] (1) -- (4) node[midway,  sloped , above right ] (edge3) { $10$} ;
\draw[->] (1) -- (5) node[midway,  sloped , above right] (edge4) { $11$};

\end{tikzpicture}
   \caption{Before pruning $x_s=1$.}
   \label{DGRAM_edge_prune_1}
 \end{subfigure}   
 \begin{subfigure}[b]{0.49\textwidth}
   \centering    
   \begin{tikzpicture}[node distance={20mm}, thick,main/.style = {draw, circle}]

\tikzstyle{mystyle}=[draw,dashed,circle]; 

\node[main] (1) {$s$};
\node[main] (2) [below left  of=1 ] {$b$};
\node[main] (3) [left of=2] {$a$};
\node[main] (4) [below right of=1] {$c$};
\node[main] (5) [right of=4] {$d$};

\draw[->] (1) -- (2)  node[midway,  sloped , above left ] (edge1) { $00$} ;
\draw[->] (1) -- (3) node[midway,  sloped , above left ] (edge2) { $01$};
\draw[->,dotted] (1) -- (4) node[midway,  sloped , above right ] (edge3) { $10$} ;
\draw[->,dotted] (1) -- (5) node[midway,  sloped , above right] (edge4) { $11$};

\end{tikzpicture}
   \caption{After pruning $x_s=1$.}
   \label{DGRAM_edge_prune_2}
 \end{subfigure}
 \caption{Edge pruning}
    \label{DGRAM_edge_pruning}
\end{figure}
 
\underline{Node Pruning Subloop}: 

\begin{enumerate}

\item
For any node $s \in \SSS_i$ such that $s$ has no outgoing edges, remove node $s$ and all incoming edges, as illustrated in Figure \ref{DGRAM_node_pruning}. If the subgraph is stable, exit the subloop.

\item
Prune the edges of $\G(\SSS_i,\X_i \times \Y,U,\lambda)$ (Definition \ref{prune}) as illustrated in Figure \ref{DGRAM_edge_pruning}. If the subgraph is stable, exit the subloop.
\end{enumerate}

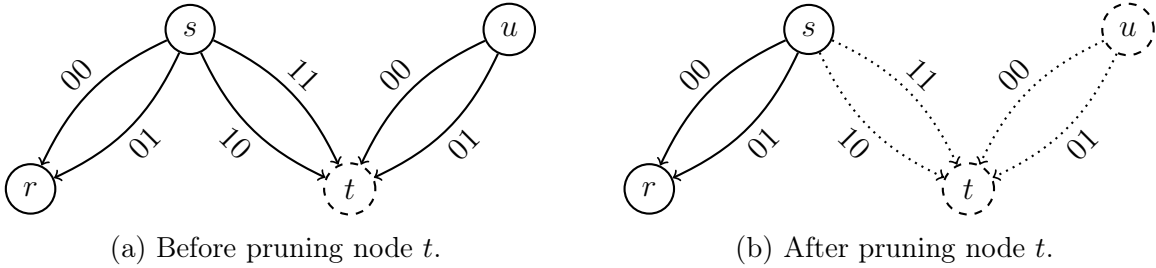
\begin{figure}[h!]
  \centering
 \begin{subfigure}[b]{0.49\textwidth}
   \centering   
   \begin{tikzpicture}[node distance={30mm}, thick,main/.style = {draw, circle}]

\tikzstyle{mystyle}=[draw,dashed,circle];

\node[main] (1) {$s$};
\node[main] (2) [below left of=1]  {$r$};
\node[mystyle] (3) [below right of=1]  {$t$};
\node[main] (4) [above right of=3]  {$u$};

\draw[->]  (1)  to [bend left=20,  below,sloped] node[pos=0.5] {$01$} (2) ;
\draw[->]  (1)  to [bend right=20,  above,sloped] node[pos=0.5] {$00$} (2) ;

\draw[->]  (1)  to [bend left=20,  above,sloped] node[pos=0.5] {$11$} (3) ;
\draw[->]  (1)  to [bend right=20, below,sloped] node[pos=0.5] {$10$} (3) ;

\draw[->]  (4)  to [bend left=20,  below,sloped] node[pos=0.5] {$01$} (3) ;
\draw[->]  (4)  to [bend right=20, above,sloped] node[pos=0.5] {$00$} (3) ;

\end{tikzpicture}
   \caption{Before pruning node $t$.}
   \label{DGRAM_node_prune_1}
 \end{subfigure}   
 \begin{subfigure}[b]{0.49\textwidth}
   \centering    
   \begin{tikzpicture}[node distance={30mm}, thick,main/.style = {draw, circle}]

\tikzstyle{mystyle}=[draw,dashed,circle];

\node[main] (1) {$s$};
\node[main] (2) [below left of=1]  {$r$};
\node[mystyle] (3) [below right of=1]  {$t$};
\node[mystyle] (4) [above right of=3]  {$u$};

\draw[->]  (1)  to [bend left=20,  below,sloped] node[pos=0.5] {$01$} (2) ;
\draw[->]  (1)  to [bend right=20,  above,sloped] node[pos=0.5] {$00$} (2) ;

\draw[->,dotted]  (1)  to [bend left=20,  above,sloped] node[pos=0.5] {$11$} (3) ;
\draw[->,dotted]  (1)  to [bend right=20, below,sloped] node[pos=0.5] {$10$} (3) ;

\draw[->,dotted]  (4)  to [bend left=20,  below,sloped] node[pos=0.5] {$01$} (3) ;
\draw[->,dotted]  (4)  to [bend right=20, above,sloped] node[pos=0.5] {$00$} (3) ;

\end{tikzpicture}
   \caption{After pruning node $t$.}
   \label{DGRAM_node_prune_2}
 \end{subfigure}
 \caption{Node pruning}
    \label{DGRAM_node_pruning}
\end{figure}
\end{enumerate}

\end{algo}

\noindent The global convergence of the iterative step follows from the monotonicity of the recursion.
\begin{lem}
The iterative processes
\begin{align*}
m_{s,n+1} \defeq \min_{x \in \X_s} \max_{y \in \Y_s} \big\{ \lambda m_{T(x,y;s),n} + U_{\lambda} (x,y,s)  \big\}
\end{align*}
and 
\begin{align*}
M_{s,n+1} \defeq \max_{x \in \X_s} \min_{y \in \Y_s} \big\{ \lambda M_{T(x,y;s),n} + U_{\lambda} (x,y,s)  \big\}
\end{align*}
are globally convergent.
\end{lem}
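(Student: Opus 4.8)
The plan is to recognize each recursion as the iteration of a single operator on a finite-dimensional space and to show that operator is a sup-norm contraction, after which global convergence is immediate from the Banach fixed point theorem. Concretely, since the game is finite, collect the state values into a vector and define the operator $F : \R^{\SSS} \to \R^{\SSS}$ by
\begin{align*}
(F(v))_s \defeq \min_{x \in \X_s} \max_{y \in \Y_s} \big\{ \lambda v_{T(x,y;s)} + U_{\lambda}(x,y;s) \big\},
\end{align*}
so that the $m$-recursion is exactly $m_{\cdot,n+1} = F(m_{\cdot,n})$. The analogous operator $G$, obtained by swapping to $\max_{x}\min_{y}$, governs the $M$-recursion. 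Because $\SSS$ is finite the space $\R^{\SSS}$ equipped with $\|\cdot\|_{\infty}$ is complete, so it suffices to prove that $F$ and $G$ are contractions with modulus $\lambda$.

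The key step is the contraction estimate. First I would fix a triple $(x,y,s)$ and note that the affine map $v \mapsto \lambda v_{T(x,y;s)} + U_{\lambda}(x,y;s)$ satisfies $|(\lambda v_{T(x,y;s)} + U_{\lambda}) - (\lambda w_{T(x,y;s)} + U_{\lambda})| = \lambda\,|v_{T(x,y;s)} - w_{T(x,y;s)}| \leq \lambda \|v - w\|_{\infty}$, since the additive constant $U_{\lambda}(x,y;s)$ cancels. Then I would invoke the elementary fact that both $\max$ and $\min$ over a common finite index set are nonexpansive: if $|f(z) - g(z)| \leq \epsilon$ for all $z$, then $|\max_z f - \max_z g| \leq \epsilon$ and $|\min_z f - \min_z g| \leq \epsilon$. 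Applying this first to the inner $\max_{y}$ and then to the outer $\min_{x}$ propagates the bound $\lambda\|v-w\|_{\infty}$ through both layers, yielding $|(F(v))_s - (F(w))_s| \leq \lambda \|v-w\|_{\infty}$ for every $s$, hence $\|F(v)-F(w)\|_{\infty} \leq \lambda\|v-w\|_{\infty}$. The identical chain of estimates applies to $G$.

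With the contraction in hand, the Banach fixed point theorem supplies a unique fixed point together with convergence $m_{\cdot,n} \to m$ from every initialization, which is precisely the asserted global convergence; as a bonus it re-derives the uniqueness recorded in Theorem \ref{THM_uniqueness}. The only real obstacle is the bookkeeping in the contraction estimate: one must verify that nonexpansiveness survives the nesting of $\min$ and $\max$, and that the bound holds uniformly over all states $s$ simultaneously so that the sup norm on the left is controlled. I note that this is exactly the situation covered by Blackwell's sufficient conditions for discounted dynamic programming — monotonicity of $F$ together with the translation identity $F(v + c\mathbf{1}) = F(v) + \lambda c\,\mathbf{1}$ — so the conclusion can alternatively be phrased as a consequence of the monotonicity already established for these recursions, as the surrounding text suggests.
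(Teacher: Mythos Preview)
Your contraction-mapping argument is correct, but it is not the route the paper takes. The paper's proof is a squeeze argument: it uses the two special sequences already analyzed in Section~\ref{SECTION_uniqueness}, namely $(m_{s,n})$ initialized at $m_0$ (shown non-decreasing) and $(\tilde{m}_{s,n})$ initialized at $M_0$ (shown non-increasing), together with Theorem~\ref{THM_uniqueness} establishing that both converge to the same fixed point. For an arbitrary initialization $m'_{s,0}\in[m_0,M_0]$, monotonicity of $\min$ and $\max$ in each argument gives $m_{s,n}\leq m'_{s,n}\leq \tilde{m}_{s,n}$ for all $n$, and the sandwich forces $m'_{s,n}\to m_s$.

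What each approach buys: the paper's proof is essentially free given the machinery of Section~\ref{SECTION_uniqueness}, but it relies on that entire chain of lemmas (monotonicity, boundedness, and the uniqueness theorem) and, as written, only treats initializations lying in $[m_0,M_0]^{\SSS}$. Your Banach argument is self-contained, handles arbitrary initial data in $\R^{\SSS}$, and recovers uniqueness as a corollary rather than assuming it; it is also the same estimate the paper later uses implicitly in the runtime lemma, so you are in good company. The trade-off is that the paper's squeeze argument connects the global-convergence claim directly to the enforceable/excludable interpretation developed earlier, whereas your proof is purely analytic.
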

\begin{proof}
Suppose we initialize the process with $m'_{s,0} \in [m_0,M_0]$ for all $s$. Since the maximum and minimum functions are monotonically non-decreasing in each variable, we have
\begin{align*}
m_{s,1} \leq m'_{s,1} \leq \tilde{m}_{s,1} \quad \mbox{and} \quad  \tilde{M}_{s,1} \leq M'_{s,1} \leq M_{s,1}
\end{align*}
for all $s$ in $\SSS$. This is then true for all iterations, so taking the limit gives us  $m'_s = m_s$ and $M'_s = M_s$.
\end{proof}

After running the iterative process we prune the graph of any nodes where $m_s > M_s$. The iterative process is then rerun on the resulting subgraph. The following lemma shows how the extremal points of the subgraph compare to the initial graph.
\begin{lem}
If $\X_s' \subset X_s$ then $m_s \leq m'_s$ and $M'_s \leq M_s$. 
\end{lem}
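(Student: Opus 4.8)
The plan is to reduce the comparison of the two fixed points to a comparison of the two iterative processes that produce them, running those processes in lockstep from a common initialization. Concretely, I would define the sequences $(m_{s,n})_{n\ge 0}$ and $(m'_{s,n})_{n\ge 0}$ by the recursion (\ref{m_recursion}) on the action sets $\X_s$ and $\X'_s$ respectively, both initialized with $m_{s,0}=m'_{s,0}=m_0$ for every $s$. By the global convergence lemma established just above, these converge to the solutions $m_s$ and $m'_s$ of the fixed point equation (\ref{EQN_left_FPE}) irrespective of the (common) starting value, so it suffices to prove $m_{s,n}\le m'_{s,n}$ for all $n$ and all $s$ and then pass to the limit. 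The argument for $M$ is the mirror image: initialize both $M$-processes at $M_{s,0}=M'_{s,0}=M_0$ via (\ref{M_recursion}) and show $M_{s,n}\ge M'_{s,n}$ for all $n$.

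The core is a single induction on $n$, whose step rests on two monotonicity observations. First, since $\lambda\ge 0$, the inductive hypothesis $m_{T(x,y;s),n}\le m'_{T(x,y;s),n}$ at every child state gives $\lambda m_{T(x,y;s),n}+U_{\lambda}(x,y;s)\le \lambda m'_{T(x,y;s),n}+U_{\lambda}(x,y;s)$ for each joint action $(x,y)$, and taking $\max_{y\in\Y_s}$ preserves this. Second, because $\X'_s\subset\X_s$, minimizing over the smaller set can only increase the value:
\begin{align*}
m_{s,n+1}=\min_{x\in\X_s}\max_{y\in\Y_s}\{\lambda m_{T(x,y;s),n}+U_{\lambda}(x,y;s)\}\le \min_{x\in\X'_s}\max_{y\in\Y_s}\{\lambda m_{T(x,y;s),n}+U_{\lambda}(x,y;s)\}.
\end{align*}
Chaining the two bounds yields $m_{s,n+1}\le \min_{x\in\X'_s}\max_{y\in\Y_s}\{\lambda m'_{T(x,y;s),n}+U_{\lambda}(x,y;s)\}=m'_{s,n+1}$, closing the induction; the base case is the equality $m_{s,0}=m'_{s,0}$. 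For $M$ the same two observations apply with the inequalities reversed: $\min_{y\in\Y_s}$ remains order-preserving, and restricting the outer $\max$ to $\X'_s\subset\X_s$ can only decrease it, so $M_{s,n+1}\ge M'_{s,n+1}$.

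One point requiring care is that the primed recursion be well-defined, which needs each $\X'_s$ nonempty so that the inner $\min$/$\max$ over $\X'_s$ is taken over a nonempty set and the limits $m'_s,M'_s$ exist; I would state this as a standing hypothesis (the restricted graph is a legitimately pruned subgraph on the same state set $\SSS$, cf. Definition \ref{prune}). I do not anticipate a genuine obstacle beyond this: the only real subtlety is that the fixed point values are coupled across states through the transition $T$, which is precisely why a one-shot comparison of the two fixed point equations is insufficient, and why threading the inequality through the convergent iteration — where at each stage the child values are already controlled by the inductive hypothesis — is the correct device.
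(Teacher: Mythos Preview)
Your proposal is correct and follows essentially the same approach as the paper: both run the two iterative processes from the common initialization $m_{s,0}=m'_{s,0}=m_0$, use the containment $\X'_s\subset\X_s$ together with monotonicity to propagate $m_{s,n}\le m'_{s,n}$ inductively, and then pass to the limit. Your write-up is in fact more explicit than the paper's, which simply displays the $n=1$ inequality and then says ``iterating gives us $m_{s,n}\le m'_{s,n}$''.
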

\begin{proof}
We have
\begin{align*}
m_{s,1}= \min_{x \in X_s} \max_{y \in \Y_s} \big\{ \lambda m_{s,0} + U_{\lambda} (x,y,s)  \big\} \leq \min_{x \in X'_s} \max_{y \in \Y_s} \big\{ \lambda m_{s,0} + U_{\lambda} (x,y,s)  \big\} = m'_{s,1}.
\end{align*}
Iterating gives us $m_{s,n} \leq m'_{s,n}$ for all $s$ in $\SSS$ and $n \geq 0$. Taking the limit completes the proof. 
\end{proof}
\noindent Finally, we can estimate the worst case runtime of the globally convergent step.
\begin{lem}
For a tolerance of $\delta$, the iterative processes (\ref{m_recursion}) and (\ref{M_recursion}) have big $O$ of
\begin{align*}
 \max_s \{ |\X_s||\Y_s| \} \left| \SSS \right| \frac{  \ln \delta }{\ln \lambda}.
\end{align*}
\end{lem}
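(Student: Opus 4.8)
The plan is to factor the total cost of running the iterative process to tolerance $\delta$ as the product of the cost of a single sweep of the recursion and the number of sweeps required for convergence. First I would bound the per-iteration cost. Computing $m_{s,n+1}$ from the values $(m_{s',n})_{s' \in \SSS}$ requires, for the fixed state $s$, evaluating $\lambda m_{T(x,y;s),n} + U_{\lambda}(x,y;s)$ once for each joint action $(x,y) \in \X_s \times \Y_s$, then taking a maximum over $y$ followed by a minimum over $x$. This is $O(|\X_s||\Y_s|)$ arithmetic operations. Summing over all states, one full sweep costs
\begin{align*}
O\Big( \sum_{s \in \SSS} |\X_s||\Y_s| \Big) = O\Big( |\SSS| \max_{s}\{ |\X_s||\Y_s|\} \Big),
\end{align*}
and the identical count holds for the $M$ recursion.

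Next I would bound the number of sweeps. Let $F : \R^{\SSS} \to \R^{\SSS}$ denote the update map whose $s$-th component is $F(m)_s = \min_{x \in \X_s} \max_{y \in \Y_s}\{ \lambda m_{T(x,y;s)} + U_{\lambda}(x,y;s)\}$, so that $m_{\cdot,n+1} = F(m_{\cdot,n})$. The central step is to show $F$ is a $\lambda$-contraction in the sup-norm $\|v\|_{\infty} = \max_{s} |v_s|$. This rests on the elementary non-expansiveness of the extremizers: for real functions $f,g$ on a common finite set, $|\max f - \max g| \leq \max |f-g|$, and likewise for $\min$. Applying the max inequality for each fixed $x$ and then the min inequality over $x$ gives
\begin{align*}
|F(m)_s - F(m')_s| \leq \lambda \max_{(x,y)} |m_{T(x,y;s)} - m'_{T(x,y;s)}| \leq \lambda \| m - m'\|_{\infty},
\end{align*}
and maximizing over $s$ yields $\|F(m)-F(m')\|_{\infty} \leq \lambda \|m-m'\|_{\infty}$. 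Since the limit $m = (m_s)_{s \in \SSS}$ is the fixed point of $F$ and the process is initialized at $m_{s,0}=m_0$ with $m_s \in [m_0,M_0]$, we have $\|m_{\cdot,0} - m\|_{\infty} \leq M_0 - m_0$, whence $\|m_{\cdot,n} - m\|_{\infty} \leq \lambda^n (M_0-m_0)$.

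To reach tolerance $\delta$ it therefore suffices that $\lambda^n(M_0-m_0) \leq \delta$; taking logarithms and dividing by $\ln \lambda < 0$ shows this holds once $n \geq (\ln \delta - \ln(M_0 - m_0))/\ln \lambda$. As the constant term $\ln(M_0-m_0)$ is lower order as $\delta \to 0^{+}$ (and both $\ln \delta$ and $\ln \lambda$ are negative, so the ratio is positive), the number of required sweeps is $O(\ln \delta / \ln \lambda)$. Multiplying the per-sweep cost by the number of sweeps gives the claimed bound $O\big(\max_s\{|\X_s||\Y_s|\}\,|\SSS|\,\ln\delta/\ln\lambda\big)$, with the $M$ recursion handled identically. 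The main obstacle is the contraction estimate: one must verify carefully that the nested $\min$–$\max$ inherits the single factor $\lambda$ rather than accumulating it across the two extremizations, and keep track of signs in the logarithmic iteration count since both $\delta$ and $\lambda$ lie in $(0,1)$.
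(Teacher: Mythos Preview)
Your proposal is correct and follows essentially the same approach as the paper: both establish that the update map contracts the sup-norm error by a factor of $\lambda$ per iteration, deduce the geometric bound $\|m_{\cdot,n}-m\|_\infty \leq \lambda^n(M_0-m_0)$, solve $\lambda^n(M_0-m_0)\leq\delta$ for $n$, and multiply by the $O(|\SSS|\max_s|\X_s||\Y_s|)$ cost of a single sweep. Your presentation via the non-expansiveness of $\max$ and $\min$ is slightly more explicit than the paper's direct error-tracking computation, but the underlying argument is the same.
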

\begin{proof}
Fix an initialization $(m'_{s,0})_{s \in \SSS}$. Write each as $m'_{s,0} = m_s + \epsilon_{s,0}$. Then
\begin{align*}
m'_{s,1}= \min_{x \in X_s} \max_{y \in \Y_s} \big\{ \lambda (m_{s}+\epsilon_{s,0})  + U_{\lambda} (x,y,s)  \big\} \in  \lambda \left[ \min_s \epsilon_{s,0} , \max_s  \epsilon_{s,0} \right] + m_s.
\end{align*}
That is, $|m_s - m'_{s,1}| < \lambda |M_0 - m_0|$. Iterating this, we get
\begin{align*}
|m_s - m'_{s,n}| < \lambda^n |M_0 - m_0|
\end{align*}
for all $n \geq 0$. This means that with each iteration, the sequence gets closer to the fixed point by a factor of $\lambda$. From here we can calculate how many steps it takes to get within a tolerance $\delta$ of the true solution. That is,
\begin{align*}
\lambda^n |M_0 - m_0| < \delta  \quad \implies \quad n > \frac{1}{\ln \lambda} \ln \frac{\delta}{M_0 -m_0}.
\end{align*}
Noting that calculating the minimax takes $|\X_s||\Y_s|$ finishes the proof.
\end{proof}

\section*{Acknowledgements}
Thanks to Alex McAvoy and Christoph Hauert for their help in guiding me to this problem. For useful discussions and insight, thanks to Marcin P\k{e}ski and Adam Stinchecombe. Finally I am grateful to Almut Burchard for her advice, encouragement and unwavering support.

\bibliography{BIB}
\bibliographystyle{plain}

\end{document}